\newtheorem{thm}{Theorem}
\newtheorem{prop}{Proposition}
\newtheorem{remark}{Remark}
\newtheorem{lemma}{Lemma}
\newcommand{\bbr}{{\mathbb R}}
\newcommand{\bbs}{{\mathbb S}}
\newcommand{\p}{\hat{p}}
\newcommand{\q}{\hat{q}}
\begin{document}


\title{Future global existence of homogeneous solutions to the Einstein-Boltzmann system with soft potentials}

\author[1]{Ho Lee\footnote{holee@khu.ac.kr}}
\author[2]{Ernesto Nungesser\footnote{em.nungesser@upm.es}}
\affil[1]{Department of Mathematics and Research Institute for Basic Science, College of Sciences, Kyung Hee University, Seoul 02447, Republic of Korea}
\affil[2]{M2ASAI, Universidad Polit\'{e}cnica de Madrid, ETSI Navales, Avda.~de la Memoria, 4, Madrid 28040, Spain}

\maketitle

\begin{abstract}
We study the Cauchy problem for the Einstein-Boltzmann system with soft potentials in a cosmological setting. We assume the Bianchi I symmetry to describe a spatially homogeneous, but anisotropic universe and consider a cosmological constant $ \Lambda > 0 $ to describe an accelerated expansion of the universe. For the Boltzmann equation we introduce a new weight function and apply the method of Illner and Shinbrot to obtain the future global existence of spatially homogeneous, small solutions. For the Einstein equations we assume that the initial value of the Hubble variable is close to $ ( \Lambda / 3 )^{ 1 / 2 } $. We obtain the future global existence and asymptotic behavior of spatially homogeneous solutions to the Einstein-Boltzmann system with soft potentials.
\end{abstract}


%
\section{Introduction}

The standard model in cosmology is based on the Friedmann-Lema{\^i}tre-Robertson-Walker (FLRW) model of the Universe. These solutions were discovered soon after the theory of general relativity was established. However, it was not until relatively recently that the stability of these solutions was proven in the presence of a cosmological constant. It was first shown in the vacuum case and in the presence of an electromagnetic or Yang-Mills fields \cite{HF1, HF2}, and much more recently in the presence of a fluid \cite{RodSpeck, Speck} and a collisionless gas \cite{Ringstrom}. The case concerning a gas with collisions still remains open.  

Clearly the most common assumption in cosmology is to model the universe as a fluid with a linear equation of state, but there are situations where it is more appropriate to model the universe as a gas. In particular there are some indications that a kinetic picture might be a better description for cosmic large-scale structure formation \cite{Kozlikin}. In this paper, the universe will be modeled as a gas with collisions. In the present universe collisions seem to be rare, but they were very common in the early universe, for instance in the epoch of recombination, and have decreased as the universe expanded. Collisions will probably be vanishing towards the future, and this will be studied in a mathematically rigorous way by considering the Boltzmann equation. For more about the physical motivation, we refer to Chapter 6 of \cite{Huterer}, where the Boltzmann equation is used to understand baryogenesis, and to \cite{EB}, where the linear Einstein-Boltzmann equations are considered to describe the evolution of perturbations in the universe. In this paper, we will study small solutions to the Boltzmann equation, which are consistent with rare collisions in the present universe, in a spatially homogeneous setting, more specifically for Bianchi I spacetimes.

Bianchi I spacetimes are of interest in their own right, since they are a generalization of the spatially flat FLRW model.  In the vacuum case one can solve the Einstein equations with Bianchi I symmetry to obtain the so-called Kasner solutions which play an important role when considering the asymptotic behavior towards the direction of the initial singularity \cite{FRS}. Bianchi I spacetimes are also important when modeling the present universe and might solve discrepancies related to key cosmological parameters \cite{Akarsuetal}.

Concerning homogeneous spacetimes and a cosmological constant an important result was obtained by Robert Wald \cite{Wald83} by just assuming certain energy conditions. The qualitative asymptotic behavior was obtained. However in general to prove the global existence of solutions and to obtain a more detailed asymptotic behavior one will need more than energy conditions, and this was achieved for instance by Hayoung Lee in the Vlasov case \cite{HL}. The Boltzmann case is much more complicated since one needs to handle the collision integral, which is highly non-trivial even in the Newtonian case. For instance, the Boltzmann case was considered in \cite{LLN23, LN172}, but the results are only available in the case of Israel particles. The scattering cross-section for Israel particles was first introduced in \cite{I63} as a special, mathematically tractable form of cross-section in order to explicitly evaluate some thermodynamic quantities. However, in the real universe, there are many different kinds of particles and interactions between them, so it is necessary to extend the results to more general cases of scattering cross-sections. In this paper, we will obtain the global existence and asymptotic behavior of small solutions to the Boltzmann equation with scattering cross-sections in a broad range of soft potentials.

In this paper we study the Einstein-Boltzmann system. By the Boltzmann equation matter is described as a collection of particles, where the particles interact with each other through binary collisions. There is a rough classification into hard and soft potentials for the scattering cross-sections. We will consider the soft case. Our main interest is to study the Boltzmann equation in an expanding universe. The particles will be receding from each other, and the collision frequency will be decreasing as time evolves. Hence, we expect that solutions will be small in the sense that they are close to vacuum, and will apply the argument of Illner and Shinbrot \cite{IS84} to obtain the global existence of small solutions. However, the argument of \cite{IS84} does not directly apply to the Bianchi case. First, in a Bianchi cosmology the universe is assumed to be spatially homogeneous. Hence, we cannot expect the dispersion of solutions. For instance, the main idea of \cite{IS84} is to use the following symmetry:
\[
| x + t p' |^2 + | x + t q' |^2 = | x + t p |^2 + | x + t q |^2 ,
\]
but this does not apply in a spatially homogeneous case (see also \cite{Glassey} for more details, and \cite{Glassey06, Strain10} for the relativistic case). Second, we thus have to use a weight function which depends only on $ t $ and $ p $ to estimate the collision integral. We may consider the usual weight function, given by
\[
\mu = e^{ p^0 } ,
\]
to obtain the integrability on $ \bbr^3_p $. However, in a cosmological case the particle energy $ p^0 $ is in general decreasing in time, so the integrability is not uniform in time, and the estimates of the collision integral might not be uniform in time. Hence, we need to find a new weight function. In this paper we will introduce a new weight function which works nicely in the cosmological case.

Let us briefly introduce the main idea of this paper. We will follow the argument of \cite{IS84} for the Boltzmann equation. Once we have an appropriate weight function, the global existence will be obtained by assuming small initial data in the sense that the distribution function is small in a suitable weighted norm and that the Hubble variable is close to its vacuum value $ ( \Lambda / 3 )^{ 1 / 2 } $. Let $ w $ be a weight function. First, we require that $ w $ satisfy the following symmetry:
\[
w ( t , p' ) w ( t , q' ) = w ( t , p ) w ( t , q ) ,
\]
where $ p' $ and $ q' $ are the post-collision momenta for given $ p $ and $ q $. Then, we multiply the Boltzmann equation (see \eqref{Boltzmann}--\eqref{collision}) by $ w $ to obtain the following estimate:
\[
\frac{ \partial ( w f ) }{ \partial t } - \frac{ \partial w }{ \partial t } f \leq ( \det g )^{ - \frac12 } \| wf \|^2 \int_{ \bbr^3 } \cdots \, w^{ - 1 } ( q ) \, d q ,
\]
where $ \| \cdot \| $ is a suitable $ L^\infty $-norm. Secondly, we require that $ w ^{ - 1 } $ be integrable on $ \bbr^3 $. For instance, the $ \mu $ above satisfies this condition. However, $ q^0 $ is decreasing in time, so $ \mu^{ - 1 } ( q ) $ is increasing in time, and the right hand side might not be integrable in time. Indeed, we have
\[
\int_{ \bbr^3 } \mu^{ - 1 } ( q ) \, d q \approx ( \det g )^{ \frac12 } ,
\]
so we cannot make use of the decreasing factor $ ( \det g )^{ - \frac12 } $. Hence, thirdly, we require that $ w $ be increasing, or at least decreasing slowly in time. Finally, we have to control the second term on the left hand side. The easiest way to handle it is to require that $ w $ be decreasing in time. Then, the second term on the left hand side will be non-negative for $ f \geq 0 $, and $ w f $ will be estimated globally in time. Now, we have to find a weight function $ w $ satisfying these four conditions. Such a weight function $ w $ will be introduced in this paper, and this will be the main ingredient of the present work.

In this paper we will obtain the global existence of spatially homogeneous solutions to the Einstein-Boltzmann system with soft potentials by using the new weight function. In \cite{L13}, one of the authors of the present paper studied the Boltzmann equation in the FLRW case, but used a weight function which does not satisfy the first condition above, so a certain technical assumption on the scattering cross-section, which is not physically well-motivated, had to be imposed. The argument was applied to the Bianchi I case in \cite{LN171} by using a similar weight function. In \cite{LN172}, we could remove the technical assumption of \cite{L13} by using a different weight function. However, the weight function does not satisfy the third condition, and we could only obtain the existence in a special case of scattering cross-sections, called the Israel particles. The argument of \cite{LN172} has recently been extended to the Bianchi cases in \cite{LLN23}. In the work of the present paper, we have achieved significant improvements compared to our previous results in the sense that the unnatural assumptions in \cite{L13, LN171} on the scattering cross-section have been removed, and the restriction to Israel particles in \cite{LLN23, LN172} has been generalized to apply to a broad range of soft potentials. These improvements were made possible by using our new weight function. We refer to \cite{BCB73, ND06, NT06} for other results with different assumptions on the scattering cross-sections.

The paper is organized as follows. In Section \ref{sec EB}, we introduce the Einstein-Boltzmann system with Bianchi I symmetry. An explicit representation of post-collision momenta will be given in Section \ref{sec post collision}. The main result of this paper is given in Section \ref{sec M}. Our weight function will be introduced in Section \ref{sec wp}, and the main theorem will be given in Theorem \ref{main} in Section \ref{sec main}. Some preliminaries will be collected in Section \ref{sec prelim}. We first collect basic lemmas in Section \ref{sec basic}. In Section \ref{sec flrw}, we will consider the single Boltzmann equation in a given FLRW spacetime in order to show the main idea of this paper. This will be extended to the Bianchi I case in the following sections. The local existence will be obtained in Proposition \ref{prop local} in Section \ref{sec local}. In Section \ref{sec global}, we prove Theorem \ref{main} and obtain the global existence and the asymptotic behavior.

\subsection{Einstein-Boltzmann system with Bianchi I symmetry}\label{sec EB}
In this part we introduce the main equations of the paper. We assume the Bianchi I symmetry, in which case there exists a suitable frame $ {\bf W }^a $ such that the metric can be written as 
\begin{align*}
^4 g = - d t^2 + g_{ a b } { \bf W }^a { \bf W }^b ,
\end{align*}
where $ g_{ a b } ( t ) $ is a $ 3 \times 3 $ symmetric and positive definite matrix. Here and throughout the paper we assume the Einstein summation convention, where Greek indices run from $ 0 $ to $ 3 $, while Latin indices run from $ 1 $ to $ 3 $. The Einstein-Boltzmann system is given by 
\begin{align}
& \frac{ d g_{ a b } }{ d t } = 2 k_{ a b } , \label{Einstein1} \\ 
& \frac{ d k_{ a b } }{ d t } = 2 k_a^c k_{ b c } - k k_{ a b } + S_{ a b } + \frac12 ( \rho - S ) g_{ a b } + \Lambda g_{ a b } , \label{Einstein2} \\ 
& \frac{ \partial f }{ \partial t } = Q ( f , f ) , \label{Boltzmann} \\ 
& Q ( f , f ) = ( \det g )^{ - \frac12 } \int_{ \bbr^3 } \int_{ \bbs^2 } \frac{ h \sqrt{ s } }{ p^0 q^0 } \sigma ( h , \omega ) ( f ( p' ) f ( q' ) - f ( p ) f ( q ) ) \, d \omega \, d q , \label{collision} 
\end{align} 
where $ k_{ a b } ( t ) $ is the second fundamental form, and $ \Lambda > 0 $ is the cosmological constant. The matter terms $ S_{ a b } $, $ \rho $ and $ S $ are induced by the distribution function $ f ( t , p ) $ such that $ S_{ a b } = T_{ a b } $, $ \rho = T_{ 0 0 } $ and $ S = g^{ a b } S_{ a b } $, where $ T_{ \alpha \beta } $ is the stress-energy-momentum tensor defined by
\begin{align}
T_{ \alpha \beta } = ( \det g )^{ - \frac12 } \int_{ \bbr^3 } f ( t , p ) \frac{ p_\alpha p_\beta }{ p^0 } \, d p . \label{stress}
\end{align}
Here, $ \det g $ is the determinant of the $ 3 \times 3 $ matrix $ g_{ a b } $. In the Bianchi I case the constraint equation is given by 
\begin{align}
- k_{ a b } k^{ a b } + k^2 & = 2 \rho + 2 \Lambda . \label{constraint}
\end{align}
Momentum variables in the Boltzmann equation \eqref{Boltzmann} with \eqref{collision} will be given with lower indices, i.e.~
\begin{align*}
p = ( p_1 , p_2 , p_3 ) \in \bbr^3 . 
\end{align*}
Then, we assume the unit mass shell condition $ g^{ a b } p_a p_b = - 1 $, which derives $ p^0 $ as 
\begin{align}
p^0 = \sqrt{ 1 + g^{ a b } p_a p_b } . \label{p^0}
\end{align}
The quantities $ h $ and $ s $ are the relative momentum and the square of the energy in the center of momentum system, which are given by 
\begin{align}
h & =\sqrt{ - ( p^0 - q^0 )^2 + g^{ a b } ( p_a - q_a ) ( p_b - q_b ) } , \label{h} \\
s & = ( p^0 + q^0 )^2 - g^{ a b } ( p_a + q_a ) ( p_b + q_b ) . \label{s}
\end{align}
The assumption on the scattering cross-section $ \sigma ( h , \omega ) $ will be given in Section \ref{sec wp}, and the parametrization of post-collision momenta $ p' $ and $ q' $ will be introduced in Section \ref{sec post collision}.

\subsubsection{Parametrization of post-collision momenta} \label{sec post collision}
In the special relativistic case the parametrization of post-collision momenta has been investigated by Glassey and Strauss in \cite{GS91} and by Strain in \cite{Strain10}. The following is a different type of parametrization of post-collision momenta, which will be used in this paper, and can be obtained by combining the arguments of \cite{GS91} and \cite{Strain10}: for $ { \hat p } , { \hat q } \in \bbr^3 $ and $ \omega \in \bbs^2 $,
\begin{align}
p'^0 & = p^0 + 2 \bigg( - q^0 \frac{ { \hat n } \cdot \omega }{ \sqrt{ s } } + { \hat q } \cdot \omega + \frac{ ( { \hat n } \cdot \omega ) ( { \hat n } \cdot { \hat q } ) }{ \sqrt{ s } ( n^0 + \sqrt{ s } ) } \bigg) \frac{ { \hat n } \cdot \omega }{ \sqrt{ s } }, \label{phat'^0} \\
q'^0 & = q^0 - 2 \bigg( - q^0 \frac{ { \hat n } \cdot \omega }{ \sqrt{ s } } + { \hat q } \cdot \omega + \frac{ ( { \hat n } \cdot \omega ) ( { \hat n } \cdot { \hat q } ) }{ \sqrt{ s } ( n^0 + \sqrt{ s } ) } \bigg) \frac{ { \hat n } \cdot \omega }{ \sqrt{ s } }, \label{qhat'^0}
\end{align}
and
\begin{align}
{ \hat p }' & = { \hat p } + 2 \bigg( - q^0 \frac{ { \hat n } \cdot \omega }{ \sqrt{ s } } + { \hat q } \cdot \omega + \frac{ ( { \hat n } \cdot \omega ) ( { \hat n } \cdot { \hat q } ) }{ \sqrt{ s } ( n^0 + \sqrt{ s } ) } \bigg) \bigg( \omega + \frac{ ( { \hat n } \cdot \omega ) { \hat n } }{ \sqrt{ s } ( n^0 + \sqrt{ s } ) } \bigg), \label{phat'} \\
{ \hat q }' & = { \hat q } - 2 \bigg( - q^0 \frac{ { \hat n } \cdot \omega }{ \sqrt{ s } } + { \hat q } \cdot \omega + \frac{ ( { \hat n } \cdot \omega ) ( { \hat n } \cdot { \hat q } ) }{ \sqrt{ s } ( n^0 + \sqrt{ s } ) } \bigg) \bigg( \omega + \frac{ ( { \hat n } \cdot \omega ) { \hat n } }{ \sqrt{ s } ( n^0 + \sqrt{ s } ) } \bigg) . \label{qhat'}
\end{align}
Here, we assume the Minkowski metric, so that $ p^0 $ and $ q^0 $ are given by
\begin{align}
p^0 = \sqrt{ 1 + | { \hat p } |^2 }, \qquad q^0 = \sqrt{ 1 + | { \hat q } |^2 }, \label{phat^0}
\end{align}
where $ | { \hat p } | = \sqrt{ \sum_{ i = 1 }^3 ( { \hat p }_i )^2 } $, and $ { \hat n } $, $ n^0 $ and $ \sqrt{ s } $ are given by
\begin{align}
{ \hat n } = { \hat p } + { \hat q } , \qquad n^0 = p^0 + q^0 , \qquad s = ( p^0 + q^0 )^2 - | { \hat p } + { \hat q } |^2 = h^2 + 4 . \label{shat}
\end{align}
We note that $ h $ reduces to
\begin{align}
h & = \sqrt{ - ( p^0 - q^0 )^2 + | { \hat p } - { \hat q } |^2 } . \label{hhat}
\end{align}
The parametrization of \eqref{phat'^0}--\eqref{qhat'} was first introduced in \cite{LN172}. One advantage of this parametrization is that the derivative $ \partial_{ \hat p } { \hat p }' $ does not have growth in $ { \hat p } $ nor a singularity at $ { \hat p } = { \hat q } $. For instance, see Lemma 5 of \cite{LLN23} or Lemma \ref{lem dp'dp} of the present paper. We refer to the appendix of \cite{LN172} for more details.

The parametrization of \eqref{phat'^0}--\eqref{qhat'} can be extended to a general relativistic case by considering an orthonormal frame. Let $ { \bf e }^i = { e^i }_a { \bf W }^a $ be an orthonormal frame such that $ \delta^{ i j } = { e^i }_a { e^j }_b g^{ a b } $ and
\begin{align}
p_a = \p_i { e^i }_a , \label{pphat}
\end{align}
where $ p_a { \bf W }^a = { \hat p }_i { \bf e }^i $. Then, the post-collision momenta for given $ { \hat p } , { \hat q } \in \bbr^3 $ and $ \omega \in \bbs^2 $ will be given by \eqref{phat'^0}--\eqref{qhat'}. Now, using \eqref{pphat} we obtain
\begin{align}
p'^0 & = p^0 + 2 \bigg( - q^0 \frac{ n^a \omega_i { e^i }_a }{ \sqrt{ s } } + q^a \omega_i { e^i }_a + \frac{ n^a \omega_i { e^i }_a n_b q^b }{ \sqrt{ s } ( n^0 + \sqrt{ s } ) } \bigg) \frac{ n^c \omega_j { e^j }_c }{ \sqrt{ s } } , \label{p'^0} \\
q'^0 & = q^0 - 2 \bigg( - q^0 \frac{ n^a \omega_i { e^i }_a }{ \sqrt{ s } } + q^a \omega_i { e^i }_a + \frac{ n^a \omega_i { e^i }_a n_b q^b }{ \sqrt{ s } ( n^0 + \sqrt{ s } ) } \bigg) \frac{ n^c \omega_j { e^j }_c }{ \sqrt{ s } } , \label{q'^0} 
\end{align}
and
\begin{align}
p'_d & = p_d + 2 \bigg( - q^0 \frac{ n^a \omega_i { e^i }_a }{ \sqrt{ s } } + q^a \omega_i { e^i }_a + \frac{ n^a \omega_i { e^i }_a n_b q^b }{ \sqrt{ s } ( n^0 + \sqrt{ s } ) } \bigg) \bigg( \omega_j { e^j }_d + \frac{ n^c \omega_j { e^j }_c n_d }{ \sqrt{ s } ( n^0 + \sqrt{ s } ) } \bigg) , \label{p'} \\
q'_d & = q_d - 2 \bigg( - q^0 \frac{ n^a \omega_i { e^i }_a }{ \sqrt{ s } } + q^a \omega_i { e^i }_a + \frac{ n^a \omega_i { e^i }_a n_b q^b }{ \sqrt{ s } ( n^0 + \sqrt{ s } ) } \bigg) \bigg( \omega_j { e^j }_d + \frac{ n^c \omega_j { e^j }_c n_d }{ \sqrt{ s } ( n^0 + \sqrt{ s } ) } \bigg) , \label{q'} 
\end{align}
where $ \omega = ( \omega_1 , \omega_2 , \omega_3 ) \in \bbs^2 $ in the sense that $ \delta^{ i j } \omega_i \omega_j = 1 $. One can show that \eqref{p'^0}--\eqref{q'} satisfy the energy-momentum conservation:
\begin{align}
p'_\alpha + q'_\alpha = p_\alpha + q_\alpha . \label{conserved}
\end{align}
We remark that the conservation \eqref{conserved} implies, in particular
\begin{align}
\int_{ \bbr^3 } Q ( f , f ) p^0 \, d p = 0 , \label{Qconserved}
\end{align}
by using the symmetry of the collision operator.

We notice that the Euclidean inner product $ { \hat n } \cdot { \hat q } $ in \eqref{phat'^0}--\eqref{qhat'} has been rewritten as $ n_b q^b $ in \eqref{p'^0}--\eqref{q'}, while the inner products $ { \hat n } \cdot \omega $ and $ { \hat q } \cdot \omega $ have been replaced by $ n^a \omega_i { e^i }_a $ and $ q^a \omega_i { e^i }_a $, respectively, where we need an explicit formula for an orthonormal frame. In this paper, we assume that the orthonormal frame is given as follows.

\begin{remark}\label{rem ortho}
Throughout the paper we will use the following orthonormal frame: 
\begin{align*}
& ( { e^i }_a ) 
= \begin{pmatrix}
{ \bf e }^1 \\ 
{ \bf e }^2 \\ 
{ \bf e }^3 
\end{pmatrix}
= \begin{pmatrix}
\frac{ 1 }{ \sqrt{ g^{ 1 1 } } }  & 0 & 0 \\ 
\frac{ - g^{ 1 2 } }{ \sqrt{ g^{ 1 1 } ( g^{ 1 1 } g^{ 2 2 } - ( g^{ 1 2 } )^2 ) } } & \frac{ g^{ 1 1 } }{ \sqrt{ g^{ 1 1 } ( g^{ 1 1 } g^{ 2 2 } - ( g^{ 1 2 } )^2 ) } } & 0 \\
\frac{ g^{ 1 2 } g^{ 2 3 } - g^{ 1 3 } g^{ 2 2 } }{ \sqrt{ ( g^{ 1 1 } g^{ 2 2 } - ( g^{ 1 2 } )^2 ) ( \det g^{ - 1 } ) } } & \frac{ - g^{ 1 1 } g^{ 2 3 } + g^{ 1 2 } g^{ 1 3 } }{ \sqrt{ ( g^{ 1 1 } g^{ 2 2 } - ( g^{ 1 2 } )^2 ) ( \det g^{ - 1 } ) } } & \frac{ g^{ 1 1 } g^{ 2 2 } - ( g^{ 1 2 } )^2 }{ \sqrt{ ( g^{ 1 1 } g^{ 2 2 } - ( g^{ 1 2 } )^2 ) ( \det g^{ - 1 } ) } }
\end{pmatrix} , \\ 
& ( { e_i }^a ) 
= \begin{pmatrix}
{ \bf e }_1 & { \bf e }_2 & { \bf e }_3 
\end{pmatrix}
= \begin{pmatrix}
\frac{ g^{ 1 1 } }{ \sqrt{ g^{ 1 1 } } }  & 0 & 0 \\ 
\frac{ g^{ 1 2 } }{ \sqrt{ g^{ 1 1 } } } & \frac{ g^{ 1 1 } g^{ 2 2 } - ( g^{ 1 2 } )^2 }{ \sqrt{ g^{ 1 1 } ( g^{ 1 1 } g^{ 2 2 } - ( g^{ 1 2 } )^2 ) } } & 0 \\
\frac{ g^{ 1 3 } }{ \sqrt{ g^{ 1 1 } } } & \frac{ g^{ 1 1 } g^{ 2 3 } - g^{ 1 2 } g^{ 1 3 } }{ \sqrt{ g^{ 1 1 } ( g^{ 1 1 } g^{ 2 2 } - ( g^{ 1 2 } )^2 ) } } & \frac{ \det g^{ - 1 } }{ \sqrt{ ( g^{ 1 1 } g^{ 2 2 } - ( g^{ 1 2 } )^2 ) ( \det g^{ - 1 } ) } } 
\end{pmatrix} ,
\end{align*}
which satisfy $ { e^i }_a { e^j }_b g^{ a b } = \delta^{ i j } $ and $ { e_i }^a { e_j }^b g_{ a b } = \delta_{ i j } $. Here, $ \det g^{ - 1 } $ denotes
\begin{align*}
\det g^{ - 1 } = \det ( g^{ a b } )_{ a , b = 1 , 2 , 3 } .
\end{align*}
Similarly, $ e^{ - 1 } $ and $ e $ will denote the following $ 3 \times 3 $ matrices:
\begin{align}
e^{ - 1 } = ( { e^i }_a )_{ i , a = 1 , 2 , 3 } , \qquad e = ( { e_i }^a )_{ i , a = 1 , 2 , 3 } , \label{e}
\end{align}
respectively. We note that $ e^{ - 1 } $ and $ e $ are matrix inverses of each other.
\end{remark}

\subsection{Main result}\label{sec M}
In this part we state the main result of the paper. In Section \ref{sec wp} we introduce the soft potentials and the weight function. In Section \ref{sec main} we collect the notations and state the main result.

\subsubsection{Soft potentials and weight function}\label{sec wp}
In this paper we will assume that the scattering cross-section $ \sigma ( h , \omega ) $ is independent of the angular variable $ \omega \in \bbs^2 $, but depends only on the relative momentum $ h $. More precisely, we are interested in the case of soft potentials, given by
\begin{align}\label{potential}
\sigma ( h , \omega ) = h^{ - b } , \qquad 0 \leq b < 3 . 
\end{align}
Now, let us consider the weight function. We first define
\begin{align}
Z_\eta = e^{ ( 1 - \eta ) \gamma t } , \qquad \gamma = \sqrt{ \frac{ \Lambda }{ 3 } } , \label{Z}
\end{align}
where $ 0 \leq \eta < 1 $ will be determined later. For $ \eta = 0 $, we will write
\begin{align*}
Z = e^{ \gamma t } .
\end{align*}
This is basically the same as the scale factor $ R $ in the FLRW case, but it should be modified as in \eqref{Z} in order to control the anisotropy in the Bianchi case. The following is our weight function:
\begin{align}\label{weight}
w ( t , p ) = e^{ Z_\eta ( p^0 - 1 ) } . 
\end{align}
Note that the weight depends also on the metric through $ p^0 $. We observe that
\begin{align}
w ( t , p' ) w ( t , q' ) = w ( t , p ) w ( t , q ) , \label{wconserved}
\end{align}
since $ p'^0 + q'^0 = p^0 + q^0 $ for each $ t $. We also introduce $ \varphi : [ 0 , \infty ) \to \bbr $ defined by
\begin{align}
\varphi ( r ) = \frac{ \min ( r , \sqrt{ r } ) }{ 1 + \sqrt{ 2 } } . \label{varphi} 
\end{align} 
This will be useful in estimating the weight function.

\subsubsection{Main theorem}\label{sec main}
In this paper we will consider the Einstein equations for the inverse $ g^{ a b } $ of the metric $ g_{ a b } $ and the second fundamental form $ k^{ a b } $ with indices raised:
\begin{align}
& \frac{ d g^{ a b } }{ d t } = - 2 k^{ a b } , \label{E1} \\ 
& \frac{ d k^{ a b } }{ d t } = - 2 k^a_c k^{ b c } - k k^{ a b } + S^{ a b } + \frac12 ( \rho - S ) g^{ a b } + \Lambda g^{ a b } . \label{E2} 
\end{align} 
For the Boltzmann equation we will consider the equation for the weighted distribution function. Multiplying the equation \eqref{Boltzmann} by the weight \eqref{weight}, we obtain
\begin{align*}
\frac{ \partial ( w f ) }{ \partial t } - \frac{ \partial w }{ \partial t } f = ( \det g )^{ - \frac12 } \iint \frac{ h^{ 1 - b } \sqrt{ s } }{ p^0 q^0 } ( w f ( p' ) w f ( q' ) - w f ( p ) w f ( q ) ) w^{ - 1 } ( q ) \, d \omega \, d q , 
\end{align*}
where the assumption \eqref{potential} has been applied, and note that 
\begin{align*}
\frac{ \partial w }{ \partial t } & = \left( \frac{ d Z_\eta }{ d t } ( p^0 - 1 ) + Z_\eta \frac{ \partial p^0 }{ \partial t } \right) w \\
& = \left( ( 1 - \eta ) \gamma ( p^0 - 1 ) - \frac{ k^{ a b } p_a p_b }{ p^0 } \right) Z_\eta w . 
\end{align*}
Now, we obtain the weighted Boltzmann equation by replacing $ w f $ with $ f $: 
\begin{align}
\frac{ \partial f }{ \partial t } + \lambda_\eta f & = \Gamma ( f , f ) , \label{B} \\
\Gamma ( f , f ) & = ( \det g )^{ - \frac12 } \int_{ \bbr^3 } \int_{ \bbs^2 } \frac{ h^{ 1 - b } \sqrt{ s } }{ p^0 q^0 } ( f ( p' ) f ( q' ) - f ( p ) f ( q ) ) w^{ - 1 } ( q ) \, d \omega \, d q , \label{Gamma}
\end{align}
where
\begin{align}
\lambda_\eta = - \left( ( 1 - \eta ) \gamma ( p^0 - 1 ) - \frac{ k^{ a b } p_a p_b }{ p^0 } \right) Z_\eta \label{lambda} . 
\end{align}
In the rest of the paper, the distribution function $ f $ will refer to the weighted one. Matter terms are given by the stress-energy-momentum tensor \eqref{stress}, which are given in terms of the weighted distribution function as follows: 
\begin{align}
S^{ a b } & = ( \det g )^{ - \frac12 } \int_{ \bbr^3 } f ( p ) \frac{ p^a p^b }{  p^0 } w^{ - 1 } ( p ) \, d p , \label{Sab} \\
\rho & = ( \det g )^{ - \frac12 } \int_{ \bbr^3 } f ( p ) p^0 w^{ - 1 } ( p ) \, d p , \label{rho} \\
S & = ( \det g )^{ - \frac12 } \int_{ \bbr^3 } f ( p ) \frac{ g^{ a b } p_a p_b }{  p^0 } w^{ - 1 } ( p ) \, d p , \label{S}
\end{align}
where $ d p = d p_1 \, d p_2 \, d p_3 $, and the dependence on $ t $ has been omitted for simplicity.

Let us introduce the Hubble variable $ H $ defined by 
\begin{align}
H = \frac13 k^{ a b } g_{ a b } . \label{H}
\end{align}
Then, the second fundamental form $ k^{ a b } $ can be decomposed as
\begin{align}
k^{ a b } = H g^{ a b } + \sigma^{ a b } , \label{k decom}
\end{align}
where $ \sigma^{ a b } $ is the trace-free part, called the shear, which measures the anisotropy. It is convenient to introduce the (Hubble normalized) shear variable $ F $:
\begin{align}
F = \frac{ \sigma_{ a b } \sigma^{ a b } }{ 4 H^2 } , \label{F}
\end{align}
which vanishes in the FLRW case. We can use the decomposition \eqref{k decom} to write the constraint \eqref{constraint} as follows:
\begin{align}
- \sigma_{ a b } \sigma^{ a b } + 6 H^2 = 2 \rho + 2 \Lambda . \label{constraint1}
\end{align}
This implies that $ 3 H^2 \geq \Lambda $, since $ \sigma_{ a b } \sigma^{ a b } $ and $ \rho $ are non-negative, for any solutions satisfying the constraint equation. We will first prove the local existence in Section \ref{sec local}, and then will use \eqref{constraint1} to obtain the global existence in Section \ref{sec global}.

Let us collect the notations. A metric $ g^{ a b } $ will be understood as a $ 3 \times 3 $ symmetric and positive definite matrix, and $ e^{ - 1 } $ and $ e $ will denote the $ 3 \times 3 $ matrices defined in Remark \ref{rem ortho} with the notations \eqref{e}. We define
\begin{align}
\| g \| = \max_{ a , b } | g_{ a b } | , \qquad \| g^{ - 1 } \| = \max_{ a , b } | g^{ a b } | , \qquad \| e \| = \max_{ i , a } | { e_i }^a | , \qquad \| e^{ - 1 } \| = \max_{ i , a } | { e^i }_a | . \label{norm g}
\end{align}
In a similar way we define for the second fundamental form
\begin{align}
\| k \| = \max_{ a , b } | k_{ a b } | , \qquad \| k^{ ** } \| = \max_{ a , b } | k^{ a b } | . \label{norm k}
\end{align}
Note that the norm of $ k^{ a b } $ can be estimated as
\begin{align*}
\| k^{ ** } \| \leq C \| g^{ - 1 } \|^2 \| k \| . 
\end{align*}
We also remark that
\begin{align*}
\| g^{ - 1 } \| \leq C ( \det g )^{ - 1 } \| g \|^2 , \qquad \| g \| \leq C ( \det g^{ - 1 } )^{ - 1 } \| g^{ - 1 } \|^2 . 
\end{align*}
Finally, we define the norm for the distribution function as follows:
\begin{align}\label{norm f}
\| f \|_{ L^\infty_r } = \sup_{ p \in \bbr^3 } \langle p \rangle^r | f ( p ) | , \qquad \left\| \frac{ \partial f }{ \partial p } \right\|_{ L^\infty_r } = \sum_{ a = 1 }^3 \left\| \frac{ \partial f }{ \partial p_a } \right\|_{ L^\infty_r } \qquad \| f \|_{ W_r^{ 1 , \infty } } = \| f \|_{ L^\infty_r } + \left\| \frac{ \partial f }{ \partial p } \right\|_{ L^\infty_r } , 
\end{align}
where the additional weight $ \langle p \rangle $ is defined by
\begin{align*}
\langle p \rangle = \sqrt{ 1 + | p |^2 } , \qquad | p | = \sqrt{ \sum_{ i = 1 }^3 ( p_i )^2 } , 
\end{align*}
which does not depend on the metric.

Now, the following is the main theorem of this paper. Initial data $ g_0^{ a b } $, $ k_0^{ a b } $ and $ f_0 $ for the (weighted) Einstein-Boltzmann system  \eqref{E1}--\eqref{B} will be given at some $ t_0 > 0 $, and the initial value of the Hubble variable $ H $ will be denoted by $ H_0 $.

\begin{thm}\label{main}
Suppose that $ g_0^{ a b } $ is a symmetric and positive definite matrix, $ k_0^{ a b } $ is a symmetric matrix satisfying $ \gamma < H_0 < ( 6 / 5 )^{ 1 / 2 } \gamma $, and $ 0 \leq f_0 \in W^{ 1 , \infty }_{ r + 2 } $ with $ r > 5 $. Let $ g_0^{ a b } $, $ k_0^{ a b } $ and $ f_0 $ be initial data of the equations \eqref{E1}--\eqref{B} with $ \eta = \eta_0 $, defined by
\begin{align}
\eta_0 = \sqrt{ 6 \left( 1 - \frac{ \gamma^2 }{ H_0^2 } \right) } , \label{eta0}
\end{align}
satisfying the constraint equation \eqref{constraint}. Then, there exists a small $ \varepsilon > 0 $ such that if $ \| f_0 \|_{ L^\infty_r } < \varepsilon $, then a unique global solution $ g^{ a b } $, $ k^{ a b } $ and $ f \geq 0 $ exists such that $ g^{ a b } $ and $ k^{ a b } $ are differentiable, $ f $ is continuous, and $ S^{ a b } $, $ \rho $ and $ S $ are differentiable. Moreover, the solution satisfies 
\begin{align}
\gamma \leq H ( t ) & \leq \gamma + C e^{ - 2 \gamma t } , \label{thm H} \\
F ( t ) & \leq C e^{ - 2 \gamma t } , \label{thm F} \\
f ( t , p ) & \leq C \varepsilon \langle p \rangle^{ - r } , \label{thm f} \\
\rho ( t ) & \leq C \varepsilon e^{ - 3 \gamma t } , \label{thm rho} \\
S ( t ) & \leq C \varepsilon e^{ - 5 \gamma t } , \label{thm S} \\
\frac{ S ( t ) }{ \rho ( t ) } & \leq C e^{ - 2 \gamma t } , \label{thm Srho}
\end{align}
and there exists a constant matrix $ G^{ a b } $ such that
\begin{align}
| Z^2 ( t ) g^{ a b } ( t ) - G^{ a b } | \leq C e^{ - \gamma t } , \label{thm g}
\end{align}
for all $ t \geq t_0 $.
\end{thm}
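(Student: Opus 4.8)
The plan is to run a continuation (bootstrap) argument built on the local existence result of Proposition~\ref{prop local}, using the constraint \eqref{constraint1} together with the monotonicity properties engineered into the weight \eqref{weight}. Starting from the local solution on some maximal interval $[t_0,T)$, I would set up a bootstrap assumption of the form $\|f(t)\|_{L^\infty_r}\le 2\varepsilon$, $H(t)\le 2\gamma$, $F(t)\le 1$ (say) on $[t_0,T)$, and aim to improve each of these strictly, which forces $T=\infty$. The first block of estimates is for the matter terms: plugging the bootstrap bound on $f$ into \eqref{Sab}--\eqref{S} and using $w^{-1}(p)=e^{-Z_\eta(p^0-1)}$ together with $p^0\ge 1$ and the decay of $\det g$, I would show $\rho\lesssim\varepsilon(\det g)^{-1/2}$ and a sharper bound for $S$ reflecting the extra $g^{ab}p_ap_b/p^0$ factor (which is small because $g^{ab}$ decays); this should already suggest the exponents $-3\gamma t$ and $-5\gamma t$ in \eqref{thm rho}--\eqref{thm S} once the asymptotics $\det g\sim Z^{-6}$ are in hand.

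Next I would analyze the Einstein sector. From \eqref{constraint1} one has $3H^2\ge\Lambda$, i.e.\ $H\ge\gamma$, for free. For the upper bound and the decay, I would derive an ODE (or differential inequality) for $H$ using \eqref{E2} contracted with $g_{ab}$: schematically $\dot H=-3H^2+\Lambda+(\text{matter})+(\text{shear})$, so $\dot H\le -3(H^2-\gamma^2)+C\rho$; feeding in the decay of $\rho$ and a Gronwall/comparison argument yields $H\le\gamma+Ce^{-2\gamma t}$. For the shear, the trace-free part of \eqref{E2} gives $\dot\sigma^{ab}=-3H\sigma^{ab}+(\text{trace-free matter})$, hence $\sigma_{ab}\sigma^{ab}$ decays like $e^{-6\gamma t}$ up to matter corrections, and dividing by $4H^2\ge 4\gamma^2$ gives \eqref{thm F}. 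The role of $\eta_0$ in \eqref{eta0} is precisely to make the weight's decay rate $(1-\eta)\gamma$ compatible with the shear-driven growth of $p^0$ along characteristics — this is why the hypothesis $\gamma<H_0<(6/5)^{1/2}\gamma$ appears, ensuring $0\le\eta_0<1$ and that $\lambda_\eta$ in \eqref{lambda} is nonnegative (or controllably small) for the relevant range of $p$.

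Then I would close the Boltzmann estimate. Writing \eqref{B} in mild form along characteristics, $f(t)=e^{-\int\lambda_\eta}f_0+\int e^{-\int\lambda_\eta}\Gamma(f,f)$, and using the sign of $\lambda_\eta$ to drop the first exponential, the key is to bound $\int_{t_0}^\infty(\det g)^{-1/2}\big(\sup\langle p\rangle^r|f|\big)^2\int h^{1-b}\sqrt s/(p^0q^0)\,w^{-1}(q)\,d\omega\,dq\,dt$. The soft-potential factor $h^{1-b}$ with $0\le b<3$ and the $\langle p\rangle^{-r}$, $r>5$, weights make the $q$-integral of $w^{-1}(q)$ against the kernel finite; crucially, because $w$ is (nearly) increasing in $t$, $w^{-1}(q)$ does not grow, and the prefactor $(\det g)^{-1/2}\sim Z^3\sim e^{3\gamma t}$ is beaten once one checks $\int w^{-1}(q)\,dq$ decays like $Z^{-3}\sim e^{-3\gamma t}$ from the Gaussian-type weight $e^{-Z_\eta(q^0-1)}$ (here $Z_\eta$ is large). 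The net time integral is then $O(\varepsilon^2)\int e^{-c\gamma t}\,dt<\infty$, so $\|f(t)\|_{L^\infty_r}\le\|f_0\|_{L^\infty_r}+C\varepsilon^2<2\varepsilon$ for $\varepsilon$ small, improving the bootstrap; the derivative bound in $W^{1,\infty}_r$ is handled the same way using Lemma~\ref{lem dp'dp} to control $\partial_p p'$, and then \eqref{thm f} follows. I expect the main obstacle to be this last step: making the interplay between the time-growing prefactor $(\det g)^{-1/2}$, the time-growing weight exponent $Z_\eta$, and the soft-potential singularity $h^{1-b}$ genuinely uniform in time — i.e.\ verifying that $\int_{\bbr^3}h^{1-b}\langle q\rangle^{-r}e^{-Z_\eta(q^0-1)}\,dq$ decays fast enough (using $\varphi$ from \eqref{varphi} to estimate $q^0-1$ from below), since this is exactly where the previous works needed unphysical assumptions or special cross-sections.

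Finally, with $F$ and $\rho$ decay in hand, I would integrate \eqref{E1}: $\frac{d}{dt}(Z^2 g^{ab})=Z^2(2\gamma g^{ab}-2k^{ab})=2Z^2(\gamma-H)g^{ab}-2Z^2\sigma^{ab}$, and both terms on the right are $O(e^{-\gamma t})$ times bounded quantities once $|\gamma-H|\lesssim e^{-2\gamma t}$ and $|\sigma^{ab}|\lesssim e^{-2\gamma t}$ (using $\det g\sim Z^{-6}$ to convert the $\sigma_{ab}\sigma^{ab}$ bound to a bound on $\sigma^{ab}$), so $Z^2 g^{ab}$ is Cauchy as $t\to\infty$ and converges to some constant matrix $G^{ab}$ with rate $e^{-\gamma t}$, giving \eqref{thm g}. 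Uniqueness and the regularity claims follow from the local theory and the fact that the global bounds keep the solution inside the region where Proposition~\ref{prop local} applies, so the solution extends indefinitely.
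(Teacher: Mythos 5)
Your overall architecture --- local existence from Proposition \ref{prop local} plus a continuation argument driven by the constraint and by the time-integrability of the collision estimate --- is the paper's. But two clusters of steps do not close as written.

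First, the quantitative bookkeeping in the Boltzmann step is wrong. You assert $(\det g)^{-1/2}\sim Z^{3}\sim e^{3\gamma t}$ and $\int_{\bbr^3} w^{-1}(q)\,dq\sim Z^{-3}$; both are backwards. In fact $(\det g)^{-1/2}\sim Z^{-3}$ decays, while $\int w^{-1}(q)\,dq$ grows: passing to the orthonormal frame, $dq=(\det g)^{1/2}d\hat q$ and the weight $e^{-Z_\eta\varphi(|\hat q|^2)}$ concentrates the $\hat q$-integral on a ball of radius $Z_\eta^{-1/2}$, so $\int w^{-1}(q)\,dq\sim (\det g)^{1/2}Z_\eta^{-3/2}$. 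Even granting your (incorrect) individual rates, their product is $O(1)$, whose time integral diverges, so your own accounting does not produce the claimed $O(\varepsilon^2)\int e^{-c\gamma t}\,dt$. The correct net rate, including the soft-potential singularity, is $Z_\eta^{-3/2+b/2}(p^0)^{-b/2}$ (Lemma \ref{lem west}), and it is precisely the hypothesis $b<3$ that makes this integrable; this is the heart of the paper and needs to be stated correctly. Separately, \eqref{thm Srho} cannot be obtained from the upper bounds on $S$ and $\rho$ alone: you need a matching lower bound $\rho\gtrsim e^{-3\gamma t}$, which the paper extracts from the particle number $N^0=(\det g)^{-1/2}\int f\,w^{-1}\,dp$ via the exact law $dN^0/dt=-3HN^0$ and $N^0\le\rho$. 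Your proposal is silent on this.

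Second, the Einstein-sector bootstrap is set up too loosely. The hypothesis $F\le 1$ is not enough: Lemma \ref{lem wdecay}, which lets you discard the damping term $\lambda_\eta f$, requires $F\le\eta^2/4<1/4$ with the \emph{same} $\eta=\eta_0$ as in the weight, and the improvement is not a soft continuity argument but the chain $F\le\frac{6H^2-2\Lambda}{4H^2}=\frac32(1-\gamma^2/H^2)\le\frac32(1-\gamma^2/H_0^2)=\eta_0^2/4$, which uses the (propagated) constraint \eqref{constraint1} together with $\dot H\le0$; this is exactly why $H_0<(6/5)^{1/2}\gamma$ appears. The same constraint gives $\dot H\le -(H^2-\gamma^2)$ and $\sigma_{ab}\sigma^{ab}\le 6(H^2-\gamma^2)\lesssim e^{-2\gamma t}$ with no matter terms, which breaks the circularity present in your route (your decay of $H-\gamma$ is fed by the decay of $\rho$, which needs $\det g\sim Z^{6}$, which needs the decay of $H-\gamma$ and of the shear). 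Your alternative of evolving $\sigma^{ab}$ separately is workable but delicate (with both indices up the damping is $-5H\sigma^{ab}$ plus terms from $\dot g^{ab}$, not $-3H\sigma^{ab}$) and unnecessary for \eqref{thm F}. Finally, you invoke \eqref{constraint1} at all times without noting that the constraint must be shown to propagate ($\dot\xi=-2k\xi$ in the paper); this step is needed before any of the above.
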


In Theorem \ref{main}, we assumed that $ H_0 $ is close to $ \gamma $ and $ f_0 $ is sufficiently small, and considered the weighted equation with $ \eta = \eta_0 $. These restrictions are only required to obtain the global existence. In Section \ref{sec local} we will obtain the local existence without those restrictions, and we will use them in Section \ref{sec global} to obtain the global existence.

\section{Preliminaries}\label{sec prelim}
We first collect several basic lemmas in Section \ref{sec basic}. In Section \ref{sec flrw}, we introduce the main idea of this paper in the FLRW case. The idea will be extended to the Bianchi case in Sections \ref{sec local} and \ref{sec global}.

\subsection{Basic estimates}\label{sec basic}
Let $ g^{ a b } $ be a symmetric and positive definite matrix, and $ h $ and $ \sqrt{ s } $ are defined by \eqref{h} and \eqref{s}. Then, we have
\begin{gather}
h \leq \sqrt{ s } \leq 2 \sqrt{ p^0 q^0 }, \label{hspq} \\
h^2 \leq g^{ a b } ( p_a - q_a ) ( p_b - q_b ) \leq p^0 q^0 h^2 . \label{hp-q} 
\end{gather}
These are simple computations, and we refer to Lemma 3 of \cite{LN171} for the proof. We also have by the energy conservation \eqref{conserved}, 
\begin{align}
p^0 \leq C p'^0 q'^0 , \label{pp'q'}
\end{align}
where the constant $ C $ does not depend on $ g^{ a b } $.

We obtain by direct calculations:
\begin{align}
\frac{ \partial p^0 }{ \partial p_a } & = \frac{ p^a }{ p^0 } , \label{dpdp} \\
\frac{ \partial h }{ \partial p_a } & = \frac{ q^0 }{ h } \left( \frac{ p^a }{ p^0 } - \frac{ q^a }{ q^0 } \right) , \label{dhdp} \\
\frac{ \partial \sqrt{ s } }{ \partial p_a } & = \frac{ q^0 }{ \sqrt{ s } } \left( \frac{ p^a }{ p^0 } - \frac{ q^a }{ q^0 } \right) . \label{dsdp}
\end{align}
We use the orthonormal frame given in Remark \ref{rem ortho} to have $ p^a = g^{ a b } p_b = g^{ a b } \p_i { e^i }_b $. Then, we obtain
\begin{align}
\left| \frac{ p^a }{ p^0 } \right| \leq \left| \frac{ g^{ a b } \p_i { e^i }_b }{ p^0 } \right| \leq C \| g^{ - 1 } \| \| e^{ - 1 } \| , \qquad \left| \frac{ p_a }{ p^0 } \right| \leq \left| \frac{ \p_i { e^i }_a }{ p^0 } \right| \leq C \| e^{ - 1 } \| , \label{dpdp est}
\end{align}
which can be used to estimate \eqref{dpdp}. In a similar way, \eqref{dhdp} and \eqref{dsdp} are estimated as follows.

\begin{lemma}\label{lem dhdp}
Let $ g^{ a b } $ be a symmetric and positive definite matrix, and $ e^{ - 1 } $ be the matrix given in Remark \ref{rem ortho}. Then, we have
\begin{align}
\left| \frac{ \partial h }{ \partial p_a } \right| & \leq C q^0 \sqrt{ p^0 q^0 } \| g^{ - 1 } \| \| e^{ - 1 } \| , \\
\left| \frac{ \partial \sqrt{ s } }{ \partial p_a } \right| & \leq C q^0 \sqrt{ p^0 q^0 } \| g^{ - 1 } \| \| e^{ - 1 } \| , 
\end{align}
where $ C $ does not depend on $ g^{ a b } $. 
\end{lemma}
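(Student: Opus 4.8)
The plan is to start from the explicit formulas \eqref{dhdp} and \eqref{dsdp},
\[
\frac{\partial h}{\partial p_a}=\frac{q^0}{h}\left(\frac{p^a}{p^0}-\frac{q^a}{q^0}\right),\qquad
\frac{\partial \sqrt s}{\partial p_a}=\frac{q^0}{\sqrt s}\left(\frac{p^a}{p^0}-\frac{q^a}{q^0}\right),
\]
and to bound each factor separately. The term $q^0/h$ (resp.\ $q^0/\sqrt s$) is the only one that could blow up, at $p=q$ for $h$; the idea is to absorb the $1/h$ against the numerator $\frac{p^a}{p^0}-\frac{q^a}{q^0}$ using \eqref{hp-q}. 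Concretely, I would write $\frac{p^a}{p^0}-\frac{q^a}{q^0}=\frac{1}{p^0q^0}\big(q^0 p^a - p^0 q^a\big)$ and estimate the vector $q^0p^a-p^0q^a$ in terms of $h$ and $p^0,q^0$, so that the ratio $\frac{1}{h}\big(\frac{p^a}{p^0}-\frac{q^a}{q^0}\big)$ stays bounded (up to metric-dependent constants) as $p\to q$.

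The key algebraic step is the pointwise inequality
\[
\left|\frac{p^a}{p^0}-\frac{q^a}{q^0}\right|\le C\,\|g^{-1}\|\,\|e^{-1}\|\,\frac{h}{\sqrt{p^0q^0}}\,\sqrt{p^0q^0}\,,
\]
i.e.\ I want to show $\big|q^0p^a-p^0q^a\big|\lesssim \|g^{-1}\|\,\|e^{-1}\|\,p^0q^0\,\frac{h}{\sqrt{p^0q^0}}$. To get this, note $q^0p^a-p^0q^a = q^0 g^{ab}(p_b-q_b) + (q^0-p^0)g^{ab}q_b = g^{ab}\big(q^0(p_b-q_b)-(p^0-q^0)q_b\big)$; in the orthonormal frame $p_b=\p_i{e^i}_b$, $q_b=\q_i{e^i}_b$, so $|g^{ab}v_b|\le C\|g^{-1}\|\,\|e^{-1}\|\,|\hat v|$ for $v_b=\hat v_i{e^i}_b$. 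Thus it suffices to control the Euclidean vector $q^0(\p-\q)-(p^0-q^0)\q$. Using the Minkowski form \eqref{hhat}, $h^2=|\p-\q|^2-(p^0-q^0)^2$, one checks $|q^0(\p-\q)-(p^0-q^0)\q|^2 = (q^0)^2|\p-\q|^2 - 2q^0(p^0-q^0)(\p-\q)\cdot\q + (p^0-q^0)^2|\q|^2$; using $(p^0)^2-1=|\p|^2$, $(q^0)^2-1=|\q|^2$ and $p^0(p^0-q^0)=(\p-\q)\cdot\p + \tfrac12 h^2\cdot(\text{sign terms})$, a direct (if tedious) computation collapses this to something bounded by $C\,p^0q^0\,h^2$. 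That gives $|q^0p^a-p^0q^a|\le C\|g^{-1}\|\,\|e^{-1}\|\,\sqrt{p^0q^0}\,h$, hence $\big|\frac{1}{h}(\frac{p^a}{p^0}-\frac{q^a}{q^0})\big|\le C\|g^{-1}\|\,\|e^{-1}\|/\sqrt{p^0q^0}$.

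Combining: $\big|\frac{\partial h}{\partial p_a}\big| = q^0\big|\frac{1}{h}(\frac{p^a}{p^0}-\frac{q^a}{q^0})\big| \le C\,q^0\,\|g^{-1}\|\,\|e^{-1}\|/\sqrt{p^0q^0}$, which is even stronger than the claimed $C\,q^0\sqrt{p^0q^0}\,\|g^{-1}\|\,\|e^{-1}\|$ (the paper presumably states the weaker, more convenient form). For $\sqrt s$, since $\sqrt s\ge h$ by \eqref{hspq}, the same bound with $1/\sqrt s\le 1/h$ applies verbatim, giving the second estimate. The main obstacle is the $p\to q$ singularity hidden in $1/h$: everything rests on the algebraic identity showing the numerator $\frac{p^a}{p^0}-\frac{q^a}{q^0}$ vanishes at the same rate as $h$. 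I expect that to follow from squaring and using the mass-shell relations as above, but one must be careful that the cross terms genuinely produce a factor $h^2$ and not just $h$; an alternative, cleaner route is to use \eqref{hp-q} directly — $g^{ab}(p_a-q_a)(p_b-q_b)\ge h^2$ and $\le p^0q^0h^2$ — together with a Cauchy–Schwarz argument in the metric $g^{ab}$ to bound $|g^{ab}(p_b-q_b)|$ and the difference of the unit-normalized vectors, which avoids the Minkowski-frame bookkeeping altogether.
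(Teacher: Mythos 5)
Your proposal follows essentially the paper's strategy: start from \eqref{dhdp}--\eqref{dsdp}, pass to the orthonormal frame to extract the factor $\|g^{-1}\|\,\|e^{-1}\|$, and absorb the $1/h$ singularity into the difference $\frac{p^a}{p^0}-\frac{q^a}{q^0}$; the treatment of $\sqrt s$ via $\sqrt s\ge h$ is also identical. The only substantive difference is how the cancellation is established, and this is also where your write-up has its one soft spot. The paper does it in two lines: the map $v\mapsto v/\sqrt{1+|v|^2}$ is $1$-Lipschitz, so $|\p/p^0-\q/q^0|\le|\p-\q|$, and \eqref{hp-q} read in the orthonormal frame gives $|\p-\q|\le h\sqrt{p^0q^0}$, which multiplied by $q^0/h$ is exactly the stated bound. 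You instead propose to expand $|q^0\p-p^0\q|^2$ and leave the collapse of the cross terms as ``tedious but expected''; for the record, it does close, and exactly:
\[
|q^0\p-p^0\q|^2=(q^0)^2|\p|^2-2\,p^0q^0\,\p\cdot\q+(p^0)^2|\q|^2=p^0q^0\,h^2-(p^0-q^0)^2,
\]
using $|\p|^2=(p^0)^2-1$, $|\q|^2=(q^0)^2-1$ and $\p\cdot\q=p^0q^0-1-\tfrac12 h^2$ (the last is \eqref{hhat} rearranged). Hence $|\p/p^0-\q/q^0|\le h/\sqrt{p^0q^0}$, and your resulting bound $|\partial h/\partial p_a|\le C q^0\|g^{-1}\|\,\|e^{-1}\|/\sqrt{p^0q^0}$ is indeed stronger than the lemma's statement (which is all that is needed later); the displayed inequality in your second paragraph carries a spurious extra $\sqrt{p^0q^0}$, but your stated conclusion is the correct one. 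Your closing ``alternative, cleaner route'' via \eqref{hp-q} is precisely what the paper does. So there is no gap in the idea, only the unexecuted identity above, which you should write out rather than leave as an expectation.
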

\begin{proof}
We use the orthonormal frame to write
\begin{align*}
\frac{ q^0 }{ h } \left( \frac{ p^a }{ p^0 } - \frac{ q^a }{ q^0 } \right) = \frac{ q^0 }{ h } g^{ a b } \left( \frac{ \p_i }{ p^0 } - \frac{ \q_i }{ q^0 } \right) { e^i }_b .
\end{align*}
Here, we apply \eqref{hp-q} to obtain
\begin{align}
\left| \frac{ \p_i }{ p^0 } - \frac{ \q_i }{ q^0 } \right| \leq | \p - \q | \leq h \sqrt{ p^0 q^0 } . \label{p-q}
\end{align}
Hence, we obtain from \eqref{dhdp}
\begin{align*}
\left| \frac{ \partial h }{ \partial p_a } \right| = \left| \frac{ q^0 }{ h } \left( \frac{ p^a }{ p^0 } - \frac{ q^a }{ q^0 } \right) \right| \leq C q^0 \sqrt{ p^0 q^0 } \| g^{ - 1 } \| \| e^{ - 1 } \| .
\end{align*}
The estimate for \eqref{dsdp} can be obtained by the same argument, and this completes the proof.
\end{proof}

%

%
\begin{lemma}\label{lem wlower}
The weight function $ w $ satisfies 
\begin{align*}
w ( t , p ) \geq \exp ( Z_\eta \varphi ( g^{ a b } p_a p_b ) ) , 
\end{align*}
where $ Z_\eta $ and $ \varphi $ are defined by \eqref{Z} and \eqref{varphi}, respectively. 
\end{lemma}
\begin{proof}
Recall that the weight function $ w $ is given by 
\begin{align*}
w ( t , p ) = e^{ Z_\eta ( p^0 - 1 ) } , 
\end{align*}
where we have 
\begin{align*}
p^0 - 1 = \frac{ g^{ a b } p_a p_b }{ p^0 + 1 } . 
\end{align*}
For $ g^{ a b } p_a p_b \leq 1 $, we have 
\begin{align*}
\frac{ g^{ a b } p_a p_b }{ p^0 + 1 } = \frac{ g^{ a b } p_a p_b }{ \sqrt{ 1 + g^{ c d } p_c p_d } + 1 } \geq \frac{ g^{ a b } p_a p_b }{ 1 + \sqrt{ 2 } } , 
\end{align*}
while, for $ g^{ a b } p_a p_b \geq 1 $ we have 
\begin{align*}
\frac{ g^{ a b } p_a p_b }{ p^0 + 1 } = \frac{ g^{ a b } p_a p_b }{ \sqrt{ 1 + g^{ c d } p_c p_d } + 1 } \geq \frac{ g^{ a b } p_a p_b }{ \sqrt{ g^{ c d } p_c p_d + g^{ c d } p_c p_d } + \sqrt{ g^{ c d } p_c p_d } } = \frac{ \sqrt{ g^{ a b } p_a p_b } }{ 1 + \sqrt{ 2 } } . 
\end{align*}
Hence, we obtain 
\begin{align*}
p^0 - 1 \geq \varphi ( g^{ a b } p_a p_b ) , 
\end{align*}
where $ \varphi $ is the function defined by \eqref{varphi}. This completes the proof of the lemma. 
\end{proof}

\begin{lemma}\label{lem wdecay}
Let $ g^{ a b } $ be a symmetric and positive definite matrix, and $ k^{ a b } $ be a symmetric matrix. Consider the decomposition \eqref{k decom} with $ H $ defined by \eqref{H} and the shear variable $ F $ defined by \eqref{F}. Suppose that $ H \geq \gamma $ and $ F \leq \eta^2 / 4 $ for some $ 0 \leq \eta < 1 $. Then, we have $ \lambda_\eta \geq 0 $ and
 \begin{align*}
\left| \frac{ \partial \lambda_\eta }{ \partial p_a } \right| \leq C ( \| g^{ - 1 } \| \| e^{ - 1 } \| + \| k^{ ** } \| \| e^{ - 1 } \| + \| g^{ - 1 } \| \| k^{ ** } \| \| e^{ - 1 } \|^3 ) Z_\eta ,
\end{align*}
for any $ a = 1 , 2 , 3 $.
\end{lemma}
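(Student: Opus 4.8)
The plan is to exploit the fact that $Z_\eta$ is independent of $p$, so everything reduces to the bracket $B(p):=(1-\eta)\gamma(p^0-1)-\dfrac{k^{ab}p_ap_b}{p^0}$, for which $\lambda_\eta=-BZ_\eta$; I must show $B\le 0$ and bound $\partial_{p_a}B$.

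For the sign of $\lambda_\eta$, I would insert the decomposition \eqref{k decom} and use $g^{ab}p_ap_b=(p^0)^2-1$, which gives $\dfrac{g^{ab}p_ap_b}{p^0}=\dfrac{(p^0)^2-1}{p^0}\ge p^0-1$. The shear contribution is handled by a Cauchy--Schwarz estimate with respect to the metric: writing $\sigma^{ab}p_ap_b=\sigma_{cd}p^cp^d$ and using $\|p\otimes p\|_g = g^{cd}p_cp_d$ (equivalently, passing to the orthonormal frame of Remark \ref{rem ortho}, where $p_a=\p_i{e^i}_a$, $g^{ab}p_ap_b=|\p|^2$, and $|\sigma^{ab}p_ap_b|$ is bounded by the Frobenius norm of the frame components times $|\p|^2$), one gets $|\sigma^{ab}p_ap_b|\le \sqrt{\sigma_{cd}\sigma^{cd}}\, g^{ab}p_ap_b$ with \emph{exactly} the same factor $g^{ab}p_ap_b$ as in the main term. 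Since $\sqrt{\sigma_{cd}\sigma^{cd}}=2H\sqrt F\le H\eta$ by the hypothesis $F\le\eta^2/4$, this yields
\[
\frac{k^{ab}p_ap_b}{p^0}=H\frac{g^{ab}p_ap_b}{p^0}+\frac{\sigma^{ab}p_ap_b}{p^0}\ge (1-\eta)H\frac{g^{ab}p_ap_b}{p^0}\ge (1-\eta)\gamma(p^0-1),
\]
using $H\ge\gamma$ in the last step. Hence $B\le 0$ and $\lambda_\eta\ge 0$.

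For the derivative estimate, I would differentiate $B$ directly from \eqref{lambda}: by \eqref{dpdp} one has $\partial_{p_a}p^0=p^a/p^0$, and since $k^{bc}$ is symmetric, $\partial_{p_a}(k^{bc}p_bp_c)=2k^{ab}p_b$, so
\[
\frac{\partial\lambda_\eta}{\partial p_a}=-\left((1-\eta)\gamma\,\frac{p^a}{p^0}-\frac{2k^{ab}p_b}{p^0}+\frac{k^{bc}p_bp_c}{(p^0)^2}\,\frac{p^a}{p^0}\right)Z_\eta .
\]
Each factor is then estimated in the orthonormal frame of Remark \ref{rem ortho}: the elementary bound $|\p_i|\le|\p|\le p^0$ gives $|k^{ab}p_b/p^0|\le C\|k^{**}\|\|e^{-1}\|$ and $|k^{bc}p_bp_c/(p^0)^2|\le C\|k^{**}\|\|e^{-1}\|^2$, while \eqref{dpdp est} gives $|p^a/p^0|\le C\|g^{-1}\|\|e^{-1}\|$. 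Multiplying out the three terms (and using that $(1-\eta)\gamma$ is a fixed constant) produces precisely the three summands $\|g^{-1}\|\|e^{-1}\|$, $\|k^{**}\|\|e^{-1}\|$ and $\|g^{-1}\|\|k^{**}\|\|e^{-1}\|^3$, each carrying the factor $Z_\eta$, as claimed.

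The only delicate point is the shear estimate in the positivity step: one must recognize that $|\sigma^{ab}p_ap_b|$ is controlled by $\sqrt{\sigma_{cd}\sigma^{cd}}$ times the \emph{same} quadratic form $g^{ab}p_ap_b$ that appears in the leading term, since this is exactly what makes the $(1-\eta)$ factor emerge cleanly and closes the argument; everything else is routine bookkeeping with the orthonormal frame and the derivative formulas \eqref{dpdp}--\eqref{dpdp est}.
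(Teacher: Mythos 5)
Your proposal is correct and follows essentially the same route as the paper: the same decomposition $k^{ab}=Hg^{ab}+\sigma^{ab}$ with the Cauchy--Schwarz bound $|\sigma^{ab}p_ap_b|\le(\sigma_{cd}\sigma^{cd})^{1/2}g^{ab}p_ap_b$ and $2H\sqrt F\le H\eta$ for the sign of $\lambda_\eta$, and the same explicit differentiation of \eqref{lambda} combined with \eqref{dpdp est} for the derivative bound. No gaps.
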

\begin{proof}
Recall that $ \lambda_\eta $ is defined by
\begin{align*}
\lambda_\eta = - \left( ( 1 - \eta ) \gamma ( p^0 - 1 ) - \frac{ k^{ a b } p_a p_b }{ p^0 } \right) Z_\eta .
\end{align*}
Since $ k^{ a b } = H g^{ a b } + \sigma^{ a b } $, we have
\begin{align*}
( 1 - \eta ) \gamma ( p^0 - 1 ) - \frac{ k^{ a b } p_a p_b }{ p^0 } & = ( 1 - \eta ) \gamma ( p^0 - 1 ) - \frac{ H g^{ a b } p_a p_b }{ p^0 } - \frac{ \sigma^{ a b } p_a p_b }{ p^0 } \\
& \leq ( 1 - \eta ) \gamma ( p^0 - 1 ) - \frac{ H g^{ a b } p_a p_b }{ p^0 } + \frac{ ( \sigma^{ a b } \sigma_{ a b } )^{ \frac12 } g^{ c d } p_c p_d }{ p^0 } \\
& \leq ( 1 - \eta ) \gamma \left( p^0 - \frac{ 1 }{ p^0 } \right) - ( H - 2 H F^{ \frac12 } ) \left( p^0 - \frac{ 1 }{ p^0 } \right) \\
& \leq ( 1 - \eta ) ( \gamma - H ) \left( p^0 - \frac{ 1 }{ p^0 } \right) \\
& \leq 0 ,
\end{align*}
so that we have $ \lambda_\eta \geq 0 $. We take $ p $ derivatives of $ \lambda_\eta $ to obtain
\begin{align*}
\frac{ \partial \lambda_\eta }{ \partial p_a } & = - \left( ( 1 - \eta ) \gamma \frac{ \partial p^0 }{ \partial p_a } - 2 \frac{ k^{ a b } p_b }{ p^0 } + \frac{ k^{ b c } p_b p_c }{ ( p^0 )^2 } \frac{ \partial p^0 }{ \partial p_a } \right) Z_\eta \\
& = - \left( ( 1 - \eta ) \gamma \frac{ p^a }{ p^0 } - 2 \frac{ k^{ a b } p_b }{ p^0 } + \frac{ k^{ b c } p_b p_c p^a }{ ( p^0 )^3 } \right) Z_\eta .
\end{align*}
Then, we apply \eqref{dpdp est} to obtain
\begin{align*}
\left| \frac{ \partial \lambda_\eta }{ \partial p_a } \right| \leq C ( \| g^{ - 1 } \| \| e^{ - 1 } \| + \| k^{ ** } \| \| e^{ - 1 } \| + \| k^{ ** } \| \| g^{ - 1 } \| \| e^{ - 1 } \|^3 ) Z_\eta ,
\end{align*}
which completes the proof of the lemma.
\end{proof}

We will need to compute several derivatives of momentum variables with respect to the metric. We need to use the following formula:
\begin{align}
\frac{ \partial g^{ a b } }{ \partial g^{ u v } } = \frac{ 1 }{ 2^{ \delta_{ u v } } } \left( \delta^a_u \delta^b_v + \delta^b_u \delta^a_v \right) , \label{est dgdg}
\end{align}
with no summation over repeated indices. We first use \eqref{p^0} to obtain
\begin{align}
\frac{ \partial p^0 }{ \partial g^{ u v } } = \frac{ 1 }{ 2 p^0 } \frac{ \partial g^{ a b } }{ \partial g^{ u v } } p_a p_b , \label{est dpdg0}
\end{align}
which implies that
\begin{align}
\left| \frac{ \partial p^0 }{ \partial g^{ u v } } \right| \leq C \max_{ a , b } \left| \frac{ p_a  p_b }{ p^0 } \right| \leq C \| e^{ - 1 } \| \langle p \rangle , \label{est dpdg}
\end{align}
where we used \eqref{dpdp est}. In a similar way we obtain
\begin{align}
\left| \frac{ \partial }{ \partial g^{ u v } } \left( \frac{ 1 }{ p^0 } \right) \right| \leq C \| e^{ - 1 } \| \frac{ \langle p \rangle }{ ( p^0 )^2 } . \label{est d1/pdg}
\end{align}
We also need the following computation:
\begin{align*}
\frac{ \partial h }{ \partial g^{ u v } } & = \frac{ \partial }{ \partial g^{ u v } } \left( \sqrt{ - 2 + 2 p^0 q^0 - 2 g^{ a b } p_a q_b } \right) \nonumber \\ 
& = \frac{ 1 }{ 2 h } \left( 2 \frac{ \partial p^0 }{ \partial g^{ u v } } q^0 + 2 p^0 \frac{ \partial q^0 }{ \partial g^{ u v } } - 2 \frac{ \partial g^{ a b } }{ \partial g^{ u v } } p_a q_b \right) \nonumber \\ 
& = \frac{ 1 }{ h } \left( \frac{ 1 }{ 2 p^0 } \frac{ \partial g^{ a b } }{ \partial g^{ u v } } p_a p_b q^0 + p^0 \frac{ 1 }{ 2 q^0 } \frac{ \partial g^{ a b } }{ \partial g^{ u v } } q_a q_b - \frac{ \partial g^{ a b } }{ \partial g^{ u v } } p_a q_b \right) \nonumber \\ 
& = \frac{ p^0 q^0 }{ 2 h } \left( \frac{ p_a p_b }{ ( p^0 )^2 } + \frac{ q_a q_b }{ ( q^0 )^2 } - 2 \frac{ p_a q_b }{ p^0 q^0 } \right) \frac{ 1 }{ 2^{ \delta_{ u v } } } ( \delta^a_u \delta^b_v + \delta^b_u \delta^a_v ) \nonumber \\
& = \frac{ p^0 q^0 }{ 2 h } \frac{ 1 }{ 2^{ \delta_{ u v } } } \left( \frac{ 2 p_u p_v }{ ( p^0 )^2 } + \frac{ 2 q_u q_v }{ ( q^0 )^2 } - \frac{ 2 p_u q_v }{ p^0 q^0 } - \frac{ 2 q_u p_v }{ p^0 q^0 } \right) \nonumber \\ 
& = \frac{ p^0 q^0 }{ h } \frac{ 1 }{ 2^{ \delta_{ u v } } } \left( \frac{ p_u }{ p^0 } - \frac{ q_u }{ q^0 } \right) \left( \frac{ p_v }{ p^0 } - \frac{ q_v }{ q^0 } \right) \nonumber \\ 
& = \frac{ p^0 q^0 }{ h } \frac{ 1 }{ 2^{ \delta_{ u v } } } \left( \frac{ \p_i }{ p^0 } - \frac{ \q_i }{ q^0 } \right) \left( \frac{ \p_j }{ p^0 } - \frac{ \q_j }{ q^0 } \right) { e^i }_u { e^j }_v ,
\end{align*}
where the indices $ u $ and $ v $ are not summed. We apply \eqref{p-q} to obtain
\begin{align}
\left| \frac{ \partial h }{ \partial g^{ u v } } \right| \leq C \| e^{ - 1 } \|^2 h ( p^0 q^0 )^2 , \label{est dhdg}
\end{align}
where $ C $ does not depend on the metric. The above estimates will be extensively used to obtain the local existence in Section \ref{sec lip}.

\subsection{Boltzmann equation in an FLRW spacetime}\label{sec flrw}
In this part we consider the single Boltzmann equation in a given spatially flat FLRW spacetime. In this case the metric is given by
\begin{align*}
^4 g_{ F } = - d t^2 + e^{ 2 \gamma t } ( d x^2 + d y^2 + d z^2 ) , 
\end{align*}
where we have $ g_{ a b } = e^{ 2 \gamma t } \delta_{ a b } $ and the constant $ \gamma $ reduces to the Hubble constant, $ H = \gamma $. We further assume that the scattering cross-section is constant, i.e.~$ b = 0 $ in \eqref{potential}. Then, we have from \eqref{B}--\eqref{Gamma}
\begin{align}
\frac{ \partial f }{ \partial t } + \lambda_\eta f = e^{ -3 \gamma t } \iint \frac{ h \sqrt{ s } }{ p^0 q^0 } ( f ( p' ) f ( q' ) - f ( p ) f ( q ) ) w^{ - 1 } ( q ) \, d \omega \, d q . \label{BF}
\end{align}
In the FLRW case, the shear parameter $ F $ vanishes so that $ \lambda_\eta \geq 0 $ by Lemma \ref{lem wdecay}, and we obtain
\begin{align}
\frac{ \partial f }{ \partial t } \leq C e^{ -3 \gamma t } \| f ( t ) \|^2_{ L^\infty } \int_{ \bbr^3 } w^{ - 1 } ( q ) \, d q , \label{BFest}
\end{align}
where we used \eqref{hspq}. We note that $ g^{ a b } q_a q_b = e^{ - 2 \gamma t } | q |^2 $ in the FLRW case, where $ | q |^2 = \sum_{ a = 1 }^3 ( q_a )^2 $. Applying Lemma \ref{lem wlower} with $ \eta = 0 $ we obtain 
\begin{align*}
\int_{ \bbr^3 } w^{ - 1 } ( q ) \, d q & \leq \int_{ \bbr^3 } \exp ( - e^{ \gamma t } \varphi ( e^{ - 2 \gamma t } | q |^2 ) ) \, d q \\ 
& \leq \int_{ e^{ - 2 \gamma t } | q |^2 \leq 1 } \exp \left( - \frac{ e^{ - \gamma t } | q |^2 }{ 1 + \sqrt{ 2 } } \right) \, d q + \int_{ e^{ - 2 \gamma t } | q |^2 \geq 1 } \exp \left( - \frac{ | q | }{ 1 + \sqrt{ 2 } } \right) \, d q . 
\end{align*}
The second term is clearly bounded uniformly in $ t $. For the first term we introduce the change of variables $ q = e^{ \frac{ \gamma t }{ 2 } } u $ to estimate as
\begin{align*}
\int_{ e^{ - 2 \gamma t } | q |^2 \leq 1 } \exp \left( - \frac{ e^{ - \gamma t } | q |^2 }{ 1 + \sqrt{ 2 } } \right) \, d q & \leq e^{ \frac{ 3 \gamma t }{ 2 } } \int_{ \bbr^3 } \exp \left( - \frac{ | u |^2 }{ 1 + \sqrt{ 2 } } \right) \, d u \leq C e^{ \frac{ 3 \gamma t }{ 2 } } . 
\end{align*} 
Applying the above estimates to \eqref{BFest}, we obtain 
\begin{align*}
\frac{ \partial f }{ \partial t } \leq C e^{ - \frac{ 3 \gamma t }{ 2 } } \| f ( t ) \|^2_{ L^\infty }  . 
\end{align*}
Since the factor $ e^{ - \frac{ 3 \gamma t }{ 2 } } $ is integrable on $ [ t_0 , \infty ) $, we are now able to apply the well-known arguments for the global existence of small solutions to the Boltzmann equation. We will obtain
\begin{align*}
\sup_{ t \in [ t_0 , \infty ) } \| f ( t ) \|_{ L^\infty } \leq C \varepsilon ,
\end{align*}
where $ \varepsilon $ denotes the smallness of initial data. This argument will be applied to the Bianchi I case in Sections \ref{sec local} and \ref{sec global}.

\section{Local existence}\label{sec local}
In this section we prove the local existence for the Einstein-Boltzmann system. We first introduce our main lemma of this paper, which shows that our weight function works nicely.

\begin{lemma} \label{lem west}
For any $ N \geq 0 $ and $ c > 0 $, we have
\begin{align}
( \det g )^{ - \frac12 } \int_{ \bbr^3 } \frac{ ( q^0 )^N }{ h^b } w^{ - c } ( q ) \, d q \leq C \frac{ Z_\eta^{ - \frac32 + \frac{ b }{ 2 } } }{ ( p^0 )^{ \frac{ b }{ 2 } } } ,
\end{align}
where $ C $ depends on $ N $ and $ c $.
\end{lemma}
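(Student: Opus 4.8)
The plan is to reduce the integral to a Gaussian-type computation in the orthonormal-frame variables $\hat q$ and then track the powers of $Z_\eta$ that appear from the natural change of variables. First I would use the lower bound from Lemma~\ref{lem wlower}, namely $w(t,q) \geq \exp( Z_\eta \varphi( g^{ab} q_a q_b ) )$, so that $w^{-c}(q) \leq \exp( - c Z_\eta \varphi( g^{ab} q_a q_b ) )$. Next I would pass to the orthonormal frame of Remark~\ref{rem ortho}, writing $q_a = \hat q_i {e^i}_a$, under which $g^{ab} q_a q_b = |\hat q|^2$ and $d q = (\det g)^{1/2}\, d\hat q$ (since $\det e^{-1} = (\det g^{-1})^{1/2} = (\det g)^{-1/2}$). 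This is exactly what cancels the prefactor $(\det g)^{-1/2}$, leaving
\begin{align*}
(\det g)^{-\frac12} \int_{\bbr^3} \frac{(q^0)^N}{h^b} w^{-c}(q)\, dq \leq \int_{\bbr^3} \frac{(q^0)^N}{h^b} e^{ - c Z_\eta \varphi(|\hat q|^2) }\, d\hat q ,
\end{align*}
where now $q^0 = \sqrt{1+|\hat q|^2}$ and $h = \sqrt{ -(p^0-q^0)^2 + |\hat p - \hat q|^2 }$ is the flat expression \eqref{hhat}.

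The next step handles the factor $h^{-b}$. Since $0 \leq b < 3$, the singularity of $h^{-b}$ at $\hat q = \hat p$ is locally integrable in $\bbr^3$, so $h^{-b}$ contributes a harmless local factor; the point is to extract the stated decay $(p^0)^{-b/2}$. I would use the bound $h^2 \leq g^{ab}(q_a-p_a)(q_b-p_b)$ from \eqref{hp-q}, i.e.\ $h \leq |\hat q - \hat p|$ in frame variables — but for a \emph{lower} bound on $h$ away from the collision point I instead split the $\hat q$-integral: on the region where $|\hat q - \hat p| \leq 1$ the Gaussian weight is comparable to $e^{-c' Z_\eta \varphi(|\hat p|^2)}$ up to constants and $\int_{|\hat q - \hat p|\leq 1} h^{-b}\, d\hat q \leq C$, while on $|\hat q-\hat p|\geq 1$ we have $h \geq c|\hat q - \hat p| \geq c$ by \eqref{hhat}, so $h^{-b} \leq C$ and the whole contribution is dominated by $\int_{\bbr^3}(q^0)^N e^{-cZ_\eta\varphi(|\hat q|^2)}\,d\hat q$. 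The weight $\varphi(r) = \min(r,\sqrt r)/(1+\sqrt2)$ is what makes this Gaussian-type estimate work: splitting $\bbr^3$ at $|\hat q| = 1$, on $|\hat q|\leq 1$ one has $\varphi(|\hat q|^2) = |\hat q|^2/(1+\sqrt2)$ and the change of variables $\hat q = Z_\eta^{-1/2} u$ yields a factor $Z_\eta^{-3/2}$ times a convergent Gaussian integral (here $(q^0)^N$ is bounded since $|\hat q|\leq 1$), whereas on $|\hat q|\geq 1$ one has $\varphi(|\hat q|^2) = |\hat q|/(1+\sqrt2)$ and $\int_{|\hat q|\geq 1}(q^0)^N e^{-c Z_\eta |\hat q|}\,d\hat q$ is also $O(Z_\eta^{-3})$, hence $O(Z_\eta^{-3/2})$ since $Z_\eta \geq 1$.

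Combining the two regions gives the overall bound $C Z_\eta^{-3/2} e^{-c' Z_\eta \varphi(|\hat p|^2)}$ plus, from the first region, the local-singularity contribution which carries no extra $\hat p$-decay — so more care is needed to get the sharp $Z_\eta^{b/2}(p^0)^{-b/2}$ gain rather than just $Z_\eta^{-3/2}$. The clean way is: write $h^{-b} \leq (p^0 q^0)^{b/2} |\hat p - \hat q|^{-b} \cdot (\text{something})$ — more precisely, from \eqref{hp-q}, $h^2 \geq (p^0 q^0)^{-1} g^{ab}(p_a - q_a)(p_b - q_b) = (p^0 q^0)^{-1}|\hat p - \hat q|^2$, so $h^{-b} \leq (p^0 q^0)^{b/2} |\hat p - \hat q|^{-b}$. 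Then
\begin{align*}
\int_{\bbr^3} \frac{(q^0)^N}{h^b} e^{-c Z_\eta \varphi(|\hat q|^2)}\, d\hat q \leq (p^0)^{b/2} \int_{\bbr^3} \frac{(q^0)^{N + b/2}}{|\hat p - \hat q|^b} e^{-c Z_\eta \varphi(|\hat q|^2)}\, d\hat q .
\end{align*}
Wait — this produces $(p^0)^{+b/2}$, the wrong sign. The resolution is that the claimed estimate must exploit that the Gaussian kernel in $\hat q$ is sharply concentrated near $\hat q = 0$ at scale $Z_\eta^{-1/2} \ll 1$, so effectively $|\hat p - \hat q| \approx |\hat p|$ and $q^0 \approx 1$ on the bulk of the mass. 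Thus I would bound $|\hat p - \hat q|^{-b} \leq C(|\hat p|^{-b} + |\hat q|^{-b})$ and treat the two pieces separately: the $|\hat p|^{-b} = (\text{roughly})(p^0)^{-b}$ piece combines with the $(p^0)^{b/2}$ to give $(p^0)^{-b/2}$ as desired, while the $|\hat q|^{-b}$ piece is integrable near $0$ (since $b<3$) and the scaling $\hat q = Z_\eta^{-1/2}u$ turns $\int |\hat q|^{-b} e^{-cZ_\eta|\hat q|^2}d\hat q$ into $Z_\eta^{-(3-b)/2}$ times a constant, i.e.\ the remaining $Z_\eta$-power is $Z_\eta^{-3/2 + b/2}$, matching the statement. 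The main obstacle, and the step I expect to require the most care, is precisely this interplay: bookkeeping the exponents so that $h^{-b}$ yields exactly $Z_\eta^{b/2}$ relative to the $b=0$ case and exactly $(p^0)^{-b/2}$, handling the near- and far-field regions of both $\hat p$ and $\hat q$ uniformly, and checking the small-$\hat p$ regime where $|\hat p|^{-b}$ and $(p^0-1)^{-b/2}$ differ — there one uses instead that $h \gtrsim 1$ generically and no $\hat p$-decay is needed since $(p^0)^{-b/2} \approx 1$.
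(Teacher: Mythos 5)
Your setup is the same as the paper's: pass to the orthonormal frame so that $dq=(\det g)^{1/2}d\hat{q}$ cancels the prefactor, use Lemma \ref{lem wlower}, absorb $(q^0)^{N+b/2}$ into half of the exponential (possible since $Z_\eta\geq 1$), and use \eqref{hp-q} in the form $h^{-b}\leq (p^0q^0)^{b/2}|\hat{p}-\hat{q}|^{-b}$, after which the whole problem is to convert the resulting $(p^0)^{+b/2}\int|\hat{p}-\hat{q}|^{-b}\exp(-\tfrac{c}{2}Z_\eta\varphi(|\hat{q}|^2))\,d\hat{q}$ into $(p^0)^{-b/2}Z_\eta^{-3/2+b/2}$. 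That conversion is precisely where your argument has a genuine gap: the pointwise inequality $|\hat{p}-\hat{q}|^{-b}\leq C(|\hat{p}|^{-b}+|\hat{q}|^{-b})$ on which your "two pieces" rest is false. Take $\hat{q}\to\hat{p}$ with $\hat{p}\neq 0$ fixed: the left side blows up while the right side stays bounded. Because of this, your decomposition never confronts the near-diagonal region $\hat{q}\approx\hat{p}$ when $|\hat{p}|$ is not small, which is exactly the delicate case. Your closing remark that in the small-$|\hat{p}|$ regime "$h\gtrsim 1$ generically" is also incorrect: there $h\approx|\hat{p}-\hat{q}|$ is small on the bulk of the Gaussian mass, and the singularity is integrated over a ball of radius $\sim Z_\eta^{-1/2}$, which is what produces $Z_\eta^{-3/2+b/2}$.

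The repair is a decomposition of the \emph{domain}, not of the kernel, and it is what the paper does. For $|\hat{p}|\geq 2Z_\eta^{-1/2}$, split into $|\hat{p}-\hat{q}|\geq\tfrac12|\hat{p}|$, where $|\hat{p}-\hat{q}|^{-b}\leq C|\hat{p}|^{-b}$ and the Gaussian integral gives $Z_\eta^{-3/2}$, and $|\hat{p}-\hat{q}|\leq\tfrac12|\hat{p}|$, where $|\hat{q}|\geq\tfrac12|\hat{p}|$ forces the exponential factor to be at most $\exp(-\tfrac{c}{2}Z_\eta\varphi(\tfrac14|\hat{p}|^2))$ while $\int_{|\hat{p}-\hat{q}|\leq|\hat{p}|/2}|\hat{p}-\hat{q}|^{-b}d\hat{q}\leq C|\hat{p}|^{3-b}$; one then checks (separately for $\varphi(r)=r$ and $\varphi(r)=\sqrt{r}$) that $\exp(-c'Z_\eta\varphi(\tfrac14|\hat{p}|^2))|\hat{p}|^{3-b}\leq CZ_\eta^{-3/2}|\hat{p}|^{-b}$, and finally converts $|\hat{p}|^{-b}$ into $Z_\eta^{b/2}(p^0)^{-b}$ using $p^0\leq CZ_\eta^{1/2}|\hat{p}|$ on this regime. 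For $|\hat{p}|\leq 2Z_\eta^{-1/2}$ one has $p^0\leq C$, and the singular part is $\int_{|\hat{p}-\hat{q}|\leq 5Z_\eta^{-1/2}}|\hat{p}-\hat{q}|^{-b}d\hat{q}\leq CZ_\eta^{-3/2+b/2}$. Your scaling computations ($\hat{q}=Z_\eta^{-1/2}u$, the two regimes of $\varphi$, $b<3$) are all correct and reusable, but without the domain split the central step does not go through.
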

\begin{proof}
We apply Lemma \ref{lem wlower} and consider in an orthonormal frame to have
\begin{align*}
( \det g )^{ - \frac12 } \int_{ \bbr^3 } \frac{ ( q^0 )^N }{ h^b } w^{ - c } ( q ) \, d q \leq \int_{ \bbr^3 } \frac{ ( q^0 )^N }{ h^b } \exp ( - c Z_\eta \varphi ( | \q |^2 ) ) \, d \q ,
\end{align*}
where $ q^0 $ and $ h $ on the right hand side are understood as functions of $ \p $ and $ \q $. Note that
\begin{align*}
( q^0 )^{ \frac{ b }{ 2 } + N } \exp \left( - \frac{ c }{ 2 } Z_\eta \varphi ( | \q |^2 ) \right) \leq ( q^0 )^{ \frac{ b }{ 2 } + N } \exp \left( - \frac{ c }{ 2 } \varphi ( | \q |^2 ) \right) \leq C ,
\end{align*}
since $ Z_\eta \geq 1 $. Hence, we have
\begin{align*}
\int_{ \bbr^3 } \frac{ ( q^0 )^N }{ h^b } \exp ( - c Z_\eta \varphi ( | \q |^2 ) ) \, d \q & \leq \int_{ \bbr^3 } \frac{ ( p^0 )^{ \frac{ b }{ 2 } } ( q^0 )^{ \frac{ b }{ 2 } + N } }{ | \p - \q |^b } \exp ( - c Z_\eta \varphi ( | \q |^2 ) ) \, d \q \\
& \leq C ( p^0 )^{ \frac{ b }{ 2 } } \int_{ \bbr^3 } \frac{ 1 }{ | \p - \q |^b } \exp \left( - \frac{ c }{ 2 } Z_\eta \varphi ( | \q |^2 ) \right) d \q .
\end{align*}
Below, we estimate the last integral, separately for (1) $ | \p | \leq 2 Z_\eta^{ - \frac12 } $ and (2) $ | \p | \geq 2 Z_\eta^{ - \frac12 } $ . \medskip

\noindent (1) For $ | \p | \leq 2 Z_\eta^{ - \frac12 } $, we have
\begin{align*}
& ( p^0 )^{ \frac{ b }{ 2 } } \int_{ \bbr^3 } \frac{ 1 }{ | \p - \q |^b } \exp \left( - \frac{ c }{ 2 } Z_\eta \varphi ( | \q |^2 ) \right) d \q \\
& = ( p^0 )^{ \frac{ b }{ 2 } } \int_{ | \q | \leq 3 Z_\eta^{ - \frac12 } } \frac{ 1 }{ | \p - \q |^b } \exp \left( - \frac{ c }{ 2 } Z_\eta \varphi ( | \q |^2 ) \right) d \q + ( p^0 )^{ \frac{ b }{ 2 } } \int_{ | \q | \geq 3 Z_\eta^{ - \frac12 } } \frac{ 1 }{ | \p - \q |^b } \exp \left( - \frac{ c }{ 2 } Z_\eta \varphi ( | \q |^2 ) \right) d \q .
\end{align*}
The first integral on the right hand side is estimated as
\begin{align*}
& ( p^0 )^{ \frac{ b }{ 2 } } \int_{ | \q | \leq 3 Z_\eta^{ - \frac12 } } \frac{ 1 }{ | \p - \q |^b } \exp \left( - \frac{ c }{ 2 } Z_\eta \varphi ( | \q |^2 ) \right) d \q \\
& \leq C ( p^0 )^{ \frac{ b }{ 2 } } \int_{ | \p - \q | \leq 5 Z_\eta^{ - \frac12 } } \frac{ 1 }{ | \p - \q |^b } d \q \\
& \leq C ( p^0 )^{ \frac{ b }{ 2 } } Z_\eta^{ - \frac32 + \frac{ b }{ 2 } } ,
\end{align*}
while the second integral is estimated as
\begin{align*}
& ( p^0 )^{ \frac{ b }{ 2 } } \int_{ | \q | \geq 3 Z_\eta^{ - \frac12 } } \frac{ 1 }{ | \p - \q |^b } \exp \left( - \frac{ c }{ 2 } Z_\eta \varphi ( | \q |^2 ) \right) d \q \\
& \leq C ( p^0 )^{ \frac{ b }{ 2 } } \int_{ | \q | \geq 3 Z_\eta^{ - \frac12 } } \frac{ 1 }{ ( | \q | - | \p | )^b } \exp \left( - \frac{ c }{ 2 } Z_\eta \varphi ( | \q |^2 ) \right) d \q \\
& \leq C ( p^0 )^{ \frac{ b }{ 2 } } Z_\eta^{ \frac{ b }{ 2 } } \int_{ \bbr^3 } \exp \left( - \frac{ c }{ 2 } Z_\eta \varphi ( | \q |^2 ) \right) d \q \\
& \leq C ( p^0 )^{ \frac{ b }{ 2 } } Z_\eta^{ \frac{ b }{ 2 } } \left( \int_{ | \q | \leq 1 } \exp \left( - \frac{ c }{ 2 } Z_\eta \frac{ | \q |^2 }{ 1 + \sqrt{ 2 } } \right) d \q + \int_{ | \q | \geq 1 } \exp \left( - \frac{ c }{ 2 } Z_\eta \frac{ | \q | }{ 1 + \sqrt{ 2 } } \right) d \q \right) \\
& \leq C ( p^0 )^{ \frac{ b }{ 2 } } Z_\eta^{ \frac{ b }{ 2 } } ( Z_\eta^{ - \frac32 } + Z_\eta^{ - 3 } ) \\
& \leq C ( p^0 )^{ \frac{ b }{ 2 } } Z^{ - \frac32 + \frac{ b }{ 2 } } .
\end{align*}
Since $ | \p | \leq 2 Z_\eta^{ - \frac12 } \leq C $, we have $ p^0 \leq C $ and obtain the desired result. \medskip

\noindent (2) For $ | \p | \geq 2 Z_\eta^{ - \frac12 } $, we have
\begin{align*}
& ( p^0 )^{ \frac{ b }{ 2 } } \int_{ \bbr^3 } \frac{ 1 }{ | \p - \q |^b } \exp \left( - \frac{ c }{ 2 } Z_\eta \varphi ( | \q |^2 ) \right) d \q \\
& = ( p^0 )^{ \frac{ b }{ 2 } } \int_{ | \p - \q | \geq \frac12 | \p | } \frac{ 1 }{ | \p - \q |^b } \exp \left( - \frac{ c }{ 2 } Z_\eta \varphi ( | \q |^2 ) \right) d \q + ( p^0 )^{ \frac{ b }{ 2 } } \int_{ | \p - \q | \leq \frac12 | \p | } \frac{ 1 }{ | \p - \q |^b } \exp \left( - \frac{ c }{ 2 } Z_\eta \varphi ( | \q |^2 ) \right) d \q .
\end{align*}
The first integral on the right hand side is estimated as follows:
\begin{align*}
& ( p^0 )^{ \frac{ b }{ 2 } } \int_{ | \p - \q | \geq \frac12 | \p | } \frac{ 1 }{ | \p - \q |^b } \exp \left( - \frac{ c }{ 2 } Z_\eta \varphi ( | \q |^2 ) \right) d \q \\
& \leq C \frac{ ( p^0 )^{ \frac{ b }{ 2 } } }{ | \p |^b } \int_{ \bbr^3 } \exp \left( - \frac{ c }{ 2 } Z_\eta \varphi ( | \q |^2 ) \right) d \q \\
& \leq C \frac{ ( p^0 )^{ \frac{ b }{ 2 } } }{ | \p |^b } Z_\eta^{ - \frac32 } .
\end{align*}
For the second integral we note that if $ | \p - \q | \leq \frac12 | \p | $, then $| \q | \geq | \p | - | \p - \q | \geq \frac12 | \p | $, so that we have
\begin{align*}
\varphi ( | \q |^2 ) \geq \varphi \left( \frac14 | \p |^2 \right) ,
\end{align*}
since $ \varphi $ is increasing. Now, we have
\begin{align*}
& ( p^0 )^{ \frac{ b }{ 2 } } \int_{ | \p - \q | \leq \frac12 | \p | } \frac{ 1 }{ | \p - \q |^b } \exp \left( - \frac{ c }{ 2 } Z_\eta \varphi ( | \q |^2 ) \right) d \q \\
& \leq ( p^0 )^{ \frac{ b }{ 2 } } \exp \left( - \frac{ c }{ 2 } Z_\eta \varphi \left( \frac14 | \p |^2 \right) \right) \int_{ | \p - \q | \leq \frac12 | \p | } \frac{ 1 }{ | \p - \q |^b } d \q \\
& \leq ( p^0 )^{ \frac{ b }{ 2 } } \exp \left( - \frac{ c }{ 2 } Z_\eta \varphi \left( \frac14 | \p |^2 \right) \right) | \p |^{ 3 - b } .
\end{align*}
If $ \frac14 | \p |^2 \leq \frac12 | \p | $, then 
\begin{align*}
& ( p^0 )^{ \frac{ b }{ 2 } } \exp \left( - \frac{ c }{ 2 } Z_\eta \varphi \left( \frac14 | \p |^2 \right) \right) | \p |^{ 3 - b } \leq ( p^0 )^{ \frac{ b }{ 2 } } \exp \left( - c_1 Z_\eta | \p |^2 \right) | \p |^{ 3 - b } \leq C ( p^0 )^{ \frac{ b }{ 2 } } Z_\eta^{ - \frac32 } | \p |^{ - b } ,
\end{align*}
while if $ \frac14 | \p |^2 \geq \frac12 | \p | $, then
\begin{align*}
& ( p^0 )^{ \frac{ b }{ 2 } } \exp \left( - \frac{ c }{ 2 } Z_\eta \varphi \left( \frac14 | \p |^2 \right) \right) | \p |^{ 3 - b } \leq ( p^0 )^{ \frac{ b }{ 2 } } \exp \left( - c_2 Z_\eta | \p | \right) | \p |^{ 3 - b } \leq C ( p^0 )^{ \frac{ b }{ 2 } } Z_\eta^{ - 3 } | \p |^{ - b } ,
\end{align*}
for some $ c_1 , c_2 > 0 $. Since $ Z_\eta \geq 1 $, we obtain for the second integral
\begin{align*}
& ( p^0 )^{ \frac{ b }{ 2 } } \exp \left( - \frac{ c }{ 2 } Z_\eta \varphi \left( \frac14 | \p |^2 \right) \right) | \p |^{ 3 - b } \leq C ( p^0 )^{ \frac{ b }{ 2 } } Z_\eta^{ - \frac32 } | \p |^{ - b } ,
\end{align*}
which is the same as for the first integral. Now, since $ | \p | \geq 2 Z_\eta^{ - \frac12 } $, we have
\begin{align*}
p^0 = \sqrt{ 1 + | \p |^2 } \leq \sqrt{ \frac14 Z_\eta | \p |^2 + | \p |^2 } \leq C Z_\eta^{ \frac12 } | \p | ,
\end{align*}
which implies that
\begin{align*}
( p^0 )^{ \frac{ b }{ 2 } } Z_\eta^{ - \frac32 } | \p |^{ - b } \leq C \frac{ 1 }{ ( p^0 )^{ \frac{ b }{ 2 } } } Z_\eta^{ - \frac32 + \frac{ b }{ 2 } } .
\end{align*}
This completes the proof of the lemma.
\end{proof}

\subsection{Iteration}\label{sec it}
The local existence will be proved by using an iteration. We will first consider the iterations for the Boltzmann equation and the Einstein equations in Section \ref{sec itB} and Section \ref{sec itE}, respectively. The iteration for the coupled Einstein-Boltzmann system will be given in Section \ref{sec li}, and we will use this to obtain the local existence in Section \ref{sec lp}. The local existence result will be given in Proposition \ref{prop local}.

\subsubsection{Iteration for the Boltzmann equation}\label{sec itB}
Let us first consider the iteration for the single Boltzmann equation. We assume that there exist an interval $ [ t_0 , T ] $ and constants $ c_1 \geq 1 $ and $ 0 \leq \eta < 1 $ such that $ g^{ a b } $ and $ k^{ a b } $ exist on $ [ t_0 , T ] $, are symmetric, and satisfy
\begin{align}
\frac{ 1 }{ c_1 } | p |^2 \leq Z^2 g^{ a b } p_a p_b \leq c_1 | p |^2 , \qquad \| k^{ * * } \| \leq c_1 Z^{ - 2 } , \qquad H \geq \gamma , \qquad F \leq \frac{ \eta^2 }{ 4 } , \label{assum lb}
\end{align}
where $ H $ and $ F $ are defined by \eqref{H} and \eqref{F}, respectively. Under the assumption \eqref{assum lb} we obtain the following lemma.

\begin{lemma}\label{lem assum lb}
Suppose that there exist an interval $ [ t_0 , T ] $ and constants $ c_1 \geq 1 $ and $ 0 \leq \eta < 1 $ such that $ g^{ a b } $ and $ k^{ a b } $ exist on $ [ t_0 , T ] $, are symmetric, and satisfy \eqref{assum lb}. Then, we have
\begin{align*}
\| g^{ - 1 } \| \leq C Z^{ - 2 } , \qquad Z^6 \det g^{ - 1 } \geq \frac{ 1 }{ C } , \qquad \| e^{ - 1 } \| \leq C Z , \qquad \frac{ 1 }{ C } | p |^2 \leq Z^2 k^{ a b } p_a p_b \leq C | p |^2 , 
\end{align*}
on $ [ t_0 , T ] $. 
\end{lemma}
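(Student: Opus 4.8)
The strategy is to turn each of the four hypotheses in \eqref{assum lb} into the four asserted conclusions, using only elementary linear algebra of symmetric positive definite $3\times 3$ matrices. The key observation is that the two-sided bound $\frac1{c_1}|p|^2 \le Z^2 g^{ab}p_ap_b \le c_1|p|^2$ says precisely that the eigenvalues of the symmetric positive definite matrix $Z^2 g^{ab}$ all lie in $[1/c_1,c_1]$. From this I would extract, in order: (i) $\|g^{-1}\| \le C Z^{-2}$, since the largest entry of a symmetric positive definite matrix is bounded by its largest eigenvalue; (ii) $Z^6 \det g^{-1} = \det(Z^2 g^{ab}) \ge (1/c_1)^3 \ge 1/C$, since the determinant is the product of the eigenvalues; and conversely one also gets an upper bound $Z^6\det g^{-1}\le c_1^3$, hence via $\|g\|\le C(\det g^{-1})^{-1}\|g^{-1}\|^2$ (recorded in the notation section) a bound $\|g\| \le C Z^2$, i.e. the eigenvalues of $Z^{-2}g_{ab}$ are comparable to $1$ as well.

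Next, for (iii) I would recall that in the orthonormal frame of Remark \ref{rem ortho} one has $\delta^{ij} = {e^i}_a{e^j}_b g^{ab}$, so $e^{-1}(e^{-1})^{T} $ is a congruence realizing $\delta^{ij}$ from $g^{ab}$; equivalently the singular values of $e^{-1}$ are the square roots of the eigenvalues of $(g^{ab})^{-1}$ up to the congruence, which are of order $Z$ by (i) and the matching lower bound on the eigenvalues of $Z^2 g^{ab}$. Concretely, $\|e^{-1}\|^2 \le C \|g^{-1}\|^{-1}$ is false in general, so instead I would argue directly: $\delta^{ij}{e_i}^a{e_j}^b = g^{ab}$ forces $\|e\|^2 \le C\|g^{-1}\| \le CZ^{-2}$, and since $e$ and $e^{-1}$ are matrix inverses of equal-order $3\times 3$ matrices with determinant of $e^{-1}$ equal to $(\det g^{-1})^{1/2} \approx Z^{-3}$, one gets $\|e^{-1}\| \le C \|e\|^2 (\det e)^{-1} \le C Z^{-4} Z^{3} \cdot(\text{correction})$; this needs care, so more cleanly I would just diagonalize $Z^2 g^{ab}$, note that $Z^2 g^{ab}$ and hence its inverse-square-root has eigenvalues in $[c_1^{-1/2},c_1^{1/2}]$, build the orthonormal frame from that square root, and observe that the specific frame in Remark \ref{rem ortho} differs from it by a rotation (both are orthonormal frames for the same metric), so $\|e^{-1}\|$ is controlled by the operator norm of $(Z^2g^{ab})^{1/2}\cdot Z^{-1}$, giving $\|e^{-1}\|\le C Z$.

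Finally, for (iv) I would use the decomposition $k^{ab} = Hg^{ab} + \sigma^{ab}$ together with $H\ge\gamma>0$ and $F = \sigma_{ab}\sigma^{ab}/(4H^2)\le \eta^2/4 < 1/4$. The constraint-type inequality $|\sigma^{ab}p_ap_b| \le (\sigma_{ab}\sigma^{ab})^{1/2} g^{cd}p_cp_d = 2HF^{1/2} g^{cd}p_cp_d$ (already used in the proof of Lemma \ref{lem wdecay}) gives $k^{ab}p_ap_b \ge (H - 2HF^{1/2}) g^{ab}p_ap_b \ge H(1-\eta) g^{ab}p_ap_b > 0$ for the lower bound, and $k^{ab}p_ap_b \le H(1+\eta) g^{ab}p_ap_b$ for the upper bound; but $H$ itself must be bounded above, which comes from the constraint \eqref{constraint1}: $6H^2 = 2\rho + 2\Lambda + \sigma_{ab}\sigma^{ab} \le 2\rho + 2\Lambda + 4H^2 F \le 2\rho + 2\Lambda + H^2\eta^2$, hence $H^2 \le (2\rho+2\Lambda)/(6-\eta^2)$, and since this appears inside an iteration where $\rho$ is controlled, $H \le C$. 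Combining with the two-sided bound on $Z^2 g^{ab}p_ap_b$ from \eqref{assum lb} yields $\frac1C|p|^2 \le Z^2 k^{ab}p_ap_b \le C|p|^2$.

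The main obstacle I anticipate is (iii): getting $\|e^{-1}\|\le CZ$ cleanly for the \emph{explicit} frame of Remark \ref{rem ortho} (whose entries involve ratios of minors of $g^{-1}$) rather than for an arbitrary orthonormal frame. The cleanest route is the rotation argument — any two orthonormal frames for the same positive definite form are related by an $O(3)$ matrix, which preserves the relevant norms up to a dimensional constant — so that it suffices to bound one convenient frame (e.g. the symmetric square root), and that bound is immediate from the spectral bounds already established in steps (i)–(ii).
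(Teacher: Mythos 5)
Your steps (i), (ii) and the lower bound in (iv) follow essentially the same route as the paper: diagonal entries of $Z^2 g^{ab}$ read off from the quadratic-form bounds, off-diagonal entries controlled by positive definiteness, the determinant as a product of eigenvalues bounded below by $c_1^{-3}$, and the decomposition $k^{ab}=Hg^{ab}+\sigma^{ab}$ with $|\sigma^{ab}p_ap_b|\le 2HF^{1/2}g^{cd}p_cp_d$ giving $k^{ab}p_ap_b\ge H(1-\eta)g^{ab}p_ap_b$. For step (iii) you take a genuinely different and cleaner route: the paper estimates the explicit entries of the frame in Remark \ref{rem ortho} using the minor inequalities $1/g^{11}\le g^{22}g^{33}/\det g^{-1}$ and $1/(g^{11}g^{22}-(g^{12})^2)\le g^{33}/\det g^{-1}$ together with \eqref{lem assum lb1}--\eqref{lem assum lb2}, whereas you observe that any orthonormal frame for $g^{ab}$ equals an orthogonal matrix times $(g_{ab})^{1/2}$, whose operator norm is $\lambda_{\min}(g^{ab})^{-1/2}\le c_1^{1/2}Z$. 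Your argument is correct and more conceptual; the paper's is more hands-on but has the advantage of being reusable entry-by-entry later (cf.\ Remark \ref{rem assum}, which records exactly which hypotheses feed which conclusion).

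One point to fix: for the upper bound $Z^2k^{ab}p_ap_b\le C|p|^2$ you propose to bound $H$ from above via the constraint \eqref{constraint1} and a control on $\rho$. Neither is available here: this lemma is invoked inside the iteration, where $g^{ab}$ and $k^{ab}$ are merely assumed to satisfy \eqref{assum lb} and need not satisfy the Hamiltonian constraint, and no bound on $\rho$ is among the hypotheses. The detour is also unnecessary, since \eqref{assum lb} already assumes $\|k^{**}\|\le c_1Z^{-2}$, which gives $Z^2k^{ab}p_ap_b\le 9c_1\max_a|p_a|^2\le C|p|^2$ immediately; this is exactly how the paper concludes ("Since $Z^2k^{ab}$ is bounded by the assumption, the last inequality is obvious"). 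With that substitution your proof is complete.
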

\begin{proof}
Note that the diagonal elements $ g^{ a a } $ are bounded by $ c_1 Z^{ - 2 } $ by the assumption \eqref{assum lb}. Note that $ g^{ a b } $ is positive definite, so that non-diagonal elements are bounded by a suitable sum of diagonal elements, and we obtain the first result:
\begin{align}
\| g^{ - 1 } \| \leq C Z^{ - 2 } . \label{lem assum lb1}
\end{align}
Notice that the determinant $ Z^6 \det g^{ - 1 } $ is the product of the eigenvalues of the matrix $ Z^2 g^{ a b } $, and the assumption \eqref{assum lb} shows that all the eigenvalues are bounded below by $ 1 / c_1 $. Hence, we obtain the second result:
\begin{align}
Z^6 \det g^{ - 1 } \geq \frac{ 1 }{ C } . \label{lem assum lb2}
\end{align}
Since $ g^{ a b } $ is positive definite, we obtain the following inequalities:
\begin{align}
\frac{ 1 }{ g^{ 1 1 } } \leq \frac{ g^{ 2 2 } g^{ 3 3 } }{ \det g^{ - 1 } } , \qquad \frac{ 1 }{ g^{ 1 1 } g^{ 2 2 } - ( g^{ 1 2 } )^2 } \leq \frac{ g^{ 3 3 } }{ \det g^{ - 1 } } . \label{lem assum lb3}
\end{align}
Now, the estimate for $ e^{ - 1 } $ is obtained by applying \eqref{lem assum lb1}--\eqref{lem assum lb3} to the explicit formula of Remark \ref{rem ortho}:
\begin{align}
\| e^{ - 1 } \| \leq C Z . \label{lem assum lb4}
\end{align}
For $ k^{ a b } $, we consider the decomposition \eqref{k decom} to obtain
\begin{align*}
k^{ a b } p_a p_b & = H g^{ a b } p_a p_b + \sigma^{ a b } p_a p_b \\
& \geq H g^{ a b } p_a p_b - ( \sigma^{ a b } \sigma_{ a b } )^{ \frac12 } g^{ c d } p_c p_d \\
& = H g^{ a b } p_a p_b - 2 H F^{ \frac12 } g^{ c d } p_c p_d \\
& \geq H ( 1 - \eta ) g^{ a b } p_a p_b . 
\end{align*}
Since $ H \geq \gamma $ and $ \eta < 1 $ by the assumption, we obtain the positive definiteness of $ k^{ a b } $, i.e.~
\begin{align*}
Z^2 k^{ a b } p_a p_b \geq \frac{ 1 }{ C } | p |^2 .
\end{align*}
Since $ Z^2 k^{ a b } $ is bounded by the assumption \eqref{assum lb}, the last inequality of the lemma is obvious. This completes the proof.
\end{proof}

\begin{remark}\label{rem assum}
Note that in the proof of Lemma \ref{lem assum lb} we used only the first assumption in \eqref{assum lb} to obtain \eqref{lem assum lb1} and \eqref{lem assum lb2}. We then used \eqref{lem assum lb1} and \eqref{lem assum lb2} to obtain \eqref{lem assum lb4}.
\end{remark}

\begin{lemma}\label{lem bracket}
Suppose that there exist an interval $ [ t_0 , T ] $ and constants $ c_1 \geq 1 $ and $ 0 \leq \eta < 1 $ such that $ g^{ a b } $ and $ k^{ a b } $ exist on $ [ t_0 , T ] $, are symmetric, and satisfy \eqref{assum lb}. Then, we have
\begin{align*}
\langle p \rangle \leq C \langle p' \rangle \langle q' \rangle , 
\end{align*}
where $ C $ depends only on $ c_1 $. 
\end{lemma}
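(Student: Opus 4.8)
The plan is to exploit the energy-momentum conservation \eqref{conserved} together with the mass-shell identity \eqref{p^0} to control $|p|$ by $p'^0$ and $q'^0$, and then to convert the bound on $p'^0, q'^0$ into a bound on $\langle p'\rangle, \langle q'\rangle$ using the two-sided estimate $\frac{1}{c_1}|p|^2 \leq Z^2 g^{ab} p_a p_b \leq c_1 |p|^2$ from \eqref{assum lb}. Concretely, from \eqref{p^0} and \eqref{assum lb} one has, for any momentum $\ell$,
\begin{align*}
\frac{1}{c_1} Z^{-2} |\ell|^2 \leq g^{ab}\ell_a\ell_b = (\ell^0)^2 - 1 \leq c_1 Z^{-2}|\ell|^2 ,
\end{align*}
so that $\langle \ell\rangle^2 = 1 + |\ell|^2$ is comparable to $1 + Z^2((\ell^0)^2-1)$, and in particular $(\ell^0)^2 \leq 1 + c_1 Z^{-2}|\ell|^2$ gives $|\ell|^2 \geq c_1^{-1} Z^2((\ell^0)^2 - 1)$, hence $\langle \ell\rangle \leq C (\ell^0)$ with $C$ depending on $c_1$ and on $Z$; but since $Z \geq 1$ on $[t_0,\infty)$ one must be careful to keep the constant independent of $t$ — see the remark below.

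First I would record the elementary bound $p^0 \leq C\, p'^0 q'^0$ for the Minkowski metric, which is exactly \eqref{pp'q'} and whose constant does not depend on the metric; this is the relativistic analogue of the Newtonian identity $|p|^2 \le |p'|^2 + |q'|^2$ and follows from $p^0 + q^0 = p'^0 + q'^0$ together with $p'^0, q'^0 \geq 1$. Second, I would pass from the general metric $g^{ab}$ to an orthonormal frame via Remark \ref{rem ortho}: writing $p_a = \p_i {e^i}_a$ etc., the energy $p^0$ in the metric equals $\sqrt{1 + |\p|^2}$, so \eqref{pp'q'} in the hatted variables reads $\p^0 \leq C\, \p'^0 \q'^0$ with $\p^0 = \sqrt{1+|\p|^2}$. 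Third, I would relate $\langle p\rangle$ to $\p^0$ and back: by the first inequality in \eqref{assum lb}, $|p|^2 \le c_1 Z^{-2}|\p|^2 = c_1 Z^{-2}((\p^0)^2 - 1) \le c_1 Z^{-2}(\p^0)^2$, hence $\langle p\rangle^2 = 1 + |p|^2 \le 1 + c_1 Z^{-2}(\p^0)^2 \le c_1 (\p^0)^2$ using $Z \geq 1$ and $\p^0 \ge 1$; thus $\langle p\rangle \le \sqrt{c_1}\,\p^0$. Conversely, from the lower bound $\frac{1}{c_1}Z^{-2}|\p|^2 \le |p|^2$ one gets $(\p^0)^2 = 1 + |\p|^2 \le 1 + c_1 Z^2 |p|^2$; here the factor $Z^2$ is harmful, so instead I would apply the same comparison directly to $p'$ and $q'$, whose metric norms are controlled by \eqref{assum lb} in the same way, to get $\p'^0 \le C\langle p'\rangle$ and $\q'^0 \le C\langle q'\rangle$ — wait, this again has the wrong direction; the correct move is the one used implicitly in \eqref{assum lb}, namely that $|p'|^2 \ge \frac{1}{c_1} Z^2 g^{ab}p'_a p'_b = \frac{1}{c_1}Z^2((\p'^0)^2 - 1)$, so $\langle p'\rangle^2 \ge \frac{1}{c_1}Z^2((\p'^0)^2-1) + 1 \ge \frac{1}{c_1}((\p'^0)^2 - 1) + 1 \ge \frac{1}{c_1}(\p'^0)^2$ (using $Z\ge1$ and adjusting constants), i.e. $\p'^0 \le \sqrt{c_1}\,\langle p'\rangle$, and similarly for $q'$.

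Putting these together: $\langle p\rangle \le \sqrt{c_1}\,\p^0 \le \sqrt{c_1}\,C\,\p'^0\q'^0 \le \sqrt{c_1}\,C\,c_1\,\langle p'\rangle\langle q'\rangle$, which is the claim with a constant depending only on $c_1$. The main obstacle — really the only subtle point — is bookkeeping the direction of the inequalities in the comparison between $\langle \cdot\rangle$ and the metric energy $(\cdot)^0$ so that no spurious positive power of $Z$ (which is unbounded as $t\to\infty$) survives: this works because one only ever needs $\langle p\rangle \lesssim p^0$ (where $Z^{-2}$ is harmless) and $p'^0, q'^0 \lesssim \langle p'\rangle, \langle q'\rangle$ (where the same $Z^{-2} \le 1$ again only helps), never the reverse. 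I would organize the write-up as: (i) state \eqref{pp'q'} in the hatted frame; (ii) prove the two one-sided comparisons $\langle p\rangle \le C\p^0$ and $\p'^0 \le C\langle p'\rangle$, $\q'^0 \le C\langle q'\rangle$, each a two-line consequence of \eqref{assum lb} and $Z\ge1$; (iii) chain them.
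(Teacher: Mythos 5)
There is a genuine error at the key comparison step, and it sits exactly at the point you flagged as ``the only subtle point.'' You claim $\langle p\rangle \le \sqrt{c_1}\,\p^0$, deriving it from ``$|p|^2 \le c_1 Z^{-2}|\p|^2$''. But the first inequality of \eqref{assum lb} reads $\frac{1}{c_1}|p|^2 \le Z^2 g^{ab}p_ap_b = Z^2|\p|^2$, which gives $|p|^2 \le c_1 Z^{2}|\p|^2$, not $c_1 Z^{-2}|\p|^2$: the coordinate momentum is \emph{larger} than the orthonormal-frame momentum by a factor of order $Z$. Consequently your argument only yields $\langle p\rangle \le C\,Z\,\p^0$, and the inequality $\langle p\rangle \le C\,\p^0$ with $C$ independent of $t$ is simply false (take $g^{ab}=Z^{-2}\delta^{ab}$, so $c_1=1$, and $|p|=Z\to\infty$: then $\langle p\rangle\to\infty$ while $\p^0=\sqrt{2}$). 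The stray factor $Z$ is not absorbed anywhere else in your chain --- the reverse comparisons $\p'^0\le\sqrt{c_1}\langle p'\rangle$, $\q'^0\le\sqrt{c_1}\langle q'\rangle$, which you do prove correctly, only lose further powers of $Z^{-1}$ --- so the proof does not close.

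The repair needs one additional idea, not just bookkeeping: compare $\langle p\rangle$ not with the metric energy $p^0=\sqrt{1+g^{ab}p_ap_b}$ but with the rescaled quantity $\tilde p^{\,0}:=\sqrt{1+Z^2 g^{ab}p_ap_b}$, for which \eqref{assum lb} gives the genuine two-sided equivalence $c_1^{-1}\langle p\rangle^2\le(\tilde p^{\,0})^2\le c_1\langle p\rangle^2$ uniformly in $t$. One then proves $(\tilde p^{\,0})^2\le 2\,(\tilde p'^{\,0})^2(\tilde q'^{\,0})^2$ directly: setting $x=Z^2(p^0-1)$, $x'=Z^2(p'^0-1)$, $y'=Z^2(q'^0-1)$, one has $(\tilde p^{\,0})^2=1+2x+Z^{-2}x^2$; the conservation law \eqref{conserved} together with $q^0\ge 1$ gives $x\le x'+y'$; and the elementary bound $1+2(x'+y')+2Z^{-2}(x'^2+y'^2)\le 2\,(1+2x'+Z^{-2}x'^2)(1+2y'+Z^{-2}y'^2)$ finishes it. Chaining the equivalence with this multiplicative inequality gives $\langle p\rangle\le \sqrt{2}\,c_1^{3/2}\langle p'\rangle\langle q'\rangle$. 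For what it is worth, the paper does not give an in-text argument here either --- it only cites Lemma 7 of \cite{LLN23} --- but your write-up as it stands is not correct.
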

\begin{proof}
By Lemma \ref{lem assum lb}, we have the boundedness of $ Z^2 g^{ a b } $ and the positive definiteness of $ Z^2 k^{ a b } $, so that we can follow the arguments of Lemma 7 of \cite{LLN23}. We refer to Lemma 7 of \cite{LLN23} for the proof.
\end{proof}

The following lemma shows that $ p $ derivatives of $ p' $ and $ q' $ are bounded by some quantity which is independent of $ p $. This will be necessary to estimate $ p $ derivatives of $ f $ and will be used in Lemma \ref{lem ls} of Section \ref{sec li}.

\begin{lemma}\label{lem dp'dp}
Suppose that there exist an interval $ [ t_0 , T ] $ and constants $ c_1 \geq 1 $ and $ 0 \leq \eta < 1 $ such that $ g^{ a b } $ and $ k^{ a b } $ exist on $ [ t_0 , T ] $, are symmetric, and satisfy \eqref{assum lb}. Then, the post-collision momenta $ p' $ and $ q' $ satisfy 
\begin{align*}
\left| \frac{ \partial p'_a }{ \partial p_b } \right| + \left| \frac{ \partial q'_a }{ \partial p_b } \right| \leq C ( q^0 )^9 ,
\end{align*}
for any $ a , b = 1 , 2 , 3 $.
\end{lemma}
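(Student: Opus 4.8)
The strategy is to differentiate the explicit formulas \eqref{p'}--\eqref{q'} for the post-collision momenta with respect to $p_b$ and bound each resulting term. Since $p'_d$ and $q'_d$ differ only by an overall sign of the correction term, it suffices to treat $p'_d$. Write the correction as a product of a scalar factor, call it $A = - q^0 \frac{ n^a \omega_i { e^i }_a }{ \sqrt{ s } } + q^a \omega_i { e^i }_a + \frac{ n^a \omega_i { e^i }_a n_b q^b }{ \sqrt{ s } ( n^0 + \sqrt{ s } ) }$, and a vector factor $B_d = \omega_j { e^j }_d + \frac{ n^c \omega_j { e^j }_c n_d }{ \sqrt{ s } ( n^0 + \sqrt{ s } ) }$, so that $p'_d = p_d + 2 A B_d$ and hence $\partial_{p_b} p'_d = \delta_{db} + 2 (\partial_{p_b} A) B_d + 2 A (\partial_{p_b} B_d)$. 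The bulk of the work is to estimate $|A|$, $|B_d|$, $|\partial_{p_b} A|$ and $|\partial_{p_b} B_d|$ using Lemma \ref{lem assum lb}, which under the hypothesis \eqref{assum lb} gives $\|g^{-1}\| \le C Z^{-2}$, $\|e^{-1}\| \le C Z$, $\sqrt{s} \ge h \vee 2$, and $n^0 + \sqrt{s} \ge 2$.

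First I would collect the elementary bounds. We have $\sqrt{s} = \sqrt{h^2+4} \ge 2$, $n^0 = p^0 + q^0$, and $|n_d q^d| \le C$ by repeated use of \eqref{hp-q}-type estimates; also $|\omega_i {e^i}_a| \le C\|e^{-1}\| \le CZ$ and, crucially, $|n^a \omega_i {e^i}_a| = |(p^a+q^a)\omega_i{e^i}_a|$ needs to be bounded in terms of $p^0, q^0$. The key structural observation (as in the cited Lemma 5 of \cite{LN172} and Lemma 5 of \cite{LLN23}) is that $\frac{n^a \omega_i {e^i}_a}{\sqrt{s}}$ and the combination appearing in $A$ stay bounded by a power of $q^0$ times a metric-independent constant, because the potential growth in $\hat p$ or $\hat q$ is always compensated by $\sqrt s$ in the denominator or by the Cauchy--Schwarz-type cancellations built into the parametrization. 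Thus $|A| \le C (q^0)^m$ and $|B_d| \le C Z (q^0)^{m'}$ for small explicit exponents, and the total power of $q^0$ — including what comes out of differentiation, where $\partial_{p_b}$ acting on $q^0$, $\sqrt s$, $n^a$, etc.~each contributes at most a bounded multiple of $q^0$ and of $\|e^{-1}\| \le CZ$ — is what gets tracked to land at $(q^0)^9$.

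The second ingredient is the chain rule through the orthonormal frame. Since $p_a = \p_i {e^i}_a$, differentiating with respect to $p_b$ (i.e.~$p_b$ the lower-index momentum in the fixed ${\bf W}^a$ frame) involves $\partial \p_i / \partial p_b = {e_i}^b$, which by Lemma \ref{lem assum lb} is bounded by $C\|e\| \le C Z^{-1}$ (one checks $\|e\| \le C (\det g^{-1})^{-1}\|g^{-1}\|^2 \cdot (\dots) \le CZ^{-1}$ from the explicit formula in Remark \ref{rem ortho}). Then $\partial_{p_b} q^0 = q^b/q^0$ is bounded by $C\|g^{-1}\|\|e^{-1}\| \le C$, and by \eqref{dsdp} $\partial_{p_b}\sqrt s$ is controlled by Lemma \ref{lem dhdp}. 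Each factor in $A$ and $B_d$ is a rational expression in these quantities with denominators bounded below, so a term-by-term differentiation and triangle inequality produces the stated bound, the $Z$-powers cancelling between $\|e^{-1}\|$ and $\|e\|$ so that only a fixed power of $q^0$ survives.

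\textbf{Main obstacle.} The principal difficulty is bookkeeping: showing that after all differentiations the powers of the metric-dependent quantities $Z$ cancel exactly (leaving a metric-independent constant), and that no uncancelled growth in $|\hat p|$ or $|\hat q|$ remains — in particular verifying that the scalar factor $A$ and its derivative do not grow in $\hat p$ (this is precisely the advantage of the parametrization \eqref{phat'^0}--\eqref{qhat'} noted in the text, and is handled as in Lemma 5 of \cite{LLN23}). Once one trusts that structural cancellation, the exponent $9$ is obtained by a crude count of how many $q^0$ factors the worst term accumulates, and one does not need to optimize it.
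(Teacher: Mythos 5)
Your proposal follows essentially the same route as the paper, whose own ``proof'' of this lemma is simply a citation of Lemma 5 of \cite{LLN23}: your outline (differentiate the explicit formulas \eqref{p'}--\eqref{q'}, bound the scalar and vector factors via the frame estimates of Lemma \ref{lem assum lb}, and track powers of $q^0$) is the correct strategy for that external proof, and like the paper you defer the decisive no-growth-in-$\hat p$ cancellation to that same reference. Two intermediate claims in your sketch are off but immaterial to the outcome: $n_d q^d = g^{ab}p_aq_b + g^{ab}q_aq_b$ is bounded by $C\,p^0q^0$, not by a constant (the compensation comes from the denominator $\sqrt{s}\,(n^0+\sqrt{s})$), and $\partial_{p_b}q^0 = 0$ rather than $q^b/q^0$, since $q^0$ does not depend on $p$.
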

\begin{proof}
We refer to Lemma 5 of \cite{LLN23} for the proof.
\end{proof}

Now, the iteration for the Boltzmann equation is defined as follows: 
\begin{align}\label{B iter0}
\frac{ \partial_t f_{ n + 1 } }{ \partial t } + \lambda_\eta f_{ n + 1 } = ( \det g )^{ - \frac12 } \iint \frac{ h^{ 1 - b } \sqrt{ s } }{ p^0 q^0 } ( f_n ( p' ) f_n ( q' ) - f_{ n + 1 } ( p ) f_n ( q ) ) w^{ - 1 } ( q ) \, d \omega \, d q .
\end{align}
This is the standard iteration for the Boltzmann equation, which preserves the non-negativity of $ f_n \geq 0 $. Note that $ \lambda_\eta \geq 0 $ by Lemma \ref{lem wdecay} and \eqref{assum lb}, and $ h \sqrt{ s } \leq C p^0 q^0 $ by \eqref{hspq}. We apply Lemma \ref{lem west} to obtain
\begin{align*}
\frac{ \partial ( \langle p \rangle^m f_{ n + 1 } ) }{ \partial t } & \leq C ( \det g )^{ - \frac12 } ( \| f_n \|_{ L^\infty_m }^2 + \| f_{ n + 1 } \|_{ L^\infty_m } \| f_n \|_{ L^\infty_m } ) \int_{ \bbr^3 } \frac{ 1 }{ h^{ b } } w^{ - 1 } ( q ) \, d q \\
& \leq C Z_\eta^{ - \frac32 + \frac{ b }{ 2 } } ( \| f_n \|_{ L^\infty_m }^2 + \| f_{ n + 1 } \|_{ L^\infty_m } \| f_n \|_{ L^\infty_m } ) ,
\end{align*}
where we used Lemma \ref{lem bracket} with $ m \geq 0 $, which shows that
\begin{align}
& \sup_{ t_0 \leq s \leq t } \| f_{ n + 1 } ( s ) \|_{ L^\infty_m } \leq \| f_0 \|_{ L^\infty_m } \nonumber \\
& \quad + C \left( \sup_{ t_0 \leq s \leq t } \| f_n ( s ) \|_{ L_m^{ \infty } }^2 + \sup_{ t_0 \leq s \leq t } \| f_{ n + 1 } ( s ) \|_{ L_m^{ \infty } } \sup_{ t_0 \leq s \leq t } \| f_n ( s ) \|_{ L_m^{ \infty } } \right) \int_{ t_0 }^t Z_\eta^{ - \frac32 + \frac{ b }{ 2 } } ( s ) \, d s . \label{est f_n1}
\end{align}
We now obtain the following lemma, which shows that $ f_n $ are uniformly bounded.

\begin{lemma}\label{lem lb}
Suppose that there exist an interval $ [ t_0 , T ] $ and constants $ c_1 \geq 1 $ and $ 0 \leq \eta < 1 $ such that $ g^{ a b } $ and $ k^{ a b } $ exist on $ [ t_0 , T ] $, are symmetric, and satisfy \eqref{assum lb}. Let $ f_0 $ be an initial datum of the equation \eqref{B iter0} satisfying $ \| f_0 \|_{ L_m^{ \infty } } \leq B $ with $ m \geq 0 $. Then, for any $ C_2 > 1 $, there exists $ t_0 < T_1 \leq T $ such that 
\begin{align*}
\sup_{ t_0 \leq s \leq T_1 } \| f_n ( s ) \|_{ L_m^{ \infty } } \leq C_2 B ,
\end{align*}
for all $ n $. 
\end{lemma}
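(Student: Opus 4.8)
The plan is a standard induction on $n$, bootstrapped off the recursive estimate \eqref{est f_n1}. First I would fix $C_2 > 1$, set $B = \| f_0 \|_{L^\infty_m}$, and observe that the constant $C$ appearing in \eqref{est f_n1} depends only on $c_1$, $m$, $b$, and $\eta$ (through Lemma \ref{lem bracket}, Lemma \ref{lem west}, Lemma \ref{lem wdecay}, and the assumption \eqref{assum lb}), hence is uniform over $[t_0, T]$ and over $n$. Next I would use the key integrability fact that the exponent $-\tfrac32 + \tfrac b2$ is negative since $0 \le b < 3$, so the factor $Z_\eta^{-3/2 + b/2}(s) = e^{(1-\eta)\gamma(-3/2 + b/2)s}$ is integrable on $[t_0, \infty)$; in particular $\int_{t_0}^t Z_\eta^{-3/2 + b/2}(s)\, ds \to 0$ as $t \downarrow t_0$. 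Therefore I can choose $t_0 < T_1 \le T$ small enough that
\begin{align*}
C\, (C_2 B) \int_{t_0}^{T_1} Z_\eta^{-\frac32 + \frac b2}(s)\, ds \leq \frac{ C_2 - 1 }{ C_2 + 1 } ,
\end{align*}
and I would carry out the induction with this $T_1$.

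For the induction itself: the base case $n = 0$ is immediate if one sets $f_0$ (the zeroth iterate) equal to the initial datum, so $\sup_{[t_0,T_1]} \| f_0 \|_{L^\infty_m} = B \le C_2 B$. For the inductive step, assume $\sup_{t_0 \le s \le T_1} \| f_n(s) \|_{L^\infty_m} \le C_2 B$. Plugging this into \eqref{est f_n1} and abbreviating $A = \sup_{t_0 \le s \le T_1} \| f_{n+1}(s) \|_{L^\infty_m}$, one gets
\begin{align*}
A \leq B + C\Big( (C_2 B)^2 + A\, (C_2 B) \Big) \int_{t_0}^{T_1} Z_\eta^{-\frac32 + \frac b2}(s)\, ds .
\end{align*}
Writing $\kappa := C (C_2 B) \int_{t_0}^{T_1} Z_\eta^{-3/2 + b/2}(s)\, ds \le \frac{C_2 - 1}{C_2 + 1} < 1$ by the choice of $T_1$, this reads $A \le B + \kappa C_2 B + \kappa A$, i.e. $A(1 - \kappa) \le B(1 + \kappa C_2)$, so
\begin{align*}
A \leq B\,\frac{1 + \kappa C_2}{1 - \kappa} .
\end{align*}
It then remains to check that $\frac{1 + \kappa C_2}{1 - \kappa} \le C_2$, equivalently $1 + \kappa C_2 \le C_2 - \kappa C_2$, equivalently $\kappa (1 + C_2) \le C_2 - 1$, i.e. $\kappa \le \frac{C_2 - 1}{C_2 + 1}$ — which is exactly how $T_1$ was chosen. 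Hence $A \le C_2 B$, closing the induction.

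One technical point I should address is that $\| f_{n+1} \|_{L^\infty_m}$ on the right-hand side of \eqref{est f_n1} is itself the quantity being bounded, so strictly speaking I need each iterate $f_{n+1}$ to be \emph{a priori} finite in $L^\infty_m$ on $[t_0, T_1]$ before the absorption argument is legitimate; this follows because \eqref{B iter0} is a linear ODE (along $t$) for $f_{n+1}$ with bounded, non-negative coefficients given $f_n$, so $\| f_{n+1}(t) \|_{L^\infty_m}$ is finite and continuous in $t$ on any interval where the metric quantities are controlled, and one may run a continuity/bootstrap argument in $t$ if one wants to be fully rigorous. The main obstacle is really just making sure all the constants are genuinely uniform in $n$ — which is guaranteed because the only $n$-dependence enters through $f_n$, whose norm is controlled by the inductive hypothesis, while every other estimate (Lemmas \ref{lem west}, \ref{lem bracket}, \ref{lem wdecay}) depends only on the fixed data $c_1$, $\eta$, $m$, $b$ via \eqref{assum lb}. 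Everything else is the routine Illner–Shinbrot absorption.
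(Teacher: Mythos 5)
Your proposal is correct and follows essentially the same route as the paper: the paper also feeds the inductive bound into \eqref{est f_n1}, shrinks the interval so that the accumulated factor (which the paper simply bounds by $C(t-t_0)$ using $Z_\eta \geq 1$) is small, and closes the induction by absorption. One tiny arithmetic slip: the absorption condition $1 + \kappa C_2 \leq C_2(1-\kappa)$ is equivalent to $\kappa \leq (C_2-1)/(2C_2)$, not $\kappa \leq (C_2-1)/(C_2+1)$, but this is harmless since one simply chooses $T_1$ slightly smaller.
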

\begin{proof}
We use \eqref{est f_n1} to obtain
\begin{align*}
& \sup_{ t_0 \leq s \leq t } \| f_{ n + 1 } ( s ) \|_{ L_m^{ \infty } } \leq \| f_0 \|_{ L_m^{ \infty } } \nonumber \\
& \quad + C ( t - t_0 ) \left( \sup_{ t_0 \leq s \leq t } \| f_n ( s ) \|_{ L_m^{ \infty } }^2 + \sup_{ t_0 \leq s \leq t } \| f_{ n + 1 } ( s ) \|_{ L_m^{ \infty } } \sup_{ t_0 \leq s \leq t } \| f_n ( s ) \|_{ L_m^{ \infty } } \right) .
\end{align*}
This implies that there exists $ 0 < T_1 \leq T $ such that if $ \sup_{ t_0 \leq s \leq T_1 } \| f_n ( s ) \|_{ L_m^{ \infty } } \leq C_2 B $, then $ \sup_{ t_0 \leq s \leq T_1 } \| f_{ n + 1 } ( s ) \|_{ L_m^{ \infty } } \leq C_2 B $ as well. This completes the proof. 
\end{proof}

\subsubsection{Iteration for the Einstein equations}\label{sec itE}
Next, we consider the iteration for the Einstein equations. We assume that there exist an interval $ [ t_0 , T ] $ and constants $ c_2 > 0 $ and $ m > 5 $ such that $ f $ exists on $ [ t_0 , T ] $ and satisfy 
\begin{align}
\sup_{ t_0 \leq s \leq T } \| f ( s ) \|_{ L_m^{ \infty } } \leq c_2 . \label{assum le}
\end{align}
The iteration for the Einstein equations is defined as follows. Let $ g_0^{ a b } $ and $ k_0^{ a b } $ be initial data, and define by an abuse of notation the zeroth iteration as $ { g_0 }^{ a b } ( t ) = g_0^{ a b } $ and $ { k_0 }^{ a b } ( t ) = k_0^{ a b } $. Now, for $ n \geq 0 $, we define 
\begin{align}
& \frac{ d { g_{ n + 1 } }^{ a b } }{ d t } = - 2 { k_n }^{ a b } , \label{E iter1} \\ 
& \frac{ d { k_{ n + 1 } }^{ a b } }{ d t } = - 2 { k_n }^a_c { k_n }^{ b c } - k_n { k_n }^{ a b } + { S_{ n } }^{ a b } + \frac12 ( \rho_{ n } - S_{ n } ) { g_n }^{ a b } + \Lambda { g_n }^{ a b } . \label{E iter2} 
\end{align} 
Here, $ { k_n }^a_c = { k_n }^{ a b } { g_n }_{ b c } $ and $ k_n = { k_n }^{ a b } { g_n }_{ a b } $, where $ { g_n }_{ a b } $ denotes the metric $ g_n $ with covariant indices $ a $ and $ b $, while $ { g_n }^{ a b } $ denotes the inverse $ g_n^{ - 1 } $ with contravariant indices $ a $ and $ b $. The matter terms are defined by using the metric $ { g_n }^{ a b } $, i.e.
\begin{align*}
{ S_n }_{ a b } & = ( \det g_n^{ - 1 } )^{ \frac12 } \int_{ \bbr^3 } f ( t , p ) \frac{ p_a p_b }{ { p_n }^0 } w_n^{ - 1 } ( p ) \, d p , \\
\rho_{ n } & = ( \det g_n^{ - 1 } )^{ \frac12 } \int_{ \bbr^3 } f ( t , p ) { p_n }^0 w_n^{ - 1 } ( p ) \, d p , 
\end{align*}
and $ { S_n }^{ a b } = { g_n }^{ a c } { g_n }^{ b d } { S_n }_{ c d } $ and $ S_n = { g_n }^{ a b } { S_n }_{ a b } $, where $ { p_n }^0 $ and $ w_n $ are given by
\begin{align}
{ p_n }^0 = \sqrt{ 1 + { g_n }^{ a b } p_a p_b } , \qquad w_n = e^{ Z_\eta ( { p_n }^0 - 1 ) } , \label{p_n0}
\end{align}
respectively. The iteration satisfies the following estimates.

\begin{lemma}\label{lem le}
Suppose that there exist an interval $ [ t_0 , T ] $ and constants $ c_2 > 0 $ and $ m > 5 $ such that $ f $ exists on $ [ t_0 , T ] $ and satisfies \eqref{assum le}. Let $ g_0^{ a b } $ and $ k_0^{ a b } $ be initial data of the equations \eqref{E iter1}--\eqref{E iter2} such that they are symmetric and satisfy
\begin{align*}
\frac{ 1 }{ A } | p |^2 \leq e^{ 2 \gamma t_0 } g_0^{ a b } p_a p_b \leq A | p |^2 , \qquad \| k_0^{ * * } \| \leq A e^{ - 2 \gamma t_0 } , \qquad H_0 > \gamma , \qquad F_0 < \frac14 ,
\end{align*}
where $ H_0 $ and $ F_0 $ denote the initial values of the Hubble variable \eqref{H} and the shear variable \eqref{F}, respectively. Then, for any $ C_1 > 1 $, there exist $ t_0 < T_2 \leq T $ and $ 0 \leq \eta < 1 $ such that
\begin{align}
\frac{ 1 }{ C_1 A } | p |^2 \leq Z^2 { g_n }^{ a b } p_a p_b \leq C_1 A | p |^2 , \qquad \| k_n^{ * * } \| \leq C_1 A Z^{ - 2 } , \qquad H_n \geq \gamma , \qquad F_n \leq \frac{ \eta^2 }{ 4 } , \label{lem le1}
\end{align}
for all $ n $ and $ t_0 \leq t \leq T_2 $.  
\end{lemma}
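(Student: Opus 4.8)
The plan is to run a standard short-time bootstrap on the Picard-type iteration \eqref{E iter1}--\eqref{E iter2}, establishing \eqref{lem le1} by induction on $ n $ with a single time $ T_2 $ and a single constant $ \eta $ that work for all iterates at once. First I would fix $ \eta \in ( 0 , 1 ) $ with $ 2 \sqrt{ F_0 } < \eta $, which is possible because $ F_0 < 1 / 4 $, so that $ F_0 < \eta^2 / 4 $ strictly; the value of $ \eta $ enters only through $ w_n = e^{ Z_\eta ( { p_n }^0 - 1 ) } $, and since $ Z_\eta \geq 1 $ we only ever use $ w_n^{ - 1 } \leq 1 $, so the choice is harmless. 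For the base case $ n = 0 $ the iterates are the constants $ g_0^{ a b } $, $ k_0^{ a b } $; writing $ Z^2 ( t ) = e^{ 2 \gamma ( t - t_0 ) } e^{ 2 \gamma t_0 } $ and using $ e^{ 2 \gamma ( t - t_0 ) } \leq C_1 $ for $ t - t_0 $ small, the hypotheses on $ g_0^{ a b } $, $ k_0^{ a b } $, $ H_0 $, $ F_0 $ immediately give \eqref{lem le1} at $ n = 0 $ on a short enough interval.

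For the inductive step, suppose $ { g_n }^{ a b } $, $ { k_n }^{ a b } $ satisfy \eqref{lem le1} on $ [ t_0 , T_2 ] $. Then Lemma \ref{lem assum lb} (applied with $ c_1 = C_1 A $) gives, uniformly in $ n $, that $ \| g_n^{ - 1 } \| \leq C Z^{ - 2 } $, $ Z^6 \det g_n^{ - 1 } \geq 1 / C $, $ \| e_n^{ - 1 } \| \leq C Z $, and $ Z^2 { k_n }^{ a b } p_a p_b $ is comparable to $ | p |^2 $; the upper bound in \eqref{lem le1} also yields $ ( \det g_n^{ - 1 } )^{ 1 / 2 } \leq C Z^{ - 3 } $. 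Combining these with the standing bound \eqref{assum le} and with $ { p_n }^0 \leq C \langle p \rangle $, $ w_n^{ - 1 } \leq 1 $, the integrals defining $ \rho_n $, $ S_n $, $ { S_n }_{ a b } $ converge (this is where $ m > 5 $ is used, to make $ \langle p \rangle^{ 2 - m } $ integrable) and are bounded by $ C c_2 Z^{ - 3 } $; raising indices gives $ | { S_n }^{ a b } | \leq C c_2 Z^{ - 7 } $. Hence the right-hand sides of \eqref{E iter1}--\eqref{E iter2} are $ O ( ( 1 + c_2 ) Z^{ - 2 } ) $, with implied constant depending only on $ C_1 $, $ A $, $ \Lambda $. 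Integrating from $ t_0 $ and using $ Z^2 ( t ) \int_{ t_0 }^t Z^{ - 2 } ( s ) \, d s \leq ( t - t_0 ) e^{ 2 \gamma ( t - t_0 ) } $, the quantities $ Z^2 { g_{ n + 1 } }^{ a b } $ and $ Z^2 { k_{ n + 1 } }^{ a b } $ differ from their values $ e^{ 2 \gamma t_0 } g_0^{ a b } $, $ e^{ 2 \gamma t_0 } k_0^{ a b } $ at $ t_0 $ by $ O ( t - t_0 ) $, uniformly in $ n $; shrinking $ T_2 - t_0 $ (by an amount depending only on $ C_1 $, $ A $, $ c_2 $, $ \Lambda $) upgrades the bounds from constant $ A $ to constant $ C_1 A $, which is the first pair of inequalities in \eqref{lem le1} for $ n + 1 $.

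For the remaining two inequalities, the same integral estimates — now without the $ Z^2 $ factor, using $ Z^{ - 2 } \leq 1 $ — show that $ { g_{ n + 1 } }^{ a b } ( t ) $ and $ { k_{ n + 1 } }^{ a b } ( t ) $ are within $ O ( t - t_0 ) $ of $ g_0^{ a b } $ and $ k_0^{ a b } $, uniformly in $ n $; since $ g_0^{ a b } $ is positive definite its inverse perturbs continuously, and the expressions $ H_{ n + 1 } = \tfrac13 { k_{ n + 1 } }^{ a b } { g_{ n + 1 } }_{ a b } $ and $ F_{ n + 1 } $, being continuous functions of $ { g_{ n + 1 } }^{ a b } $ and $ { k_{ n + 1 } }^{ a b } $, stay within $ O ( t - t_0 ) $ of $ H_0 > \gamma $ and $ F_0 < \eta^2 / 4 $. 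A further shrinking of $ T_2 - t_0 $ then secures $ H_{ n + 1 } \geq \gamma $ and $ F_{ n + 1 } \leq \eta^2 / 4 $. Taking $ T_2 $ to be the smallest of the finitely many thresholds produced along the way closes the induction.

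The main obstacle is purely the uniformity in $ n $: one must verify that every constant in the matter-term bounds and in the estimates for the right-hand sides of \eqref{E iter1}--\eqref{E iter2} depends only on $ C_1 $, $ A $, $ c_2 $ and the fixed $ \Lambda $, never on $ n $, so that a single threshold $ T_2 - t_0 $ serves all iterates simultaneously; this is exactly why Lemma \ref{lem assum lb} is invoked with the fixed constant $ c_1 = C_1 A $, and why one must carry the constant $ C_1 A $ unchanged through every step rather than letting it degrade from $ A $ to $ C_1 A $ to $ C_1^2 A $. A secondary point worth recording is that $ H_n \geq \gamma $ is not propagated by the scheme, since the Hamiltonian constraint is not preserved under the iteration; but the strict inequality $ H_0 > \gamma $ together with the short-time closeness of $ H_{ n + 1 } $ to $ H_0 $ is all that is needed here, and the constraint will only be used later for the global argument.
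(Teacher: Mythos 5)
Your proposal is correct and follows essentially the same route as the paper's proof: induction on $n$ with constants uniform in $n$, integration of \eqref{E iter1}--\eqref{E iter2} using the inductive bounds and Lemma \ref{lem assum lb}, the $m>5$ integrability of the matter terms, and propagation of the strict inequalities $H_0>\gamma$ and $F_0<\eta^2/4$ over a short time via boundedness of the time derivatives of $H_{n+1}$ and $F_{n+1}$. The choice of $\eta$ with $F_0<\eta^2/4<1/4$ is exactly the paper's \eqref{eta}, and your emphasis on $n$-independence of the thresholds matches the paper's closing remark that $[t_0,T_2]$ can be chosen independently of $n$.
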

\begin{proof}
Note that the zeroth iteration functions are defined by initial data, which are constant matrices, so that they satisfy \eqref{lem le1} on an interval $ [ t_0 , T_2 ] $. Suppose that \eqref{lem le1} holds for all $ 0 \leq i \leq n $ for some $ n \geq 0 $ on $ [ t_0 , T_2 ] $, and let us consider the estimates of $ { g_{ n + 1 } }^{ a b } $, $ { k_{ n + 1 } }^{ a b } $, $ H_{ n + 1 } $ and $ F_{ n + 1 } $.

First, we use \eqref{E iter1} to obtain
\begin{align*}
{ g_{ n + 1 } }^{ a b } = g_0^{ a b } - 2 \int_{ t_0 }^t { k_n }^{ a b } \, d s , 
\end{align*}
so that we have
\begin{align*}
| { g_{ n + 1 } }^{ a b } p_a p_b - g_0^{ a b } p_a p_b | \leq C A ( t - t_0 ) | p |^2 ,
\end{align*}
where we used \eqref{lem le1}. Since $ Z = e^{ \gamma t } $ is continuous and increasing, we can find  an interval, on which we have
\begin{align*}
| Z^2 { g_{ n + 1 } }^{ a b } p_a p_b - e^{ 2 \gamma t_0 } g_0^{ a b } p_a p_b | \leq C A ( t - t_0 ) | p |^2 ,
\end{align*}
and
\begin{align}
\frac{ 1 }{ C_1 A } | p |^2 \leq \frac{ 1 }{ A } | p |^2 - C A ( t - t_0 ) | p |^2 \leq Z^2 { g_{ n + 1 } }^{ a b } p_a p_b \leq A | p |^2 + C A ( t - t_0 ) | p |^2 \leq C_1 A | p |^2 . \label{lem le p1}
\end{align}
For the boundedness of $ { k_{ n + 1 } }^{ a b } $, we use \eqref{E iter2} to obtain
\begin{align*}
| { k_{ n + 1 } }^{ a b } - k_0^{ a b } | \leq C \int_{ t_0 }^t ( \| k_n^{ * * } \|^2 \| g_n \| + | { S_n }^{ a b } | + | \rho_n | \| g_n^{ - 1 } \| + | S_n | \| g_n^{ - 1 } \| + \| g_n^{ - 1 } \| ) \, d s . 
\end{align*}
Here, $ g_n $ is the inverse of $ g_n^{ - 1 } $, so it is bounded by $ \| g_n^{ - 1 } \|^2 $ multiplied by $ ( \det g_n^{ - 1 } )^{ - 1 } $. We can apply Lemma \ref{lem assum lb} to conclude that $ \| k_n^{ * * } \|^2 \| g_n \| $ and $ \| g_n^{ - 1 } \| $ are bounded by $ C_A Z^{ - 2 } $ for some constant $ C_A $ which depends on $ A $. For the matter terms we use the assumption \eqref{assum le} to have
\begin{align*}
| { S_n }_{ a b } | & \leq C ( \det g_n^{ - 1 } )^{ \frac12 } \int_{ \bbr^3 } \langle p \rangle^{ - m + 2 } d p  \leq C ( \det g_n^{ - 1 } )^{ \frac12 } \leq C_A Z^{ - 3 } , 
\end{align*}
since $ m > 5 $. Then, we obtain
\begin{align*}
| { S_n }^{ a b } | & \leq C ( \det g_n^{ - 1 } )^{ \frac12 } \| g_n^{ - 1 } \|^2 \int_{ \bbr^3 } \langle p \rangle^{ - m + 2 } d p  \leq C ( \det g_n^{ - 1 } )^{ \frac12 } \| g_n^{ - 1 } \|^2 \leq C_A Z^{ - 7 } , \\
\rho_n & \leq C ( \det g_n^{ - 1 } )^{ \frac12 } \int_{ \bbr^3 } \langle p \rangle^{ - m } d p  \leq C ( \det g_n^{ - 1 } )^{ \frac12 } \leq C_A Z^{ - 3 } , \\
{ S_n } & \leq C ( \det g_n^{ - 1 } )^{ \frac12 } \| g_n^{ - 1 } \| \int_{ \bbr^3 } \langle p \rangle^{ - m + 2 } d p  \leq C ( \det g_n^{ - 1 } )^{ \frac12 } \| g_n^{ - 1 } \| \leq C_A Z^{ - 5 } ,
\end{align*}
which are all bounded on $ [ t_0 , T_2 ] $. Hence, we obtain 
\begin{align*}
| { k_{ n + 1 } }^{ a b } - k_0^{ a b } | \leq C_A ( t - t_0 ) .
\end{align*}
By the same argument as in \eqref{lem le p1} we obtain an interval on which we have
\begin{align}
| Z^2 { k_{ n + 1 } }^{ a b } | \leq | e^{ 2 \gamma t_0 } k_0^{ a b } | + | Z^2 { k_{ n + 1 } }^{ a b } - e^{ 2 \gamma t_0 } k_0^{ a b } | \leq A + C_A ( t - t_0 ) \leq C_1 A . \label{lem le p2}
\end{align}
Next, we recall that $ H_{ n + 1 } = \frac13 k_{ n + 1 } = \frac13 { g_{ n + 1 } }_{ a b } { k_{ n + 1 } }^{ a b } $. We need to consider the time derivative of $ { g_{ n + 1 } }_ { a b } $, which is given by
\begin{align*}
\frac{ d { g_{ n + 1 } }_{ a b } }{ d t } = - { g_{ n + 1 } }_{ a c } { g_{ n + 1 } }_{ b d } \frac{ d { g_{ n + 1 } }^{ c d } }{ d t } ,
\end{align*}
and the time derivative of $ { k_{ n + 1 } }^{ a b } $, which is given by the right hand side of \eqref{E iter2}. We observe that they are all bounded by Lemma \ref{lem assum lb} and Remark \ref{rem assum}, so that we obtain
\begin{align*}
\left| \frac{ d H_{ n + 1 } }{ d t } \right| \leq C_A .
\end{align*}
This shows that there exists an interval on which we have
\begin{align}
H_{ n + 1 } \geq H_0 - | H_{ n + 1 } - H_0 | \geq H_0 - C_A ( t - t_0 ) \geq \gamma , \label{lem le p3}
\end{align}
since $ H_0 > \gamma $. Finally, we recall that
\begin{align*}
F_{ n + 1 } = \frac{ { \sigma_{ n + 1 } }_{ a b } { \sigma_{ n + 1 } }^{ a b } }{ 4 H_{ n + 1 }^2 } , 
\end{align*}
where $ { \sigma_{ n + 1 } }^{ a b } = { k_{ n + 1 } }^{ a b } - H_{ n + 1 } { g_{ n + 1 } }^{ a b } $, and the indices are lowered with $ { g_{ n + 1 } }_{ a b } $. We can observe that $ { \sigma_{ n + 1 } }_{ a b } $, $ { \sigma_{ n + 1 } }^{ a b } $, $ H_{ n + 1 }^{ - 1 } $, $ d { \sigma_{ n + 1 } }_{ a b } / d t $, $ d { \sigma_{ n + 1 } }^{ a b } / d t $ and $ d H_{ n + 1 }^{ - 1 } / d t $ are all bounded, which shows that $ d F_{ n + 1 } / d t $ is also bounded. Since $ F_0 < 1 / 4 $, we can conclude that there exists an interval on which $ F_{ n + 1 } $ remains strictly less than $ 1 / 4 $. To be precise, we choose $ 0 \leq \eta < 1 $ such that
\begin{align}
F_0 < \frac{ \eta^2 }{ 4 } < \frac14 , \label{eta}
\end{align}
and obtain
\begin{align}
F_{ n + 1 } \leq \frac{ \eta^2 }{ 4 } . \label{lem le p4}
\end{align}
Now, we redefine $ [ t_0 , T_2 ] $ as an interval on which all the estimates \eqref{lem le p1}, \eqref{lem le p2}, \eqref{lem le p3} and \eqref{lem le p4} hold. We notice that the interval $ [ t_0 , T_2 ] $ can be chosen independently of $ n $. Therefore, we can conclude by induction that \eqref{lem le1} holds for all $ n $ and $ t_0 \leq t \leq T_2 $.
\end{proof}

\subsection{Local existence for the Einstein-Boltzmann system}\label{sec lip}

\subsubsection{Iteration for the coupled system}\label{sec li}
We introduce the iteration for the Einstein-Boltzmann system. Suppose that initial data $ g_0^{ a b } $, $ k_0^{ a b } $ and $ f_0 $ are given, and define by an abuse of notation the zeroth iteration as
\begin{align*}
{ g_0 }^{ a b } ( t ) = { g_0 }^{ a b } , \qquad { k_0 }^{ a b } ( t ) = { k_0 }^{ a b } , \qquad f_0 ( t , p ) = f_0 ( p ) . 
\end{align*}
Now, suppose that $ { g_j }^{ a b } $, $ { k_j }^{ a b } $ and $ f_j $ are given. We first define $ f_{ j + 1 } $ as the solution of the following equation: 
\begin{align}\label{EB iter0}
\frac{ \partial_t f_{ j + 1 } }{ \partial t } + \lambda_{ \eta , j } f_{ j + 1 } = ( \det g_j^{ - 1 } )^{ \frac12 } \iint \frac{ h_j^{ 1 - b } \sqrt{ s_j } }{ { p_j }^0 { q_j }^0 } ( f_j ( p_j' ) f_j ( q_j' ) - f_{ j + 1 } ( p ) f_j ( q ) ) w_j^{ - 1 } ( q ) \, d \omega \, d q , 
\end{align}
where $ \eta $ will be determined later, and $ { p_j }^0 $ and $ { q_j }^0 $ are defined by \eqref{p^0} using the metric $ { g_j }^{ a b } $:
\begin{align*}
{ p_j }^0 = \sqrt{ 1 + { g_j }^{ a b } p_a p_b } , \qquad { q_j }^0 = \sqrt{ 1 + { g_j }^{ a b } q_a q_b } ,
\end{align*}
which can be understood as $ { p_j }^0 = p^0 ( { g_j }^{ a b } ) $ with $ p^0 $ as a function of $ g^{ a b } $.  The quantities $ \lambda_{ \eta , j } $, $ h_j $, $ s_j $, $ p'_j $, $ q'_j $ and $ w_j $ are understood similarly, for instance
\begin{align*}
& \lambda_{ \eta , j } = - \left( ( 1 - \eta ) \gamma ( { p_j }^0 - 1 ) - \frac{ { k_j }^{ a b } p_a p_b }{ { p_j }^0 } \right) Z_\eta , \\
& h_j = \sqrt{ - ( { p_j }^0 - { q_j }^0 )^2 + { g_j }^{ a b } ( p_a - q_a ) ( p_b - q_b ) } , 
\end{align*}
and the post-collision momenta $ p_j' $ and $ q_j' $ are parametrized by \eqref{p'}--\eqref{q'} using the metric $ { g_j }^{ a b } $ with the corresponding $ { { e_j }^i }_a $, which are again defined by Remark \ref{rem ortho} using the metric $ { g_j }^{ a b } $. The matter terms are now defined by 
\begin{align}
{ S_{ j + 1 } }^{ a b } & = ( \det g_j^{ - 1 } )^{ \frac12 } \int_{ \bbr^3 } f_{ j + 1 } \frac{ { p_j }^a { p_j }^b }{ { p_j }^0 } w_j^{ - 1 } ( p ) \, d p , \label{EB iterm1} \\
\rho_{ j + 1 } & = ( \det g_j^{ - 1 } )^{ \frac12 } \int_{ \bbr^3 } f_{ j + 1 } { p_j }^0 w_j^{ - 1 } ( p ) \, d p , \label{EB iterm2} \\
S_{ j + 1 } & = ( \det g_j^{ - 1 } )^{ \frac12 } \int_{ \bbr^3 } f_{ j + 1 } \frac{ { g_j }^{ a b } p_a p_b }{ { p_j }^0 } w_j^{ - 1 } ( p ) \, d p . \label{EB iterm3}
\end{align}
Finally, $ { g_{ j + 1 } }^{ a b } $ and $ { k_{ j + 1 } }^{ a b } $ are defined by 
\begin{align}
& \frac{ d { g_{ j + 1 } }^{ a b } }{ d t } = - 2 { k_j }^{ a b } , \label{EB iter1} \\ 
& \frac{ d { k_{ j + 1 } }^{ a b } }{ d t } = - 2 { k_j }^a_c { k_j }^{ b c } - k_j { k_j }^{ a b } + { S_{ j + 1 } }^{ a b } + \frac12 ( \rho_{ j + 1 } - S_{ j + 1 } ) { g_j }^{ a b } + \Lambda { g_j }^{ a b } . \label{EB iter2} 
\end{align} 
Now, we combine Lemmas \ref{lem lb} and \ref{lem le} to obtain the uniform estimates of $ f_j $, $ { g_j }^{ a b } $ and $ { k_j }^{ a b } $.

\begin{lemma}\label{lem ls}
Suppose that $ g_0^{ a b } $ and $ k_0^{ a b } $ are symmetric matrices satisfying
\begin{align}
\frac{ 1 }{ A } | p |^2 \leq e^{ 2 \gamma t_0 } g_0^{ a b } p_a p_b \leq A | p |^2 , \qquad \| k_0^{ * * } \| \leq A e^{ - 2 \gamma t_0 } , \qquad H_0 > \gamma , \qquad F_0 < \frac14 , \label{lem lsi1}
\end{align}
for some $ A > 0 $, and that there exist $ B > 0 $ and $ m > 5 $ such that
\begin{align}
\| f_0 \|_{ L_m^{ \infty } } \leq B . \label{lem lsi2}
\end{align}
Let $ g_0^{ a b } $, $ k_0^{ a b } $ and $ f_0 \geq 0 $ be initial data of the equations \eqref{EB iter0}--\eqref{EB iter2} with $ \eta $ chosen by \eqref{eta}. Then, for any $ C_1 , C_2 > 1 $, there exists an interval $ [ t_0 , T ] $ on which we have
\begin{align}
\frac{ 1 }{ C_1 A } | p |^2 \leq Z^2 { g_j }^{ a b } p_a p_b \leq C_1 A | p |^2 , \qquad \| k_j^{ * * } \| \leq C_1 A Z^{ - 2 } , \qquad H_j \geq \gamma , \qquad F_j \leq \frac{ \eta^2 }{ 4 } , \label{lem ls1}
\end{align}
for all $ j $ and $ t_0 \leq t \leq T $, and
\begin{align}
\sup_{ t_0 \leq s \leq T } \| f_j ( s ) \|_{ L^{ \infty }_m } \leq C_2 B , \label{lem ls2}
\end{align}
for all $ j $. Moreover, if $ f_0 \in W^{ 1 , \infty }_m $, then $ \sup_{ t_0 \leq s \leq T } \| f_j ( s ) \|_{ W^{ 1 , \infty }_m } $ is also bounded uniformly in $ j $.
\end{lemma}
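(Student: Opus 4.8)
The plan is to prove the lemma by induction on $j$, gluing together the single-system iterations already analyzed: the Boltzmann iteration controlled by Lemma \ref{lem lb} and the Einstein iteration controlled by Lemma \ref{lem le}. Fix first $\eta$ by \eqref{eta}, which is legitimate since $F_0 < 1/4$. The zeroth iteration $({g_0}^{ab}, {k_0}^{ab}, f_0)$ is constant in $t$; since $e^{2\gamma t_0}{g_0}^{ab}$ and $e^{2\gamma t_0}\|k_0^{**}\|$ obey \eqref{lem lsi1}, while $Z^2 = e^{2\gamma t}$ is continuous and increasing and $H_0 > \gamma$, $F_0 < \eta^2/4$ hold at $t_0$, the bounds \eqref{lem ls1} for $j = 0$ hold on $[t_0, t_0 + \delta_0]$ for a $\delta_0 > 0$ depending only on $C_1$ and $\gamma$, and \eqref{lem ls2} for $j = 0$ is trivial as $C_2 > 1$. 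The inductive hypothesis is that \eqref{lem ls1} holds for ${g_i}^{ab}$, ${k_i}^{ab}$ and \eqref{lem ls2} for $f_i$ for all $i \le j$ on a common interval $[t_0, T]$; note that the order in which the iteration produces objects forces us to estimate $f_{j+1}$ first and only afterwards ${g_{j+1}}^{ab}$, ${k_{j+1}}^{ab}$.

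For $f_{j+1}$: since ${g_j}^{ab}$, ${k_j}^{ab}$ satisfy the hypotheses \eqref{assum lb} with $c_1 = C_1 A$, Lemma \ref{lem wdecay} gives $\lambda_{\eta,j} \ge 0$, and the non-negativity of $f_{j+1}$ is preserved exactly as in the single-equation case. Using $h_j\sqrt{s_j} \le C\,{p_j}^0{q_j}^0$ from \eqref{hspq}, $\langle p\rangle \le C\langle {p_j}'\rangle\langle {q_j}'\rangle$ from Lemma \ref{lem bracket}, and Lemma \ref{lem west} with $N = 0$, $c = 1$, we reach the differential inequality preceding \eqref{est f_n1},
\[
\frac{\partial(\langle p\rangle^m f_{j+1})}{\partial t} \le C\, Z_\eta^{-\frac32 + \frac b2}\big(\|f_j\|_{L^\infty_m}^2 + \|f_{j+1}\|_{L^\infty_m}\|f_j\|_{L^\infty_m}\big),
\]
with $C$ depending only on $C_1 A$, $b$, $m$, not on $j$. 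Integrating in time, inserting $\|f_j\|_{L^\infty_m} \le C_2 B$, and running the argument of Lemma \ref{lem lb} — in which $\int_{t_0}^t Z_\eta^{-3/2 + b/2}\,ds \to 0$ as $t \downarrow t_0$ and the factor $C_2 > 1$ leaves slack — produces a $j$-independent $T_1 \le T$ with $\sup_{[t_0, T_1]}\|f_{j+1}\|_{L^\infty_m} \le C_2 B$.

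For ${g_{j+1}}^{ab}$, ${k_{j+1}}^{ab}$: now $f_{j+1}$ satisfies \eqref{assum le} with $c_2 = C_2 B$, while ${g_j}^{ab}$, ${k_j}^{ab}$ satisfy \eqref{lem ls1}, so the proof of Lemma \ref{lem le} applies essentially verbatim with $({g_j},{k_j})$ and $f_{j+1}$ playing the roles of $(g_n,k_n)$ and the fixed $f$. Integrating \eqref{EB iter1}--\eqref{EB iter2} and estimating the matter terms \eqref{EB iterm1}--\eqref{EB iterm3} via \eqref{assum le} and Lemma \ref{lem assum lb} yields $|Z^2 {g_{j+1}}^{ab}p_ap_b - e^{2\gamma t_0}g_0^{ab}p_ap_b| \le C_A(t-t_0)|p|^2$, $|Z^2 {k_{j+1}}^{ab} - e^{2\gamma t_0}k_0^{ab}| \le C_A(t-t_0)$, $|H_{j+1} - H_0| \le C_A(t-t_0)$ and $|F_{j+1} - F_0| \le C_A(t-t_0)$, with $C_A$ depending only on $A$, $C_1$, $C_2 B$, $m$. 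Taking $t - t_0$ small — uniformly in $j$ — the slack from $C_1 > 1$ and from the strict inequalities $H_0 > \gamma$, $F_0 < \eta^2/4$ absorbs these corrections and gives a $j$-independent $T_2 \le T_1$ on which \eqref{lem ls1} holds for ${g_{j+1}}^{ab}$, ${k_{j+1}}^{ab}$. Since only finitely many distinct threshold times appear, replacing $T$ by $T_2$ closes the induction and yields \eqref{lem ls1}, \eqref{lem ls2} for all $j$ on one interval $[t_0, T]$.

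Finally, suppose $f_0 \in W^{1,\infty}_m$. Differentiating \eqref{EB iter0} in $p_a$, the term $\lambda_{\eta,j}\,\partial_{p_a}f_{j+1}$ keeps its favorable sign, while $(\partial_{p_a}\lambda_{\eta,j})f_{j+1}$, the $p$-derivatives of the kernel $h_j^{1-b}\sqrt{s_j}\,({p_j}^0{q_j}^0)^{-1}$, of $w_j^{-1}(q)$, and of ${p_j}'$, ${q_j}'$ all produce coefficients that are bounded on the compact interval $[t_0, T]$ uniformly in $j$: Lemma \ref{lem dp'dp} gives $|\partial_p {p_j}'|, |\partial_p {q_j}'| \le C({q_j}^0)^9$, Lemma \ref{lem dhdp} controls $\partial_p h_j$, Lemma \ref{lem wdecay} controls $\partial_p \lambda_{\eta,j}$ (here $H_j \ge \gamma$, $F_j \le \eta^2/4$ from \eqref{lem ls1} are needed), and the extra powers of ${q_j}^0$ thus generated are absorbed by Lemma \ref{lem west} with $N$ taken large enough. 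One thereby obtains a Gronwall inequality for $\|\partial_p f_{j+1}\|_{L^\infty_m}$ in terms of $\|f_j\|_{W^{1,\infty}_m}$ and $\|f_{j+1}\|_{W^{1,\infty}_m}$ with $j$-independent constants, which — after shrinking $T$ once more if needed — bounds $\sup_{[t_0, T]}\|f_j\|_{W^{1,\infty}_m}$ uniformly in $j$. I expect the only real difficulty to be the bookkeeping: checking that every constant entering the chain of estimates depends solely on the a priori data $A$, $B$, $C_1$, $C_2$, $\eta$, $b$, $m$ and never on $j$, and respecting the order ($f_{j+1}$ before ${g_{j+1}}^{ab}$, ${k_{j+1}}^{ab}$); granting that, the local-existence interval is genuinely uniform and the two previously established iteration lemmas glue cleanly.
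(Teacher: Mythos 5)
Your proposal is correct and follows essentially the same route as the paper: the same induction alternating between the Boltzmann step (Lemma \ref{lem lb} with $c_1 = C_1 A$) and the Einstein step (Lemma \ref{lem le} with $c_2 = C_2 B$), with the same ordering ($f_{j+1}$ before ${g_{j+1}}^{ab}$, ${k_{j+1}}^{ab}$) and the same lemmas (\ref{lem wdecay}, \ref{lem bracket}, \ref{lem west}, \ref{lem dhdp}, \ref{lem dp'dp}) for the $W^{1,\infty}_m$ bound. The only cosmetic difference is at the very end: the paper closes the recursive inequality $y_{j+1}' \le C + C y_j$ by exhibiting the explicit supersolution $y(t) = y_0 e^{C(t-t_0)} + e^{C(t-t_0)} - 1$, which bounds all $y_j$ on the full interval without any further shrinking of $T$.
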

\begin{proof}
We recall that the zeroth iteration functions are defined by initial data. Hence, $ { g_0 }^{ a b } $, $ { k_0 }^{ a b } $ and $ f_0 $ satisfy \eqref{lem lsi1} and \eqref{lem lsi2}, and there exists an interval on which $ { g_0 }^{ a b } $, $ { k_0 }^{ a b } $ and $ f_0 $ satisfy \eqref{lem ls1} and \eqref{lem ls2}. Suppose that there exists an interval on which $ { g_j }^{ a b } $ and $ { k_j }^{ a b } $ satisfy \eqref{lem ls1} for some $ j \geq 0 $. We apply \eqref{lem lsi2} and \eqref{lem ls1} to Lemma \ref{lem lb} with $ c_1 = C_1 A $ to obtain $ [ t_0 , T_1 ] $ such that
\begin{align}
\sup_{ t_0 \leq s \leq T_1 } \| f_{ j + 1 } ( s ) \|_{ L^{ \infty }_m } \leq C_2 B . \label{lem ls3}
\end{align}
On the other hand, we apply \eqref{lem lsi1} and \eqref{lem ls3} to Lemma \ref{lem le} with $ c_2 = C_2 B $ to obtain $ [ t_0 , T ] $ with $ T = \min ( T_1 , T_2 ) $ on which we have
\begin{align}
\frac{ 1 }{ C_1 A } | p |^2 \leq Z^2 { g_{ j + 1 } }^{ a b } p_a p_b \leq C_1 A | p |^2 , \qquad \| k_{ j + 1 }^{ * * } \| \leq C_1 A Z^{ - 2 } , \qquad H_{ j + 1 } \geq \gamma , \qquad F_{ j + 1 } \leq \frac{ \eta^2 }{ 4 } . \label{lem ls4}
\end{align}
Now, we can conclude by induction that $ { g_j }^{ a b } $, $ { k_j }^{ a b } $ and $ f_j $ for all $ j \geq 0 $ satisfy \eqref{lem ls1} and \eqref{lem ls2} on the interval $ [ t_0 , T ] $.

Now, we further assume that $ f_0 \in W^{ 1 , \infty }_m $. We need to consider $ p $ derivatives of $ f_j $. We obtain from \eqref{EB iter0}
\begin{align}
& \frac{ \partial }{ \partial t } \left( \frac{ \partial f_{ j + 1 } }{ \partial p_a } \right) + \frac{ \partial \lambda_{ \eta , j } }{ \partial p_a } f_{ j + 1 } + \lambda_{ \eta , j } \frac{ \partial f_{ j + 1 } }{ \partial p_a } \nonumber \\
& = ( \det g_j^{ - 1 } )^{ \frac12 } \iint \frac{ \partial }{ \partial p_a } \left( \frac{ h_j^{ 1 - b } \sqrt{ s_j } }{ { p_j }^0 { q_j }^0 } \right) ( f_j ( p_j' ) f_j ( q_j' ) - f_{ j + 1 } ( p ) f_j ( q ) ) w_j^{ - 1 } ( q ) \, d \omega \, d q \nonumber \\
& \quad + ( \det g_j^{ - 1 } )^{ \frac12 } \iint \frac{ h_j^{ 1 - b } \sqrt{ s_j } }{ { p_j }^0 { q_j }^0 } \frac{ \partial ( f_j ( p'_j ) f_j ( q_j' ) ) }{ \partial p_a } w_j^{ - 1 } ( q ) \, d \omega \, d q \nonumber \\
& \quad - ( \det g_j^{ - 1 } )^{ \frac12 } \iint \frac{ h_j^{ 1 - b } \sqrt{ s_j } }{ { p_j }^0 { q_j }^0 } \frac{ \partial f_{ j + 1 } ( p ) }{ \partial p_a } f_j ( q ) w_j^{ - 1 } ( q ) \, d \omega \, d q . \label{est df_ndp0}
\end{align}
Multiplying the above by $ \mbox{sgn} ( \partial f_{ j + 1 } / \partial p_a ) $, which can be written as
\begin{align*}
\mbox{sgn} \left( \frac{ \partial f_{ j + 1 } }{ \partial p_a } \right) = \frac{ \frac{ \partial f_{ j + 1 } }{ \partial p_a } }{ \left| \frac{ \partial f_{ j + 1 } }{ \partial p_a } \right| } ,
\end{align*}
we obtain from the left hand side of \eqref{est df_ndp0}
\begin{align}
\mbox{sgn} \left( \frac{ \partial f_{ j + 1 } }{ \partial p_a } \right) (\mbox{LHS}) & = \mbox{sgn} \left( \frac{ \partial f_{ j + 1 } }{ \partial p_a } \right) \left( \frac{ \partial }{ \partial t } \left( \frac{ \partial f_{ j + 1 } }{ \partial p_a } \right) + \frac{ \partial \lambda_{ \eta , j } }{ \partial p_a } f_{ j + 1 } + \lambda_{ \eta , j } \frac{ \partial f_{ j + 1 } }{ \partial p_a } \right) \nonumber \\
& = \frac{ \partial }{ \partial t } \left| \frac{ \partial f_{ j + 1 } }{ \partial p_a } \right| + \mbox{sgn} \left( \frac{ \partial f_{ j + 1 } }{ \partial p_a } \right) \frac{ \partial \lambda_{ \eta , j } }{ \partial p_a } f_{ j + 1 } + \lambda_{ \eta, j } \left| \frac{ \partial f_{ j + 1 } }{ \partial p_a } \right| \nonumber \\
& \geq \frac{ \partial }{ \partial t } \left| \frac{ \partial f_{ j + 1 } }{ \partial p_a } \right| - \left| \frac{ \partial \lambda_{ \eta , j } }{ \partial p_a } \right| f_{ j + 1 } + \lambda_{ \eta, j } \left| \frac{ \partial f_{ j + 1 } }{ \partial p_a } \right| , \label{est df_ndp01}
\end{align}
where we used $ \partial_t | \psi | = ( \psi / | \psi | ) \partial_t \psi $ for any $ \partial_t \psi \in L^\infty $ in the sense of distribution. For more details about derivatives of the absolute value we refer to Section 6.17 of \cite{LL}. In a similar way we obtain from the right hand side of \eqref{est df_ndp0} 
\begin{align}
& \mbox{sgn} \left( \frac{ \partial f_{ j + 1 } }{ \partial p_a } \right) (\mbox{RHS}) \nonumber \\
& = \mbox{sgn} \left( \frac{ \partial f_{ j + 1 } }{ \partial p_a } \right) \left( ( \det g_j^{ - 1 } )^{ \frac12 } \iint \frac{ \partial }{ \partial p_a } \left( \frac{ h_j^{ 1 - b } \sqrt{ s_j } }{ { p_j }^0 { q_j }^0 } \right) ( f_j ( p_j' ) f_j ( q_j' ) - f_{ j + 1 } ( p ) f_j ( q ) ) w_j^{ - 1 } ( q ) \, d \omega \, d q \right) \nonumber \\
& \quad + \mbox{sgn} \left( \frac{ \partial f_{ j + 1 } }{ \partial p_a } \right) \left( ( \det g_j^{ - 1 } )^{ \frac12 } \iint \frac{ h_j^{ 1 - b } \sqrt{ s_j } }{ { p_j }^0 { q_j }^0 } \frac{ \partial ( f_j ( p'_j ) f_j ( q_j' ) ) }{ \partial p_a } w_j^{ - 1 } ( q ) \, d \omega \, d q \right) \nonumber \\
& \quad - \mbox{sgn} \left( \frac{ \partial f_{ j + 1 } }{ \partial p_a } \right) \left( ( \det g_j^{ - 1 } )^{ \frac12 } \iint \frac{ h_j^{ 1 - b } \sqrt{ s_j } }{ { p_j }^0 { q_j }^0 } \frac{ \partial f_{ j + 1 } ( p ) }{ \partial p_a } f_j ( q ) w_j^{ - 1 } ( q ) \, d \omega \, d q \right) \nonumber \allowdisplaybreaks \\
& = \mbox{sgn} \left( \frac{ \partial f_{ j + 1 } }{ \partial p_a } \right) \left( ( \det g_j^{ - 1 } )^{ \frac12 } \iint \frac{ \partial }{ \partial p_a } \left( \frac{ h_j^{ 1 - b } \sqrt{ s_j } }{ { p_j }^0 { q_j }^0 } \right) ( f_j ( p_j' ) f_j ( q_j' ) - f_{ j + 1 } ( p ) f_j ( q ) ) w_j^{ - 1 } ( q ) \, d \omega \, d q \right) \nonumber \\
& \quad + \mbox{sgn} \left( \frac{ \partial f_{ j + 1 } }{ \partial p_a } \right) \left( ( \det g_j^{ - 1 } )^{ \frac12 } \iint \frac{ h_j^{ 1 - b } \sqrt{ s_j } }{ { p_j }^0 { q_j }^0 } \frac{ \partial ( f_j ( p'_j ) f_j ( q_j' ) ) }{ \partial p_a } w_j^{ - 1 } ( q ) \, d \omega \, d q \right) \nonumber \\
& \quad - ( \det g_j^{ - 1 } )^{ \frac12 } \iint \frac{ h_j^{ 1 - b } \sqrt{ s_j } }{ { p_j }^0 { q_j }^0 } \left| \frac{ \partial f_{ j + 1 } ( p ) }{ \partial p_a } \right| f_j ( q ) w_j^{ - 1 } ( q ) \, d \omega \, d q \allowdisplaybreaks \nonumber \\
& \leq ( \det g_j^{ - 1 } )^{ \frac12 } \iint \left| \frac{ \partial }{ \partial p_a } \left( \frac{ h_j^{ 1 - b } \sqrt{ s_j } }{ { p_j }^0 { q_j }^0 } \right) \right| ( f_j ( p'_j ) f_j ( q_j' ) + f_{ j + 1 } ( p ) f_j ( q ) ) w_j^{ - 1 } ( q ) \, d \omega \, d q \nonumber \\
& \quad + ( \det g_j^{ - 1 } )^{ \frac12 } \iint \frac{ h_j^{ 1 - b } \sqrt{ s_j } }{ { p_j }^0 { q_j }^0 } \left| \frac{ \partial ( f_j ( p_j' ) f_j ( q_j' ) ) }{ \partial p_a } \right| w_j^{ - 1 } ( q ) \, d \omega \, d q \nonumber \\
& \quad - ( \det g_j^{ - 1 } )^{ \frac12 } \iint \frac{ h_j^{ 1 - b } \sqrt{ s_j } }{ { p_j }^0 { q_j }^0 } \left| \frac{ \partial f_{ j + 1 } ( p ) }{ \partial p_a } \right| f_j ( q ) w_j^{ - 1 } ( q ) \, d \omega \, d q . \label{est df_ndp02}
\end{align}
We combine the estimates \eqref{est df_ndp01} and \eqref{est df_ndp02} to obtain
\begin{align}
& \frac{ \partial }{ \partial t } \left| \frac{ \partial f_{ j + 1 } }{ \partial p_a } \right| + \lambda_{ \eta, j } \left| \frac{ \partial f_{ j + 1 } }{ \partial p_a } \right| \leq \left| \frac{ \partial \lambda_{ \eta , j } }{ \partial p_a } \right| f_{ j + 1 } \nonumber \\
& \quad + ( \det g_j^{ - 1 } )^{ \frac12 } \iint \left| \frac{ \partial }{ \partial p_a } \left( \frac{ h_j^{ 1 - b } \sqrt{ s_j } }{ { p_j }^0 { q_j }^0 } \right) \right| ( f_j ( p'_j ) f_j ( q_j' ) + f_{ j + 1 } ( p ) f_j ( q ) ) w_j^{ - 1 } ( q ) \, d \omega \, d q \nonumber \\
& \quad + ( \det g_j^{ - 1 } )^{ \frac12 } \iint \frac{ h_j^{ 1 - b } \sqrt{ s_j } }{ { p_j }^0 { q_j }^0 } \left| \frac{ \partial ( f_j ( p_j' ) f_j ( q_j' ) ) }{ \partial p_a } \right| w_j^{ - 1 } ( q ) \, d \omega \, d q \nonumber \\
& \quad - ( \det g_j^{ - 1 } )^{ \frac12 } \iint \frac{ h_j^{ 1 - b } \sqrt{ s_j } }{ { p_j }^0 { q_j }^0 } \left| \frac{ \partial f_{ j + 1 } ( p ) }{ \partial p_a } \right| f_j ( q ) w_j^{ - 1 } ( q ) \, d \omega \, d q . \label{est df_ndp1}
\end{align}
The first term on the right hand side of \eqref{est df_ndp1} is estimated by Lemmas \ref{lem wdecay} and \ref{lem assum lb}:
\begin{align*}
\left| \frac{ \partial \lambda_{ \eta , j } }{ \partial p_a } \right| \leq C Z^{ - 1 } Z_\eta \leq C .
\end{align*}
The second term on the right hand side of \eqref{est df_ndp1} is estimated by Lemma \ref{lem dhdp} with \eqref{dpdp} and \eqref{dpdp est}:
\begin{align*}
\left| \frac{ \partial }{ \partial p_a } \left( \frac{ h_j^{ 1 - b } \sqrt{ s_j } }{ { p_j }^0 { q_j }^0 } \right) \right| & \leq C \left( \frac{ h_j^{ - b } \sqrt{ s_j } }{ { p_j }^0 { q_j }^0 } \left| \frac{ \partial h_j }{ \partial p_a } \right| + \frac{ h_j^{ 1 - b } }{ { p_j }^0 { q_j }^0 } \left| \frac{ \partial \sqrt{ s_j } }{ \partial p_a } \right| + \frac{ h_j^{ 1 - b } \sqrt{ s_j } }{ ( { p_j }^0 )^2 { q_j }^0 } \left| \frac{ \partial { p_j }^0 }{ \partial p_a } \right| \right) \\
& \leq C \left( { q_j }^0 + \frac{ 1 }{ { p_j }^0 } \right) h_j^{ - b } \| g_j^{ - 1 } \| \| e_j^{ - 1 } \| \\
& \leq C Z^{ - 1 } \frac{ { q_j }^0 }{ h_j^{ b } }  .
\end{align*}
The third term on the right side of \eqref{est df_ndp1} can be estimated by the chain rule and Lemma \ref{lem dp'dp}. Now, we simply ignore the second term on the left hand side and the fourth term on the right hand side to obtain the following estimate:
\begin{align}
& \frac{ \partial }{ \partial t } \left( \langle p \rangle^m \left| \frac{ \partial f_{ j + 1 } }{ \partial p_a } \right| \right) \leq C \| f_{ j + 1 } \|_{ L^\infty_m } \nonumber \\
& \quad + C Z^{ - 1 } ( \det g_j^{ - 1 } )^{ \frac12 } ( \| f_j \|_{ L^\infty_m }^2 + \| f_{ j + 1 } \|_{ L^\infty_m } \| f_j \|_{ L^\infty_m } ) \int \frac{ { q_j }^0 }{ h_j^{ b } } w_j^{ - 1 } ( q ) \, d q \nonumber \\
& \quad + C ( \det g_j^{ - 1 } )^{ \frac12 } \left\| \frac{ \partial f_j }{ \partial p } \right\|_{ { L^\infty_m } } \| f_j \|_{ { L^\infty_m } } \int \frac{ ( { q_j }^0 )^9 }{ h_j^b } w_j^{ - 1 } ( q ) \, d q . \nonumber
\end{align}
Applying Lemma \ref{lem west}, and using \eqref{lem ls2}, we obtain
\begin{align}
\frac{ d }{ d t } \left\| \frac{ \partial f_{ j + 1 } }{ \partial p } \right\|_{ L^\infty_m } \leq C + C \left\| \frac{ \partial f_j }{ \partial p } \right\|_{ { L^\infty_m } } . \label{est df_ndp2}
\end{align}
In the rest of the proof we fix the constant $ C $ above. Let us write $ y_j = \| \partial f_j / \partial p \|_{ L^\infty_m } $. Then, we have $ y_0' = 0 $ and $ y_{ j + 1 }' \leq C + C y_j $ for $ j \geq 0 $, and let $ y $ denote 
\begin{align*}
y ( t ) = y_0 e^{ C ( t - t_0 ) } + e^{ C ( t - t_0 ) } - 1 .
\end{align*}
We notice that $ y_0 \leq y $ for all $ t_0 \leq t \leq T $. Suppose that $ y_j \leq y $ for some $ j \geq 0 $. We integrate \eqref{est df_ndp2} to obtain
\begin{align*}
y_{ j + 1 } & \leq y_0 + C ( t - t_0 ) + C \int_{ t_0 }^t y_j \, d s \\
& \leq y_0 + C ( t - t_0 ) + C \int_{ t_0 }^t y \, d s \\
& = y .
\end{align*}
Hence, we conclude by induction that $ y_j $ is bounded by $ y $ for all $ j \geq 0 $ and $ t_0 \leq t \leq T $, which implies that $ y_j $ is bounded on $ [ t_0 , T ] $, and uniformly in $ j $. This completes the proof of the lemma.
\end{proof}

\subsubsection{Proof of the local existence}\label{sec lp}
In this part we use the iteration described in Section \ref{sec li} to obtain the local existence. Suppose that $ g_0^{ a b } $, $ k_0^{ a b } $ and $ f_0 $ are initial data satisfying the conditions of Lemma \ref{lem ls}, and let $ { g_j }^{ a b } $, $ { k_j }^{ a b } $ and $ f_j $ be the corresponding solutions on $ [ t_0 , T ] $.

Now, we will show that the sequences $ { g_j }^{ a b } $, $ { k_j }^{ a b } $ and $ f_j $ converge. We use the equation \eqref{EB iter0} to obtain the equation of $ f_{ j + 1 } - f_j $:
\begin{align*}
& \frac{ \partial ( f_{ j + 1 } - f_j ) \langle p \rangle^r }{ \partial t } + ( \lambda_{ \eta , j } f_{ j + 1 } - \lambda_{ \eta , j - 1 } f_j ) \langle p \rangle^r \\
& = ( \det g_j^{ - 1 } )^{ \frac12 } \iint \frac{ h_j^{ 1 - b } \sqrt{ s_j } }{ { p_j }^0 { q_j }^0 } f_j ( p_j' ) f_j ( q_j' ) \langle p \rangle^r w_j^{ - 1 } ( q ) \, d \omega \, d q \\
& \quad - ( \det g_{ j - 1 }^{ - 1 } )^{ \frac12 } \iint \frac{ h_{ j - 1 }^{ 1 - b } \sqrt{ s_{ j - 1 } } }{ { p_{ j - 1 } }^0 { q_{ j - 1 } }^0 } f_{ j - 1 } ( p_{ j - 1 }' ) f_{ j - 1 } ( q_{ j - 1 }' ) \langle p \rangle^r w_{ j - 1 }^{ - 1 } ( q ) \, d \omega \, d q \\
& \quad - ( \det g_j^{ - 1 } )^{ \frac12 } \iint \frac{ h_j^{ 1 - b } \sqrt{ s_j } }{ { p_j }^0 { q_j }^0 } f_{ j + 1 } ( p ) f_j ( q ) \langle p \rangle^r w_j^{ - 1 } ( q ) \, d \omega \, d q \\
& \quad + ( \det g_{ j - 1 }^{ - 1 } )^{ \frac12 } \iint \frac{ h_{ j - 1 }^{ 1 - b } \sqrt{ s_{ j - 1 } } }{ { p_{ j - 1 } }^0 { q_{ j - 1 } }^0 } f_{ j } ( p ) f_{ j - 1 } ( q ) \langle p \rangle^r w_{ j - 1 }^{ - 1 } ( q ) \, d \omega \, d q , 
\end{align*}
where $ r > 5 $. Here, we note that
\begin{align*}
\lambda_{ \eta , j } f_{ j + 1 } - \lambda_{ \eta , j - 1 } f_j = \lambda_{ \eta , j } ( f_{ j + 1 } - f_j ) + ( \lambda_{ \eta , j } - \lambda_{ \eta , j - 1 } ) f_j ,
\end{align*}
where $ \lambda_{ \eta , j } \geq 0 $ by Lemmas \ref{lem wdecay} and \ref{lem ls}. Hence, we obtain
\begin{align}
\frac{ \partial | f_{ j + 1 } - f_j | \langle p \rangle^r }{ \partial t } \leq F + G + L , \label{est fj}
\end{align}
where $ F $, $ G $ and $ L $ are given by
\begin{align*}
F & = | \lambda_{ \eta , j } - \lambda_{ \eta , j - 1 } | f_j \langle p \rangle^r , \\ 
G & = \left| ( \det g_j^{ - 1 } )^{ \frac12 } - ( \det g_{ j - 1 }^{ - 1 } )^{ \frac12 } \right| \iint \frac{ h_j^{ 1 - b } \sqrt{ s_j } }{ { p_j }^0 { q_j }^0 } f_{ j } ( p_j' ) f_j ( q_j' ) \langle p \rangle^r w_j^{ - 1 } ( q ) \, d \omega \, d q \\
& \quad + ( \det g_{ j - 1 }^{ - 1 } )^{ \frac12 } \iint \left| \frac{ h_j^{ 1 - b } \sqrt{ s_j } }{ { p_j }^0 { q_j }^0 } - \frac{ h_{ j - 1 }^{ 1 - b } \sqrt{ s_{ j - 1 } } }{ { p_{ j - 1 } }^0 { q_{ j - 1 } }^0 } \right| f_{ j } ( p_j' ) f_{ j  } ( q_j' ) \langle p \rangle^r w_{ j  }^{ - 1 } ( q ) \, d \omega \, d q \\
& \quad + ( \det g_{ j - 1 }^{ - 1 } )^{ \frac12 } \iint \frac{ h_{ j - 1 }^{ 1 - b } \sqrt{ s_{ j - 1 } } }{ { p_{ j - 1 } }^0 { q_{ j - 1 } }^0 } \left| f_{ j } ( p_j' ) f_j ( q_j' ) - f_{ j - 1 } ( p_j' ) f_{ j - 1 } ( q_j' ) \right| \langle p \rangle^r w_{ j  }^{ - 1 } ( q ) \, d \omega \, d q \\
& \quad + ( \det g_{ j - 1 }^{ - 1 } )^{ \frac12 } \iint \frac{ h_{ j - 1 }^{ 1 - b } \sqrt{ s_{ j - 1 } } }{ { p_{ j - 1 } }^0 { q_{ j - 1 } }^0 } \left| f_{ j - 1 } ( p_j' ) f_{ j - 1 } ( q_j' ) - f_{ j - 1 } ( p_{ j - 1 }' ) f_{ j - 1 } ( q_{ j - 1 }' ) \right| \langle p \rangle^r w_{ j  }^{ - 1 } ( q ) \, d \omega \, d q \\
& \quad + ( \det g_{ j - 1 }^{ - 1 } )^{ \frac12 } \iint \frac{ h_{ j - 1 }^{ 1 - b } \sqrt{ s_{ j - 1 } } }{ { p_{ j - 1 } }^0 { q_{ j - 1 } }^0 } f_{ j - 1 } ( p_{ j - 1 }' ) f_{ j - 1 } ( q_{ j - 1 }' ) \langle p \rangle^r \left| w_{ j  }^{ - 1 } ( q )  - w_{ j - 1 }^{ - 1 } ( q ) \right| \, d \omega \, d q , \\ 
& =: G_1 + G_2 + G_3 + G_4 + G_5 , \allowdisplaybreaks \\ 
L & = \left| ( \det g_j^{ - 1 } )^{ \frac12 } - ( \det g_{ j - 1 }^{ - 1 } )^{ \frac12 } \right| \iint \frac{ h_j^{ 1 - b } \sqrt{ s_j } }{ { p_j }^0 { q_j }^0 } f_{ j + 1 } ( p ) f_j ( q ) \langle p \rangle^r w_j^{ - 1 } ( q ) \, d \omega \, d q \\
& \quad + ( \det g_{ j - 1 }^{ - 1 } )^{ \frac12 } \iint \left| \frac{ h_j^{ 1 - b } \sqrt{ s_j } }{ { p_j }^0 { q_j }^0 } - \frac{ h_{ j - 1 }^{ 1 - b } \sqrt{ s_{ j - 1 } } }{ { p_{ j - 1 } }^0 { q_{ j - 1 } }^0 } \right| f_{ j +1 } ( p ) f_{ j  } ( q ) \langle p \rangle^r w_{ j  }^{ - 1 } ( q ) \, d \omega \, d q \\
& \quad + ( \det g_{ j - 1 }^{ - 1 } )^{ \frac12 } \iint \frac{ h_{ j - 1 }^{ 1 - b } \sqrt{ s_{ j - 1 } } }{ { p_{ j - 1 } }^0 { q_{ j - 1 } }^0 } \left| f_{ j + 1 } ( p ) f_j ( q ) - f_{ j } ( p ) f_{ j - 1 } ( q ) \right| \langle p \rangle^r w_{ j  }^{ - 1 } ( q ) \, d \omega \, d q \\
& \quad + ( \det g_{ j - 1 }^{ - 1 } )^{ \frac12 } \iint \frac{ h_{ j - 1 }^{ 1 - b } \sqrt{ s_{ j - 1 } } }{ { p_{ j - 1 } }^0 { q_{ j - 1 } }^0 } f_{ j } ( p ) f_{ j - 1 } ( q ) \langle p \rangle^r \left| w_{ j  }^{ - 1 } ( q )  - w_{ j - 1 }^{ - 1 } ( q ) \right| \, d \omega \, d q \\
& = : L_1 + L_2 + L_3 + L_4 .
\end{align*}
Next, we use \eqref{EB iter1} to obtain the differential inequality for $ { g_{ j + 1 } }^{ a b } - { g_j }^{ a b } $: 
\begin{align}
\frac{ d | { g_{ j + 1 } }^{ a b } - { g_j }^{ a b } | }{ d t } & \leq 2 | { k_j }^{ a b } - { k_{ j - 1 } }^{ a b } | , \label{est gj} 
\end{align}
and use \eqref{EB iter2} to obtain the equation of $ { k_{ j + 1 } }^{ a b } - { k_j }^{ a b } $: 
\begin{align*}
\frac{ d ( { k_{ j + 1 } }^{ a b } - { k_j }^{ a b } ) }{ d t } & = - 2 { k_j }^a_c { k_j }^{ b c } - k_j { k_j }^{ a b } + { S_{ j + 1 } }^{ a b } + \frac12 ( \rho_{ j + 1 } - S_{ j + 1 } ) { g_j }^{ a b } + \Lambda { g_j }^{ a b } \\ 
& \quad + 2 { k_{ j - 1 } }^a_c { k_{ j - 1 } }^{ b c } + k_{ j - 1 } { k_{ j - 1 } }^{ a b } - { S_{ j } }^{ a b } - \frac12 ( \rho_{ j } - S_{ j } ) { g_{ j - 1 } }^{ a b } - \Lambda { g_{ j - 1 } }^{ a b } , 
\end{align*} 
so that we have 
\begin{align}
\frac{ d | { k_{ j + 1 } }^{ a b } - { k_j }^{ a b } | }{ d t } & \leq K_1 + K_2 + \frac12 K_3 + \Lambda | { g_j }^{ a b } - { g_{ j - 1 } }^{ a b } | , \label{est kj} 
\end{align}
where $ K_1 $, $ K_2 $ and $ K_3 $ are given by
\begin{align*}
K_1  & = | 2 { k_j }^a_c { k_j }^{ b c } + k_j { k_j }^{ a b } - 2 { k_{ j - 1 } }^a_c { k_{ j - 1 } }^{ b c } - k_{ j - 1 } { k_{ j - 1 } }^{ a b } | , \\
K_2 & = | { S_{ j + 1 } }^{ a b } - { S_{ j } }^{ a b } | , \\ 
K_3 & = | ( \rho_{ j + 1 } - S_{ j + 1 } ) { g_j }^{ a b } - ( \rho_{ j } - S_{ j } ) { g_{ j - 1 } }^{ a b } | . 
\end{align*}

Below, we will estimate $ F $, $ G_i $, $ L_i $ and $ K_i $ in terms of $ { g_j }^{ a b } - { g_{ j - 1 } }^{ a b } $, $ { k_j }^{ a b } - { k_{ j - 1 } }^{ a b } $ and $ f_j - f_{ j - 1 } $. The following lemmas will be needed.

\begin{lemma} \label{lem detg} 
Let $ A $ and $ B $ be positive definite $ n \times n $ matrices. For any $ 0 \leq \alpha \leq 1 $, we have 
\begin{align*}
\det ( \alpha A + ( 1 - \alpha ) B ) \geq ( \det A )^\alpha ( \det B )^{ 1 - \alpha } , 
\end{align*}
where the equality holds, if and only if $ A = B $. 
\end{lemma}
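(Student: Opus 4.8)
The plan is to reduce this Minkowski/Ky Fan--type determinant inequality (equivalently, the concavity of $\log\det$ on positive definite matrices) to the scalar weighted arithmetic--geometric mean inequality by simultaneous diagonalization. First I would use the positive definiteness of $A$ to factor
\[
\alpha A + ( 1 - \alpha ) B = A^{ 1 / 2 } \bigl( \alpha I + ( 1 - \alpha ) C \bigr) A^{ 1 / 2 } ,
\]
where $A^{ 1 / 2 }$ is the positive definite square root of $A$ and $C := A^{ - 1 / 2 } B A^{ - 1 / 2 }$ is again symmetric and positive definite. Taking determinants and using multiplicativity gives $\det ( \alpha A + ( 1 - \alpha ) B ) = \det A \cdot \det ( \alpha I + ( 1 - \alpha ) C )$, while the right-hand side of the claimed inequality equals $( \det A )^\alpha ( \det B )^{ 1 - \alpha } = \det A \cdot ( \det C )^{ 1 - \alpha }$ since $\det B = \det A \cdot \det C$. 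Hence the assertion is equivalent to $\det ( \alpha I + ( 1 - \alpha ) C ) \geq ( \det C )^{ 1 - \alpha }$.

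Next I would diagonalize $C$: letting $\mu_1 , \dots , \mu_n > 0$ denote its eigenvalues, one has $\det ( \alpha I + ( 1 - \alpha ) C ) = \prod_{ i = 1 }^n ( \alpha + ( 1 - \alpha ) \mu_i )$ and $( \det C )^{ 1 - \alpha } = \prod_{ i = 1 }^n \mu_i^{ 1 - \alpha }$, so it suffices to prove the scalar inequality $\alpha + ( 1 - \alpha ) \mu \geq \mu^{ 1 - \alpha }$ for every $\mu > 0$. This is precisely the weighted AM--GM inequality applied to the numbers $1$ and $\mu$ with weights $\alpha$ and $1 - \alpha$ (equivalently, the concavity of $\log$); for $0 < \alpha < 1$ equality holds if and only if $\mu = 1$.

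Finally I would read off the equality case. Multiplying the scalar inequalities over $i$, equality in $\det ( \alpha I + ( 1 - \alpha ) C ) \geq ( \det C )^{ 1 - \alpha }$ forces $\mu_i = 1$ for all $i$ when $0 < \alpha < 1$, hence $C = I$ by symmetry, hence $B = A^{ 1 / 2 } C A^{ 1 / 2 } = A$; conversely $A = B$ trivially yields equality. (For $\alpha \in \{ 0 , 1 \}$ the inequality is an identity, so the equality statement is to be understood for $\alpha \in ( 0 , 1 )$.) There is no genuine obstacle in this argument; the only point requiring a little care is the equality characterization, where one must combine the strict form of weighted AM--GM with the fact that a symmetric matrix all of whose eigenvalues equal $1$ must be the identity. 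Everything else is a one-line reduction.
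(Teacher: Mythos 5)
Your proof is correct and complete. The paper itself does not present an argument at all: it simply cites Corollary 7.6.8 of \cite{HJ} (Horn and Johnson), which is exactly this Minkowski-type concavity statement for $\log\det$ on positive definite matrices. Your reduction---congruence by $A^{1/2}$ to replace the pair $(A,B)$ by $(I,C)$ with $C=A^{-1/2}BA^{-1/2}$, diagonalization of $C$, and the weighted arithmetic--geometric mean inequality $\alpha+(1-\alpha)\mu\geq\mu^{1-\alpha}$ applied eigenvalue by eigenvalue---is essentially the standard textbook proof underlying that citation, so you have in effect supplied the argument the paper outsources. Your handling of the equality case is also right, and your remark that the ``equality iff $A=B$'' clause is only meaningful for $\alpha\in(0,1)$ is a legitimate (if minor) precision that the lemma as stated glosses over; for the application in the paper one only uses the inequality with $\alpha=\varrho-j+1\in[0,1]$, so nothing is affected.
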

\begin{proof}
We refer to Corollary 7.6.8 of \cite{HJ} for the proof. 
\end{proof}

\begin{lemma}\label{lem gs}
Let $ { g_{ j - 1 } }^{ a b } $ and $ { g_j }^{ a b } $ be the solutions of Lemma \ref{lem ls}, and define $ { g_\varrho }^{ a b } $ by
\begin{align*}
{ g_\varrho }^{ a b } = ( \varrho - j + 1 ) { g_j }^{ a b } + ( j - \varrho ) { g_{ j - 1 } }^{ a b } ,
\end{align*}
for $ j - 1 \leq \varrho \leq j $. Then, we have 
\begin{align}
\| g_\varrho^{ - 1 } \| \leq C Z^{ - 2 } , \qquad Z^6 \det g_\varrho^{ - 1 } \geq \frac{ 1 }{ C } , \qquad \| e_\varrho^{ - 1 } \| \leq C Z , \label{lem gs1} 
\end{align}
where $ e_\varrho^{ - 1 } $ is the orthonormal frame associated with the metric $ { g_\varrho }^{ a b } $. Moreover, $ { p_{ j - 1 } }^0 $,  $ { p_\varrho }^0 $ and $ { p_j }^0 $ are all equivalent for $ j - 1 \leq \varrho \leq j $, i.e.~
\begin{align}
\frac{ 1 }{ C } { p_\varrho }^0 \leq { p_j }^0 \leq C { p_\varrho }^0 , \label{lem gs2}
\end{align}
where $ { p_\varrho }^0 = p^0 ( { g_\varrho }^{ a b } ) $ with $ p^0 $ as a function of $ g^{ a b } $. 
\end{lemma}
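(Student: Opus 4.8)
The plan is to exploit the fact that $g_\varrho^{ab}$ is a convex combination of $g_j^{ab}$ and $g_{j-1}^{ab}$: all four estimates are properties of positive definite symmetric matrices whose eigenvalues (after scaling by $Z^2$) lie in a fixed interval, and such a property is inherited under convex combinations. Concretely, writing $\alpha = \varrho - j + 1 \in [0,1]$ so that $g_\varrho^{ab} = \alpha\, g_j^{ab} + (1-\alpha)\, g_{j-1}^{ab}$, one has for every $p \in \bbr^3$
\[
Z^2 g_\varrho^{ab} p_a p_b = \alpha\, Z^2 g_j^{ab} p_a p_b + (1-\alpha)\, Z^2 g_{j-1}^{ab} p_a p_b ,
\]
and since both terms on the right lie between $\tfrac{1}{C_1 A}|p|^2$ and $C_1 A |p|^2$ by \eqref{lem ls1}, so does the left-hand side. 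Thus $g_\varrho^{ab}$ satisfies exactly the same quadratic-form bound as $g_j^{ab}$ and $g_{j-1}^{ab}$.

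Given this, the first three estimates in \eqref{lem gs1} follow verbatim from the argument in the proof of Lemma \ref{lem assum lb}. As recorded in Remark \ref{rem assum}, only the quadratic-form bound on $Z^2 g^{ab}$ was used there to deduce $\| g^{-1} \| \le C Z^{-2}$ (positive definiteness plus a diagonal bound) and $Z^6 \det g^{-1} \ge 1/C$ (the product of the eigenvalues of $Z^2 g^{ab}$, each bounded below by $1/(C_1 A)$), and these two were then inserted into the explicit formula of Remark \ref{rem ortho} to obtain $\| e^{-1} \| \le C Z$; running the same steps with $g^{ab}$ replaced by $g_\varrho^{ab}$ gives \eqref{lem gs1}. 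For the determinant bound one may alternatively invoke Lemma \ref{lem detg} applied to the matrices $(Z^2 g_j^{ab})$, $(Z^2 g_{j-1}^{ab})$ and the parameter $\alpha$, which yields $Z^6 \det g_\varrho^{-1} \ge (Z^6 \det g_j^{-1})^{\alpha}(Z^6 \det g_{j-1}^{-1})^{1-\alpha} \ge 1/C$ once the endpoint bounds are known.

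For \eqref{lem gs2}, I would note that the quadratic-form bounds give $\tfrac{1}{C_1 A} Z^{-2}|p|^2 \le g_\bullet^{ab} p_a p_b \le C_1 A\, Z^{-2}|p|^2$ for each of $\bullet \in \{\, j-1,\ \varrho,\ j \,\}$, so any two of the quantities $g_j^{ab} p_a p_b$ and $g_\varrho^{ab} p_a p_b$ differ by at most the factor $(C_1 A)^2$. Since $C_1 A \ge 1$, adding $1$ to both quantities preserves this comparison, and taking square roots yields $\tfrac1C\, p_\varrho^0 \le p_j^0 \le C\, p_\varrho^0$; the same reasoning gives the comparison of $p_{j-1}^0$ with $p_\varrho^0$, and hence \eqref{lem gs2}.

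I do not expect a genuine obstacle here; the lemma is essentially a bookkeeping consequence of \eqref{lem ls1} together with Lemma \ref{lem assum lb}. The one point deserving a word of care is that Lemma \ref{lem assum lb} was stated under the full hypothesis \eqref{assum lb}, whereas for $g_\varrho^{ab}$ only the metric part is available (and only that part is needed) — this is exactly the situation anticipated by Remark \ref{rem assum}, so the cited argument applies without change.
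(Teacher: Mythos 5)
Your proposal is correct and follows essentially the same route as the paper: the key point in both is that the convex combination $g_\varrho^{ab}$ inherits the two-sided quadratic-form bound \eqref{lem ls1}, after which \eqref{lem gs1} follows from the argument of Lemma \ref{lem assum lb} via Remark \ref{rem assum} (the paper uses Lemma \ref{lem detg} for the determinant, which you also note as an alternative), and \eqref{lem gs2} is the elementary comparison you spell out. No gaps.
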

\begin{proof}
The first estimate of \eqref{lem gs1} is clear by the definition of $ { g_\varrho }^{ a b } $:
\begin{align*}
| { g_\varrho }^{ a b } | & = | ( \varrho - j + 1 ) { g_j }^{ a b } + ( j - \varrho ) { g_{ j - 1 } }^{ a b } | \\
& \leq ( \varrho - j + 1 ) | { g_j }^{ a b } | + ( j - \varrho ) | { g_{ j - 1 } }^{ a b } | \\ 
& \leq C Z^{ - 2 } . 
\end{align*}
For the second estimate, we use Lemma \ref{lem detg} with $ \alpha = \varrho - j + 1 $:
\begin{align*}
Z^6 \det g_\varrho^{ - 1 } \geq Z^6 ( \det g_j^{ - 1 } )^{ \varrho - j + 1 } ( \det g_{ j - 1 }^{ - 1 } )^{ j - \varrho } \geq \frac{ 1 }{ C } ,
\end{align*}
where we used Lemma \ref{lem assum lb}. We notice that $ { g_\varrho }^{ a b } $ satisfies 
\begin{align}
\frac{ 1 }{ C } | p |^2 \leq Z^2 { g_\varrho }^{ a b } p_a p_b \leq C | p |^2 , \label{lem gs3}
\end{align}
which is also clear by the definition of $ { g_\varrho }^{ a b } $. Hence, we obtain the estimate of $ e_\varrho^{ - 1 } $ by Lemma \ref{lem assum lb} and Remark \ref{rem assum}. The equivalence \eqref{lem gs2} is also clear by \eqref{lem gs3}, and this completes the proof of the lemma.
\end{proof}

\begin{lemma} \label{lem Eg}
Suppose that $ E = E ( g^{ a b } ) $ is differentiable with respect to $ g^{ a b } $. Then, for any $ { g_j }^{ a b } $ and $ { g_{ j - 1 } }^{ a b } $, we have 
\begin{align*}
| E ( { g_j }^{ a b } ) - E ( { g_{ j - 1 } }^{ a b } ) | \leq C \sup \left\{ \left| \frac{ \partial E }{ \partial g^{ u v } } ( { g_\varrho }^{ a b } ) \right| : j - 1 \leq \varrho \leq j \mbox{ and } u , v = 1 , 2 , 3 \right\} \| g_j^{ - 1 } - g_{ j - 1 }^{ - 1 } \| ,  
\end{align*}
where $ { g_\varrho }^{ a b } = ( \varrho - j + 1 ) { g_j }^{ a b } + ( j - \varrho ) { g_{ j - 1 } }^{ a b } $. 
\end{lemma}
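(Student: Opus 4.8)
The plan is to reduce the statement to a one-variable application of the fundamental theorem of calculus along the affine segment joining $g_{j-1}^{ab}$ and $g_j^{ab}$. With $g_\varrho^{ab} = (\varrho - j + 1)g_j^{ab} + (j - \varrho)g_{j-1}^{ab}$ for $j-1 \le \varrho \le j$ as in the statement, note that $g_{j-1}^{ab}$ and $g_j^{ab}$ are precisely the two endpoints. Since the cone of symmetric positive definite matrices is convex, and by Lemma~\ref{lem gs} the matrices $g_\varrho^{ab}$ are uniformly positive definite for all $\varrho \in [j-1,j]$, the whole segment lies inside the open set on which $E$ is differentiable, so $\psi(\varrho) := E(g_\varrho^{ab})$ is a $C^1$ function of $\varrho$ on $[j-1,j]$.

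First I would use the chain rule to compute $\psi'(\varrho) = \sum_{u,v}\frac{\partial E}{\partial g^{uv}}(g_\varrho^{ab})\,\frac{dg_\varrho^{uv}}{d\varrho}$, where $\frac{dg_\varrho^{uv}}{d\varrho} = g_j^{uv} - g_{j-1}^{uv}$ and, by the definition of the norm in \eqref{norm g}, $|g_j^{uv} - g_{j-1}^{uv}| \le \|g_j^{-1} - g_{j-1}^{-1}\|$ for every pair $(u,v)$. Then I would write $E(g_j^{ab}) - E(g_{j-1}^{ab}) = \psi(j) - \psi(j-1) = \int_{j-1}^j \psi'(\varrho)\,d\varrho$, bound the integrand by pulling the supremum of $|\partial E/\partial g^{uv}|$ over $\varrho \in [j-1,j]$ and $u,v = 1,2,3$ outside the integral, and use that the interval of integration has unit length; the finitely many pairs $(u,v)$, together with the symmetrization convention of \eqref{est dgdg}, are absorbed into the constant $C$. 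This yields exactly the claimed inequality.

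There is no genuine obstacle here --- this is the standard mean value inequality along a segment. The only points deserving a sentence of care are (i) that the segment stays in the domain of $E$, which is handled by convexity of the positive definite cone together with Lemma~\ref{lem gs}, and (ii) the bookkeeping of which partial derivatives $\partial E/\partial g^{uv}$ occur, consistent with the convention in \eqref{est dgdg}; neither affects the structure of the argument. In the intended applications, with $E$ equal to $h$, $\sqrt{s}$, $p^0$, $1/p^0$ or $w^{-1}$ viewed as functions of the metric, the bounds on $\partial E/\partial g^{uv}$ needed along the segment are precisely those of \eqref{est dpdg}, \eqref{est d1/pdg} and \eqref{est dhdg} applied with $g^{ab} = g_\varrho^{ab}$, which is admissible thanks to the uniform estimates \eqref{lem gs1}.
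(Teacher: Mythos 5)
Your proposal is correct and follows essentially the same route as the paper: the authors also write $E(g_j^{ab}) - E(g_{j-1}^{ab}) = \int_{j-1}^j \frac{\partial E}{\partial g^{uv}}(g_\varrho^{ab})\,\frac{\partial g_\varrho^{uv}}{\partial \varrho}\,d\varrho$ and use $\partial g_\varrho^{uv}/\partial\varrho = g_j^{uv} - g_{j-1}^{uv}$ to conclude. Your additional remarks on convexity of the positive definite cone and the admissibility of the segment are harmless elaborations of points the paper leaves implicit.
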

\begin{proof}
The quantity $ E ( { g_j }^{ a b } ) - E ( { g_{ j - 1 } }^{ a b } ) $ can be written as 
\begin{align*}
E ( { g_j }^{ a b } ) - E ( { g_{ j - 1 } }^{ a b } ) & = \int_{ j - 1 }^j \frac{ d }{ d { \varrho } } E ( { g_\varrho }^{ a b } ) \, d \varrho \\ 
& = \int_{ j - 1 }^j \frac{ \partial E }{ \partial g^{ u v } } ( { g_\varrho }^{ a b } ) \frac{ \partial { g_\varrho }^{ u v } }{ \partial \varrho } \, d \varrho . 
\end{align*}
Since $ \partial { g_\varrho }^{ u v } / \partial \varrho = { g_j }^{ u v } - { g_{ j - 1 } }^{ u v } $, we obtain the desired result. 
\end{proof}

The above lemma is basically a mean value theorem. This will also be applied to the estimates of post-collision momenta, since $ p' $ and $ q' $ depend on $ g^{ a b } $, which are required to estimate $ G_4 $. We need the following lemma.

\begin{lemma} \label{lem dp'dg}
Suppose that a metric $ g^{ a b } $ satisfies 
\begin{align*}
\frac{ 1 }{ C } | p |^2 \leq g^{ a b } p_a p_b \leq C | p |^2 , \qquad \| g^{ - 1 } \| \leq C , \qquad \det g^{ - 1 } \geq \frac{ 1 }{ C } , 
\end{align*}
on an interval $ [ t_0 , T ] $. Then, we have 
\begin{align*}
\left| \frac{ \partial p'_a }{ \partial g^{ u v } } \right| + \left| \frac{ \partial q'_a }{ \partial g^{ u v } } \right| \leq C \langle p \rangle \langle q \rangle^4 , 
\end{align*}
where $ C $ depends on $ T $. 
\end{lemma}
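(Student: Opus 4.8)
The plan is to differentiate not the explicit expression \eqref{p'}--\eqref{q'} directly, but its reduction to the Minkowski parametrization \eqref{phat'^0}--\eqref{qhat'} through the orthonormal frame of Remark \ref{rem ortho}, where the good behaviour of the momentum derivatives is already known; the $g$-dependence of the frame itself is controlled by compactness. The hypotheses confine every eigenvalue of $(g^{ab})$ to $[1/C,C]$, so $(g^{ab})$ ranges over a compact subset of the positive-definite cone (in particular $\|g^{-1}\|\le C$ and $\det g^{-1}\ge 1/C$ hold automatically, as in Lemma \ref{lem assum lb} and Remark \ref{rem assum}). On this set the denominators $g^{11}$, $g^{11}g^{22}-(g^{12})^2$ and $\det g^{-1}$ occurring in the formulas of Remark \ref{rem ortho} are bounded below, so ${e^i}_a$, ${e_i}^a$ and their first $g$-derivatives are bounded by a constant (depending only on the constant in the hypotheses, hence on $T$ in the applications). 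One then records the elementary bounds $C^{-1}\langle p\rangle\le p^0\le C\langle p\rangle$ (and likewise for $q$), $|\partial p^0/\partial g^{uv}|\le C\langle p\rangle$ from \eqref{est dpdg}, and --- writing $s=2+2p^0q^0-2g^{ab}p_aq_b$ and using $\sqrt{s}\ge 2$, $n^0+\sqrt{s}\ge 2$ --- the \emph{non-singular} bounds $|\partial\sqrt{s}/\partial g^{uv}|$, $|\partial(1/\sqrt{s})/\partial g^{uv}|$, $|\partial n^0/\partial g^{uv}|$, $|\partial(1/(n^0+\sqrt{s}))/\partial g^{uv}|\le C\langle p\rangle\langle q\rangle$. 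Finally, since $\hat{p}_j=p_a{e_j}^a$, $\hat{q}_j=q_a{e_j}^a$, the frame estimates give $|\partial\hat{p}_j/\partial g^{uv}|\le C\langle p\rangle$ and $|\partial\hat{q}_j/\partial g^{uv}|\le C\langle q\rangle$.

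Writing $p'_a=\hat{p}'_i\,{e^i}_a$ with $\hat{p}'=\hat{p}'(\hat{p},\hat{q},\omega)$ given by \eqref{phat'^0}--\eqref{qhat'}, the chain and product rules yield
\[
\frac{\partial p'_a}{\partial g^{uv}}=\frac{\partial\hat{p}'_i}{\partial\hat{p}_j}\,\frac{\partial\hat{p}_j}{\partial g^{uv}}\,{e^i}_a+\frac{\partial\hat{p}'_i}{\partial\hat{q}_j}\,\frac{\partial\hat{q}_j}{\partial g^{uv}}\,{e^i}_a+\hat{p}'_i\,\frac{\partial{e^i}_a}{\partial g^{uv}},
\]
and similarly for $q'_a$. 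The last term is $\le C|\hat{p}'|\le C(p^0+q^0)\le C(\langle p\rangle+\langle q\rangle)$, using the conservation \eqref{conserved} and the mass shell to bound $|\hat{p}'|\le p'^0\le p^0+q^0$. For the first two terms one uses the ``nice parametrization'' estimates: the Minkowski-frame derivatives $\partial\hat{p}'_i/\partial\hat{p}_j$ and $\partial\hat{q}'_i/\partial\hat{p}_j$, and hence (via $\hat{p}'+\hat{q}'=\hat{p}+\hat{q}$) also $\partial\hat{p}'_i/\partial\hat{q}_j$ and $\partial\hat{q}'_i/\partial\hat{q}_j$, exhibit no growth in $\hat{p}$ and are bounded by a power of $\langle q\rangle$ --- this is Lemma \ref{lem dp'dp} (Lemma 5 of \cite{LLN23}) specialised to the Minkowski metric, see also the appendix of \cite{LN172}. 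Combining these with $\|e^{-1}\|\le C$, $|\partial\hat{p}_j/\partial g^{uv}|\le C\langle p\rangle$ and $|\partial\hat{q}_j/\partial g^{uv}|\le C\langle q\rangle$ gives $|\partial p'_a/\partial g^{uv}|\le C\langle p\rangle\langle q\rangle^4$: the single power of $\langle p\rangle$ comes only from $\partial\hat{p}_j/\partial g^{uv}$ and from the conservation bound on $|\hat{p}'|$, and the powers of $\langle q\rangle$ from $\partial\hat{q}_j/\partial g^{uv}$ and from the $\langle q\rangle$-dependence of $\partial\hat{p}'/\partial\hat{q}$, $\partial\hat{q}'/\partial\hat{q}$. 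The estimate for $q'_a$ is identical.

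The reason for routing the computation through the frame --- rather than differentiating \eqref{p'} directly --- is a genuine cancellation. Differentiating \eqref{p'} term by term produces contributions in which a differentiated factor $\partial(1/\sqrt{s})/\partial g^{uv}$ or $\partial(1/(n^0+\sqrt{s}))/\partial g^{uv}$ (each comparable to $\langle p\rangle\langle q\rangle$ up to a power of $n^0+\sqrt{s}$) multiplies products such as $(\hat{n}\cdot\omega)^2(\hat{n}\cdot\hat{q})$ that have high degree in $\hat{p}$; term by term this would give powers of $\langle p\rangle$ well above one. These spurious powers cancel only because of the algebraic structure of \eqref{p'}--\eqref{q'} --- for instance the scalar prefactor there vanishes identically when $\hat{p}=\hat{q}$, so that $p'=p$ and $q'=q$ there --- and the ``no growth in $\hat{p}$'' statement for $\partial_{\hat{p}}\hat{p}'$ is precisely the quantitative form of this cancellation. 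Thus the one place requiring care is to carry out the chain rule of the second paragraph with the power count done carefully, checking that the $\langle p\rangle$-power is exactly $1$ and the $\langle q\rangle$-power no larger than $4$.
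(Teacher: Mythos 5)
The paper itself gives no argument for this lemma---it simply defers to Lemma 6 of \cite{LLN23}---so there is nothing in-paper to compare against; judged on its own terms, your reduction to the Minkowski parametrization through the orthonormal frame is a sensible strategy, but the proof has a genuine gap at the step where you claim that the derivatives $\partial\hat{p}'_i/\partial\hat{q}_j$ and $\partial\hat{q}'_i/\partial\hat{q}_j$ follow ``via $\hat{p}'+\hat{q}'=\hat{p}+\hat{q}$'' from Lemma \ref{lem dp'dp}. Differentiating the conservation law in $\hat{q}$ only yields $\partial\hat{p}'_i/\partial\hat{q}_j+\partial\hat{q}'_i/\partial\hat{q}_j=\delta_{ij}$, i.e.\ it relates the two unknown $\hat{q}$-derivatives to each other; it does not express either of them in terms of the $\hat{p}$-derivatives that Lemma \ref{lem dp'dp} controls. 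Nor is there a $\hat{p}\leftrightarrow\hat{q}$ symmetry to invoke: the parametrization \eqref{phat'}--\eqref{qhat'} is built from $q^0$, $\hat{q}\cdot\omega$ and $\hat{n}\cdot\hat{q}$ and is not symmetric under exchange of the incoming momenta. So the $\hat{q}$-derivatives of the collision map require their own computation, with their own cancellation of the $\sqrt{p^0}$-growth present in the individual factors of $\hat{p}'-\hat{p}$ (both the scalar prefactor and the vector $\omega+(\hat{n}\cdot\omega)\hat{n}/(\sqrt{s}(n^0+\sqrt{s}))$ can grow like $\sqrt{p^0}$). This is exactly the delicate verification you correctly identify as the crux in your closing paragraph, but for the $\hat{q}$-derivatives you have neither supplied it nor legitimately imported it.

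Secondly, the power count does not close with the inputs you cite. Lemma \ref{lem dp'dp} bounds $\partial\hat{p}'_i/\partial\hat{p}_j$ only by $C(q^0)^9$, so your first chain-rule term is at best $C\langle p\rangle\langle q\rangle^9$, not $C\langle p\rangle\langle q\rangle^4$; the assertion that ``the $\langle q\rangle$-power is no larger than $4$'' would need either a sharper derivative bound or an explicit term-by-term computation, neither of which is given. (A bound $C\langle p\rangle\langle q\rangle^N$ for any fixed $N$ would in fact suffice for the one place the lemma is used, the estimate of $G_4$, since Lemma \ref{lem west} absorbs arbitrary powers of $q^0$; but it would not prove the lemma as stated.) The remaining skeleton---compactness of the admissible metrics giving bounds on $e$, $e^{-1}$ and their $g$-derivatives, $|\partial\hat{p}_j/\partial g^{uv}|\le C\langle p\rangle$, and $|\hat{p}'|\le p'^0\le p^0+q^0$ for the $\hat{p}'_i\,\partial{e^i}_a/\partial g^{uv}$ term---is sound; what is missing is the estimate of the $\hat{q}$-derivatives of the post-collision momenta and an honest accounting of the resulting powers of $\langle q\rangle$.
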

\begin{proof}
We refer to Lemma 6 of \cite{LLN23} for the proof. 
\end{proof}

The quantities $ F $, $ G_i $, $ L_i $ and $ K_i $ will be estimated by using Lemma \ref{lem Eg} extensively. To avoid lengthy computations in the proof of Proposition \ref{prop local}, we collect the following basic computations. Recall \eqref{dpdp est} and apply Lemma \ref{lem gs} to obtain 
\begin{align}
\left| \frac{ { p_\varrho }^a }{ { p_\varrho }^0 } \right| \leq C Z^{ - 1 } , \qquad \left| \frac{ p_a }{ { p_\varrho }^0 } \right| \leq C Z , \label{est pp bound} 
\end{align}
where $ { p_\varrho }^a = { g_\varrho }^{ a b } p_b $ and $ { p_\varrho }^0 = p^0 ( { g_\varrho }^{ a b } ) $ with $ p^0 $ as a function of $ g^{ a b } $. Then, we recall \eqref{est dpdg} and \eqref{est d1/pdg} to have 
\begin{align}
\left| \frac{ \partial p^0 }{ \partial g^{ u v } } ( { g_\varrho }^{ a b } ) \right| & \leq C Z \langle p \rangle , \label{est dpdgs} \\
\left| \frac{ \partial }{ \partial g^{ u v } } \left( \frac{ 1 }{ p^0 } \right) ( { g_\varrho }^{ a b } ) \right| & \leq C Z \frac{ \langle p \rangle }{ ( { p_\varrho }^0 )^2 } \leq C Z^2 \frac{ 1 }{ { p_\varrho }^0 } \leq C Z^2 \frac{ 1 }{ { p_j }^0 } , \label{est d1/pdgs}
\end{align}
where we used in \eqref{est d1/pdgs} the second inequality of \eqref{est pp bound} and the equivalence \eqref{lem gs2} in Lemma \ref{lem gs}. We note that $ { p_j }^0 $ on the right end of \eqref{est d1/pdgs} can be replaced by $ { p_{ j - 1 } }^0 $. Next, we notice that $ \det g^{ - 1 } $ is a polynomial of degree $ 3 $ in $ g^{ a b } $. Hence, we use Lemma \ref{lem gs} to obtain 
\begin{align}
\left| \frac{ \partial ( \det g^{ - 1 } )^{ \frac12 } }{ \partial g^{ u v } } ( { g_\varrho }^{ a b } ) \right| \leq C \frac{ \| g_\varrho^{ - 1 } \|^2 }{ ( \det g_\varrho^{ - 1 } )^{ \frac12 } } \leq C Z^{ - 1 } . \label{est ddetgdgs}
\end{align}
We also need to estimate $ \partial h / \partial g^{ u v } $ evaluated at $ { g_\varrho }^{ a b } $. Recall \eqref{est dhdg}, and use Lemma \ref{lem gs} to obtain
\begin{align}
\left| \frac{ \partial h }{ \partial g^{ u v } } ( { g_\varrho }^{ a b } ) \right| & \leq C Z^2 h_\varrho ( { p_\varrho }^0 { q_\varrho }^0 )^2 \leq C Z^2 ( { p_\varrho }^0 { q_\varrho }^0 )^{ \frac52 } \leq C Z^2 ( { p_j }^0 { q_j }^0 )^{ \frac52 } , \label{est dhdgs}
\end{align}
where $ h_\varrho $ is the relative momentum associated with the metric $ { g_\varrho }^{ a b } $, and we applied \eqref{hspq} and \eqref{lem gs2}. By a similar computation we obtain the same estimate for $ \sqrt{ s } $: 
\begin{align}
\left| \frac{ \partial \sqrt{ s } }{ \partial g^{ u v } } ( { g_\varrho }^{ a b } ) \right| & \leq C Z^2 ( { p_j }^0 { q_j }^0 )^{ \frac52 } . \label{est dsdgs}
\end{align}
Finally, we obtain for $ h^{ 1 - b } $, 
\begin{align}
\left| \frac{ \partial ( h^{ 1 - b } ) }{ \partial g^{ u v } } ( { g_\varrho }^{ a b } ) \right| & \leq C h_\varrho^{ - b } \left| \frac{ \partial h }{ \partial g^{ u v } } ( { g_\varrho }^{ a b } ) \right| \leq C Z^2 h_\varrho^{ - b } ( { p_j }^0 { q_j }^0 )^{ \frac52 } . \label{est dhbdgs}
\end{align}
Here, we do not claim that $ h_\varrho $ and $ h_j $ are equivalent, but we can obtain the estimate \eqref{est L2hb} below.

Now, we collect the following estimates: 
\begin{align}
| { p_j }^0 - { p_{ j - 1 } }^0 | & \leq C Z \langle p \rangle \| g_j^{ - 1 } - g_{ j - 1 }^{ - 1 } \| , \label{est F1} \\ 
\left| \frac{ 1 }{ { p_j }^0 } - \frac{ 1 }{ { p_{ j - 1 } }^0 } \right| & \leq C Z^2 \frac{ 1 }{  { p_j }^0 } \| g_j^{ - 1 } - g_{ j - 1 }^{ - 1 } \| , \label{est F2} \\
\left| ( \det g_j^{ - 1 } )^{ \frac12 } - ( \det g_{ j - 1 }^{ - 1 } )^{ \frac12 } \right| & \leq C Z^{ - 1 } \| g_j^{ - 1 } - g_{ j - 1 }^{ - 1 } \| , \label{est L1} \\
| { h_j } - h_{ j - 1 } | & \leq C Z^2 ( { p_j }^0 { q_j }^0 )^{ \frac52 } \| g_j^{ - 1 } - g_{ j - 1 }^{ - 1 } \| , \label{est L2h} \\ 
| \sqrt{ s_j } - \sqrt{ s_{ j - 1 } } | & \leq C Z^2 ( { p_j }^0 { q_j }^0 )^{ \frac52 } \| g_j^{ - 1 } - g_{ j - 1 }^{ - 1 } \| , \label{est L2s} \\ 
| h_j^{ 1 - b } - h_{ j - 1 }^{ 1 - b } | & \leq C Z^2 ( { p_j }^0 { q_j }^0 )^{ \frac52 } \left( \int_{ j - 1 }^j h_\varrho^{ - b } \, d \varrho \right) \| g_j^{ - 1 } - g_{ j - 1 }^{ - 1 } \| , \label{est L2hb} 
\end{align}
which are obtained by applying Lemma \ref{lem Eg} to \eqref{est dpdgs}, \eqref{est d1/pdgs}, \eqref{est ddetgdgs}, \eqref{est dhdgs}, \eqref{est dsdgs} and \eqref{est dhbdgs}, respectively. We note that $ { p_j }^0 $ and $ { q_j }^0 $ on the right hand sides of \eqref{est F1}--\eqref{est L2hb} can be replaced by $ { p_{ j - 1 } }^0 $ and $ { q_{ j - 1 } }^0 $, respectively. The quantities $ F $, $ G_i $, $ L_i $ and $ K_i $ are now estimated as follows. \medskip

\noindent {\bf Estimate of $ F $.} Recall that $ F $ is given by 
\begin{align*}
F = | \lambda_{ \eta , j } - \lambda_{ \eta , j - 1 } | f_j \langle p \rangle^r . 
\end{align*}
We note that $ \lambda_{ \eta , j } - \lambda_{ \eta , j - 1 } $ can be written as 
\begin{align*}
& \lambda_{ \eta , j } - \lambda_{ \eta , j - 1 } \\
& = - \left( ( 1 - \eta ) \gamma ( { p_j }^0 - { p_{ j - 1 } }^0 ) - \frac{ ( { k_j }^{ a b } - { k_{ j - 1 } }^{ a b } ) p_a p_b }{ { p_j }^0 } - { k_{ j - 1 } }^{ a b } p_a p_b \left( \frac{ 1 }{ { p_j }^0 } - \frac{ 1 }{ { p_{ j - 1 } }^0 } \right) \right) Z_\eta . 
\end{align*}
We apply \eqref{est F1} and \eqref{est F2} together with the estimates \eqref{lem ls1} and \eqref{est pp bound} to obtain
\begin{align*}
| \lambda_{ \eta , j } - \lambda_{ \eta , j - 1 } | \leq C Z \langle p \rangle ( \| g_j^{ - 1 } - g_{ j - 1 }^{ - 1 } \| + \| k_j^{ ** } - k_{ j - 1 }^{ ** } \| ) Z_\eta . 
\end{align*}
This implies that 
\begin{align}
F \leq C Z Z_\eta \| f_j \|_{ L^\infty_{ r + 1 } } ( \| g_j^{ - 1 } - g_{ j - 1 }^{ - 1 } \| + \| k_j^{ ** } - k_{ j - 1 }^{ ** } \| ) . \label{Fjj} 
\end{align}
\medskip

\noindent{\bf Estimate of $ G_1 $.} Recall that $ G_1 $ is given by 
\begin{align*}
G_1 = \left| ( \det g_j^{ - 1 } )^{ \frac12 } - ( \det g_{ j - 1 }^{ - 1 } )^{ \frac12 } \right| \iint \frac{ h_j^{ 1 - b } \sqrt{ s_j } }{ { p_j }^0 { q_j }^0 } f_{ j } ( p_j' ) f_j ( q_j' ) \langle p \rangle^r w_j^{ - 1 } ( q ) \, d \omega \, d q , 
\end{align*}
and apply \eqref{est L1} and Lemma \ref{lem bracket} to obtain
\begin{align}
G_1 & \leq C Z^{ - 1 } \| g_j^{ - 1 } - g_{ j - 1 }^{ - 1 } \| \| f_{ j } \|_{ L^\infty_r }^2 \int \frac{ 1 }{ h_j^b } w_j^{ - 1 } \, d q \nonumber \\ 
& \leq C Z^{ - 1 } \| g_j^{ - 1 } - g_{ j - 1 }^{ - 1 } \| f_j \|_{ L^\infty_r }^2 \frac{ 1 }{ ( \det g_j^{ - 1 } )^{ \frac12 } } \frac{ Z_\eta^{ - \frac32 + \frac{ b }{ 2 } } }{ ( { p_j }^0 )^{ \frac{ b }{ 2 } } } \nonumber \\ 
& \leq C Z^2 Z_\eta^{ - \frac32 + \frac{ b }{ 2 } } \| g_j^{ - 1 } - g_{ j - 1 }^{ - 1 } \| \| f_j \|_{ L^\infty_r }^2 , \label{G1jj}
\end{align}
where we used Lemmas \ref{lem west}, \ref{lem assum lb} and \ref{lem ls}. \medskip

\noindent {\bf Estimate of $ L_1 $.} Recall that $ L_1 $ is 
\begin{align*}
L_1 = \left| ( \det g_j^{ - 1 } )^{ \frac12 } - ( \det g_{ j - 1 }^{ - 1 } )^{ \frac12 } \right| \iint \frac{ h_j^{ 1 - b } \sqrt{ s_j } }{ { p_j }^0 { q_j }^0 } f_{ j + 1 } ( p ) f_j ( q ) \langle p \rangle^r w_j^{ - 1 } ( q ) \, d \omega \, d q . 
\end{align*}
The estimate is almost the same as for $ G_1 $, and we obtain 
\begin{align}
L_1 & \leq C Z^2 Z_\eta^{ - \frac32 + \frac{ b }{ 2 } } \| g_j^{ - 1 } - g_{ j - 1 }^{ - 1 } \| \| f_{ j + 1 } \|_{ L^\infty_r } \| f_j \|_{ L^\infty_r } . \label{L1jj} 
\end{align}
\medskip

\noindent {\bf Estimate of $ G_2 $.} We write $ G_2 $ as  
\begin{align*}
G_2 = ( \det g_{ j - 1 }^{ - 1 } )^{ \frac12 } \iint \left| B ( { g_j }^{ a b } ) - B ( { g_{ j - 1 } }^{ a b } ) \right| f_{ j } ( p_j' ) f_{ j  } ( q_j' ) \langle p \rangle^r w_{ j  }^{ - 1 } ( q ) \, d \omega \, d q , 
\end{align*}
where $ B = B ( g^{ a b } ) $ is given by 
\begin{align*}
B = \frac{ h^{ 1 - b } \sqrt{ s } }{ p^0 q^0 } , 
\end{align*}
which will be understood as a function of $ g^{ a b } $ for given $ p $ and $ q $. To apply Lemma \ref{lem Eg}, we write $ B ( { g_j }^{ a b } ) - B ( { g_{ j - 1 } }^{ a b } ) $ as 
\begin{align*}
B ( { g_j }^{ a b } ) - B ( { g_{ j - 1 } }^{ a b } ) 
& = \left( \frac{ 1 }{ { p_j }^0 } - \frac{ 1 }{ { p_{ j - 1 } }^0 } \right) \frac{ h_{ j }^{ 1 - b } \sqrt{ s_{ j } } }{ { q_{ j } }^0 } + \left( \frac{ 1 }{ { q_j }^0 } - \frac{ 1 }{ { q_{ j - 1 } }^0 } \right) \frac{ h_{ j }^{ 1 - b } \sqrt{ s_{ j } } }{ { p_{ j - 1 } }^0 } \\
& \quad + ( \sqrt{ s_j } - \sqrt{ s_{ j - 1 } } ) \frac{ h_{ j }^{ 1 - b } }{ { p_{ j - 1 } }^0 { q_{ j - 1 } }^0 } + ( h_{ j }^{ 1 - b } - h_{ j - 1 }^{ 1 - b } ) \frac{ \sqrt{ s_{ j - 1 } } }{ { p_{ j - 1 } }^0 { q_{ j - 1 } }^0 } . 
\end{align*}
We apply \eqref{est F2}, \eqref{est L2s} and \eqref{est L2hb} to obtain 
\begin{align*}
& \begin{aligned}
\left| \left( \frac{ 1 }{ { p_j }^0 } - \frac{ 1 }{ { p_{ j - 1 } }^0 } \right) \frac{ h_{ j }^{ 1 - b } \sqrt{ s_{ j } } }{ { q_{ j } }^0 } \right| & \leq C Z^2 \frac{ 1 }{ { p_j }^0 } \| g_j^{ - 1 } - g_{ j - 1 }^{ - 1 } \| \frac{ h_{ j }^{ 1 - b } \sqrt{ s_{ j } } }{ { q_{ j } }^0 } \\ 
& \leq C Z^2 \frac{ 1 }{ h_j^b } \| g_j^{ - 1 } - g_{ j - 1 }^{ - 1 } \| , 
\end{aligned} \allowdisplaybreaks \\
& \begin{aligned}
\left| \left( \frac{ 1 }{ { q_j }^0 } - \frac{ 1 }{ { q_{ j - 1 } }^0 } \right) \frac{ h_{ j }^{ 1 - b } \sqrt{ s_{ j } } }{ { p_{ j - 1 } }^0 } \right| & \leq C Z^2 \frac{ 1 }{ { q_j }^0 } \| g_j^{ - 1 } - g_{ j - 1 }^{ - 1 } \| \frac{ h_{ j }^{ 1 - b } \sqrt{ s_{ j } } }{ { p_{ j - 1 } }^0 } \\
& \leq C Z^2 \frac{ 1 }{ h_j^b } \| g_j^{ - 1 } - g_{ j - 1 }^{ - 1 } \| ,  
\end{aligned} \allowdisplaybreaks \\ 
& \begin{aligned}
\left| ( \sqrt{ s_j } - \sqrt{ s_{ j - 1 } } ) \frac{ h_{ j }^{ 1 - b } }{ { p_{ j - 1 } }^0 { q_{ j - 1 } }^0 } \right| & \leq C Z^2 ( { p_j }^0 { q_j }^0 )^{ \frac52 } \| g_j^{ - 1 } - g_{ j - 1 }^{ - 1 } \| \frac{ h_{ j }^{ 1 - b } }{ { p_{ j - 1 } }^0 { q_{ j - 1 } }^0 } \\ 
& \leq C Z^2 ( { p_j }^0 { q_j }^0 )^2 \frac{ 1 }{ h_j^b } \| g_j^{ - 1 } - g_{ j - 1 }^{ - 1 } \| ,  
\end{aligned} \allowdisplaybreaks \\ 
& \begin{aligned}
\left| ( h_{ j }^{ 1 - b } - h_{ j - 1 }^{ 1 - b } ) \frac{ \sqrt{ s_{ j - 1 } } }{ { p_{ j - 1 } }^0 { q_{ j - 1 } }^0 } \right| & \leq C Z^2 ( { p_j }^0 { q_j }^0 )^{ \frac52 } \left( \int_{ j - 1 }^j h_\varrho^{ - b } \, d \varrho \right) \| g_j^{ - 1 } - g_{ j - 1 }^{ - 1 } \| \frac{ \sqrt{ s_{ j - 1 } } }{ { p_{ j - 1 } }^0 { q_{ j - 1 } }^0 } \\ 
& \leq C Z^2 ( { p_j }^0 { q_j }^0 )^2 \left( \int_{ j - 1 }^j h_\varrho^{ - b } \, d \varrho \right) \| g_j^{ - 1 } - g_{ j - 1 }^{ - 1 } \| , 
\end{aligned} 
\end{align*}
where we used the equivalence \eqref{lem gs2} appropriately. The quantity $ ( { p_j }^0 )^2 $ is bounded by 
\begin{align*}
( { p_j }^0 )^2 = 1 + { g_j }^{ a b } p_a p_b \leq 1 + C Z^{ - 2 } | p |^2 \leq C \langle p \rangle^2 \leq C \langle p_j' \rangle^2 \langle q_j' \rangle^2 , 
\end{align*}
where we used Lemma \ref{lem bracket} in the last inequality. We combine the above estimates to obtain 
\begin{align*}
G_2 & \leq C Z^2 \| f_{ j } \|_{ L^\infty_{ r + 2 } }^2 \left( ( \det g_{ j - 1 }^{ - 1 } )^{ \frac12 } \int_{ \bbr^3 } \frac{ ( { q_j }^0 )^2 }{ h_j^b } w_{ j  }^{ - 1 } ( q ) \, d q \right) \| g_j^{ - 1 } - g_{ j - 1 }^{ - 1 } \| \\ 
& \quad + C Z^2 \| f_{ j } \|_{ L^\infty_{ r + 2 } }^2 \left( ( \det g_{ j - 1 }^{ - 1 } )^{ \frac12 } \int_{ \bbr^3 } \int_{ j - 1 }^j \frac{ ( { q_j }^0 )^2 }{ h_\varrho^b } w_{ j  }^{ - 1 } ( q ) \, d \varrho \, d q \right) \| g_j^{ - 1 } - g_{ j - 1 }^{ - 1 } \| . 
\end{align*}
Here, the integration of $ ( { q_j }^0 )^2 h_j^{ - b } w_j^{ - 1 } $ can be estimated by Lemma \ref{lem west}:
\begin{align*}
( \det g_{ j - 1 }^{ - 1 } )^{ \frac12 } \int_{ \bbr^3 } \frac{ ( { q_j }^0 )^2 }{ h_j^b } w_{ j  }^{ - 1 } ( q ) \, d q & \leq C ( \det g_{ j }^{ - 1 } )^{ \frac12 } \int_{ \bbr^3 } \frac{ ( { q_j }^0 )^2 }{ h_j^b } w_{ j  }^{ - 1 } ( q ) \, d q \\ 
& \leq C \frac{ Z_\eta^{ - \frac32 + \frac{ b }{ 2 } } }{ ( { p_j }^0 )^{ \frac{ b }{ 2 } } } . 
\end{align*}
On the other hand, for the integration of $ ( { q_j }^0 )^2 h_\varrho^{ - b } w_j^{ - 1 } $ we notice that 
\begin{align*}
{ q_j }^0 - 1 = \frac{ ( { q_j }^0 )^2 - 1 }{ { q_j }^0 + 1 } = \frac{ { g_j }^{ a b } q_a q_b }{ { q_j }^0 + 1 } \geq \frac{ 1 }{ C } \frac{ { g_\varrho }^{ a b } q_a q_b }{ { q_\varrho }^0 + 1 } \geq \frac{ 1 }{ C } ( { q_\varrho }^0 - 1 ) , 
\end{align*}
and this shows that 
\begin{align}
w_j ( q ) = e^{ Z_\eta ( { q_j }^0 - 1 ) } \geq e^{ \frac{ 1 }{ C } Z_\eta ( { q_\varrho }^0 - 1 ) } = w_\varrho^{ \frac{ 1 }{ C } } ( q ) . \label{wjws}
\end{align}
Now, we obtain by Lemma \ref{lem west}, 
\begin{align*}
( \det g_{ j - 1 }^{ - 1 } )^{ \frac12 } \int_{ \bbr^3 } \int_{ j - 1 }^j \frac{ ( { q_j }^0 )^2 }{ h_\varrho^b } w_{ j  }^{ - 1 } ( q ) \, d \varrho \, d q & \leq C \int_{ j - 1 }^j ( \det g_{ \varrho }^{ - 1 } )^{ \frac12 } \int_{ \bbr^3 } \frac{ ( { q_\varrho }^0 )^2 }{ h_\varrho^b } w_{ \varrho }^{ - \frac{ 1 }{ C } } ( q ) \, d q \, d \varrho \\ 
& \leq C \frac{ Z_\eta^{ - \frac32 + \frac{ b }{ 2 } } }{ ( { p_j }^0 )^{ \frac{ b }{ 2 } } } . 
\end{align*}
We conclude that
\begin{align}
G_2 \leq C Z^2 Z_\eta^{ - \frac32 + \frac{ b }{ 2 } } \| f_j \|_{ L^\infty_{ r + 2 } }^2 \| g_j^{ - 1 } - g_{ j - 1 }^{ - 1 } \| . \label{G2jj}
\end{align}
\medskip

\noindent {\bf Estimate of $ L_2 $.} We write $ L_2 $ as  
\begin{align*}
L_2 = ( \det g_{ j - 1 }^{ - 1 } )^{ \frac12 } \iint \left| B ( { g_j }^{ a b } ) - B ( { g_{ j - 1 } }^{ a b } ) \right| f_{ j +1 } ( p ) f_{ j  } ( q ) \langle p \rangle^r w_{ j  }^{ - 1 } ( q ) \, d \omega \, d q , 
\end{align*}
where $ B = B ( g^{ a b } ) $ is given by 
\begin{align*}
B = \frac{ h^{ 1 - b } \sqrt{ s } }{ p^0 q^0 } . 
\end{align*}
The estimate of $ B ( { g_j }^{ a b } ) - B ( { g_{ j - 1 } }^{ a b } ) $ is almost the same as in $ G_2 $, but we may only consider 
\begin{align*}
{ p_j }^0 \leq C \langle p \rangle .
\end{align*}
Then, we will obtain 
\begin{align}
L_2 \leq C Z^2 Z_\eta^{ - \frac32 + \frac{ b }{ 2 } } \| f_{ j + 1 } \|_{ L^\infty_{ r + 2 } } \| f_j \|_{ L^\infty_{ r } } \| g_j^{ - 1 } - g_{ j - 1 }^{ - 1 } \| . \label{L2jj} 
\end{align}
\medskip

\noindent {\bf Estimate of $ G_3 $.} We may write $ G_3 $ as  
\begin{align*}
G_3 & = ( \det g_{ j - 1 }^{ - 1 } )^{ \frac12 } \iint \frac{ h_{ j - 1 }^{ 1 - b } \sqrt{ s_{ j - 1 } } }{ { p_{ j - 1 } }^0 { q_{ j - 1 } }^0 } \left| f_{ j } ( p_j' ) f_j ( q_j' ) - f_{ j - 1 } ( p_j' ) f_{ j - 1 } ( q_j' ) \right| \langle p \rangle^r \ w_{ j  }^{ - 1 } ( q ) \, d \omega \, d q \\ 
& = ( \det g_{ j - 1 }^{ - 1 } )^{ \frac12 } \iint \frac{ h_{ j - 1 }^{ 1 - b } \sqrt{ s_{ j - 1 } } }{ { p_{ j - 1 } }^0 { q_{ j - 1 } }^0 } \left| ( f_{ j } - f_{ j - 1 } ) ( p_j' ) f_j ( q_j' ) + f_{ j - 1 } ( p_j' ) ( f_j - f_{ j - 1 } ) ( q_j' ) \right| \langle p \rangle^r w_{ j  }^{ - 1 } ( q ) \, d \omega \, d q . 
\end{align*}
Then, we estimate it as follows: 
\begin{align}
G_3 & \leq C \| f_{ j } - f_{ j - 1 } \|_{ L^\infty_r } ( \| f_j \|_{ L^\infty_r } + \| f_{ j - 1 } \|_{ L^\infty_r } ) ( \det g_{ j - 1 }^{ - 1 } )^{ \frac12 } \int \frac{ 1 }{ h_{ j - 1 }^b } w_{ j  }^{ - 1 } ( q ) \, d q \nonumber \\ 
& \leq C \| f_{ j } - f_{ j - 1 } \|_{ L^\infty_r } ( \| f_j \|_{ L^\infty_r } + \| f_{ j - 1 } \|_{ L^\infty_r } ) ( \det g_{ j - 1 }^{ - 1 } )^{ \frac12 } \int \frac{ 1 }{ h_{ j - 1 }^b } w_{ j - 1 }^{ - \frac{ 1 }{ C } } ( q ) \, d q \nonumber \\ 
& \leq C Z_\eta^{ - \frac32 + \frac{ b }{ 2 } } \| f_{ j } - f_{ j - 1 } \|_{ L^\infty_r } ( \| f_j \|_{ L^\infty_r } + \| f_{ j - 1 } \|_{ L^\infty_r } ) , \label{G3jj} 
\end{align}
by using \eqref{wjws} and Lemma \ref{lem west}. \medskip

\noindent {\bf Estimate of $ L_3 $.} We may write $ L_3 $ as  
\begin{align*}
L_3 & = ( \det g_{ j - 1 }^{ - 1 } )^{ \frac12 } \iint \frac{ h_{ j - 1 }^{ 1 - b } \sqrt{ s_{ j - 1 } } }{ { p_{ j - 1 } }^0 { q_{ j - 1 } }^0 } \left| f_{ j + 1 } ( p ) f_j ( q ) - f_{ j } ( p ) f_{ j - 1 } ( q ) \right| \langle p \rangle^r w_{ j  }^{ - 1 } ( q ) \, d \omega \, d q \\ 
& = ( \det g_{ j - 1 }^{ - 1 } )^{ \frac12 } \iint \frac{ h_{ j - 1 }^{ 1 - b } \sqrt{ s_{ j - 1 } } }{ { p_{ j - 1 } }^0 { q_{ j - 1 } }^0 } \left| ( f_{ j + 1 } - f_j ) ( p ) f_j ( q ) + f_{ j } ( p ) ( f_j - f_{ j - 1 } ) ( q ) \right| \langle p \rangle^r w_{ j  }^{ - 1 } ( q ) \, d \omega \, d q . 
\end{align*}
Then, by the same computations as in $ G_3 $, we obtain 
\begin{align}
L_3 & \leq C Z_\eta^{ - \frac32 + \frac{ b }{ 2 } } ( \| f_{ j + 1 } - f_j \|_{ L^\infty_r } + \| f_j - f_{ j - 1 } \|_{ L^\infty_r } ) \| f_j \|_{ L^\infty_r } . \label{L3jj} 
\end{align}
\medskip

\noindent{\bf Estimate of $ G_4 $.} We notice that $ G_4 $ can be written as
\begin{align*}
G_4 = ( \det g_{ j - 1 }^{ - 1 } )^{ \frac12 } \iint \frac{ h_{ j - 1 }^{ 1 - b } \sqrt{ s_{ j - 1 } } }{ { p_{ j - 1 } }^0 { q_{ j - 1 } }^0 } \left| F ( { g_j }^{ a b } ) - F ( { g_{ j - 1 } }^{ a b } ) \right| w_{ j  }^{ - 1 } ( q ) \, d \omega \, d q , 
\end{align*}
where $ F = F ( g^{ a b } ) $ is given by 
\begin{align*}
F ( { g }^{ a b } ) = f_{ j - 1 } ( p' ) f_{ j - 1 } ( q' ) \langle p \rangle^r , 
\end{align*}
which will be understood as a function of $ g^{ a b } $ for given $ f_{ j - 1 } $, $ p $, $ q $ and $ \omega $. To apply Lemma \ref{lem Eg}, we compute 
\begin{align*}
\frac{ \partial F }{ \partial g^{ u v } } = \frac{ \partial f_{ j - 1 } }{ \partial p_a } ( p' ) \frac{ \partial p'_a }{ \partial g^{ u v } } f_{ j - 1 } ( q' ) \langle p \rangle^r + f_{ j - 1 } ( p' ) \frac{ \partial f_{ j - 1 } }{ \partial p_a } ( q' ) \frac{ \partial q'_a }{ \partial g^{ u v } } \langle p \rangle^r . 
\end{align*}
Here, we can use Lemma \ref{lem dp'dg} to estimate $ \partial p'_a / \partial g^{ u v } $ and $ \partial q'_a / \partial g^{ u v } $ as $ \langle p \rangle \langle q \rangle^4 $. But, on a finite time, the quantities $ \langle q \rangle^4 $ and $ ( { q_{ j - 1 } }^0 )^4 $ are equivalent, so that we can estimate as follows:
\begin{align*}
& \left| \frac{ \partial F }{ \partial g^{ u v } } ( { g_\varrho }^{ a b } ) \right| \\
& \leq C \left( \max_{ a } \left| \frac{ \partial f_{ j - 1 } }{ \partial p_a } ( p_\varrho' ) \right| \langle p \rangle \langle q \rangle^4 f_{ j - 1 } ( q_\varrho' ) \langle p \rangle^r + f_{ j - 1 } ( p_\varrho' ) \max_a \left| \frac{ \partial f_{ j - 1 } }{ \partial p_a } ( q_\varrho' ) \right| \langle p \rangle \langle q \rangle^4 \langle p \rangle^r \right) \\ 
& \leq C \left( \max_{ a } \left| \frac{ \partial f_{ j - 1 } }{ \partial p_a } ( p_\varrho' ) \right| \langle p_\varrho' \rangle^{ r + 1 } \langle q_\varrho' \rangle^{ r + 1 } f_{ j - 1 } ( q_\varrho' ) + f_{ j - 1 } ( p_\varrho' ) \max_a \left| \frac{ \partial f_{ j - 1 } }{ \partial p_a } ( q_\varrho' ) \right| \langle p_\varrho' \rangle^{ r + 1 } \langle q_\varrho' \rangle^{ r + 1 } \right) ( { q_{ j - 1 } }^0 )^4 \\
& \leq C \left\| \frac{ \partial f_{ j - 1 } }{ \partial p } \right\|_{ L^\infty_{ r + 1 } } \| f_{ j - 1 } \|_{ L^\infty_{ r + 1 } } ( { q_{ j - 1 } }^0 )^4 , 
\end{align*}
where we used Lemma \ref{lem bracket}. Hence, we have 
\begin{align*}
| F ( { g_j }^{ a b } ) - F ( { g_{ j - 1 } }^{ a b } ) | & \leq C \| f_{ j - 1 } \|^2_{ W^{ 1 , \infty }_{ r + 1 } } \| g_j^{ - 1 } - g_{ j - 1 }^{ - 1 } \| ( { q_{ j - 1 } }^0 )^4 , 
\end{align*}
which shows that 
\begin{align}
G_4 & \leq C \| f_{ j - 1 } \|^2_{ W^{ 1 , \infty }_{ r + 1 } } \left( ( \det g_{ j - 1 }^{ - 1 } )^{ \frac12 } \int \frac{ ( { q_{ j - 1 } }^0 )^4 }{ h_{ j - 1 }^{ b } } w_{ j  }^{ - 1 } ( q ) \, d q \right) \| g_j^{ - 1 } - g_{ j - 1 }^{ - 1 } \| \nonumber \\ 
& \leq C \| f_{ j - 1 } \|^2_{ W^{ 1 , \infty }_{ r + 1 } } \left( ( \det g_{ j - 1 }^{ - 1 } )^{ \frac12 } \int \frac{ ( { q_{ j - 1 } }^0 )^4 }{ h_{ j - 1 }^{ b } } w_{ j - 1 }^{ - \frac{ 1 }{ C } } ( q ) \, d q \right) \| g_j^{ - 1 } - g_{ j - 1 }^{ - 1 } \| \nonumber \\ 
& \leq C Z_\eta^{ - \frac32 + \frac{ b }{ 2 } } \| f_{ j - 1 } \|^2_{ W^{ 1 , \infty }_{ r + 1 } } \| g_j^{ - 1 } - g_{ j - 1 }^{ - 1 } \| , \label{G4jj} 
\end{align} 
where we used \eqref{wjws} and Lemma \ref{lem west}. \medskip

\noindent{\bf Estimate of $ G_5 $.} Recall that $ G_5 $ is 
\begin{align*}
G_5 = ( \det g_{ j - 1 }^{ - 1 } )^{ \frac12 } \iint \frac{ h_{ j - 1 }^{ 1 - b } \sqrt{ s_{ j - 1 } } }{ { p_{ j - 1 } }^0 { q_{ j - 1 } }^0 } f_{ j - 1 } ( p_{ j - 1 }' ) f_{ j - 1 } ( q_{ j - 1 }' ) \langle p \rangle^r | w_{ j  }^{ - 1 } ( q )  - w_{ j - 1 }^{ - 1 } ( q ) | \, d \omega \, d q . 
\end{align*}
In order to apply Lemma \ref{lem Eg}, we compute
\begin{align*}
\frac{ \partial w^{ - 1 } ( q ) }{ \partial g^{ u v } } & = \frac{ \partial }{ \partial g^{ u v } } \left( e^{ - Z_\eta ( q^0 - 1 ) } \right) = - Z_\eta \frac{ \partial q^0 }{ \partial g^{ u v } } w^{ - 1 } ( q ) , 
\end{align*}
and use \eqref{est dpdgs} to obtain 
\begin{align*}
\left| \frac{ \partial w^{ - 1 } }{ \partial g^{ u v } } ( { g_\varrho }^{ a b } ) \right| \leq C Z Z_\eta \langle q \rangle w_\varrho^{ - 1 } ( q ) \leq C Z Z_\eta \langle q \rangle w_{ j - 1 }^{ - \frac{ 1 }{ C } } ( q ) , 
\end{align*}
by an argument similar to \eqref{wjws}. Hence, we have 
\begin{align}
\left| w_{ j  }^{ - 1 } ( q )  - w_{ j - 1 }^{ - 1 } ( q ) \right| & \leq C Z Z_\eta \langle q \rangle w_{ j - 1 }^{ - \frac{ 1 }{ C } } ( q ) \| g_j^{ - 1 } - g_{ j - 1 }^{ - 1 } \| , \label{wjj}
\end{align}
and this shows that $ G_5 $ is estimated as follows: 
\begin{align}
G_5 & \leq C Z Z_\eta \| f_{ j - 1 } \|_{ L^\infty_{ r + 1 } }^2 \left( ( \det g_{ j - 1 }^{ - 1 } )^{ \frac12 } \int \frac{ 1 }{ h_{ j - 1 }^b } w_{ j - 1 }^{ - \frac{ 1 }{ C } } ( q ) \, d q \right) \| g_j^{ - 1 } - g_{ j - 1 }^{ - 1 } \| \nonumber \\
& \leq C Z Z_\eta^{ - \frac12 + \frac{ b }{ 2 } } \| f_{ j - 1 } \|_{ L^\infty_{ r + 1 } }^2 \| g_j^{ - 1 } - g_{ j - 1 }^{ - 1 } \| , \label{G5jj} 
\end{align}
where we used $ \langle q \rangle \leq C \langle p_{ j - 1 }' \rangle \langle q_{ j - 1 }' \rangle $. \medskip

\noindent {\bf Estimate of $ L_4 $.} Recall that $ L_4 $ is 
\begin{align*}
L_4 = ( \det g_{ j - 1 }^{ - 1 } )^{ \frac12 } \iint \frac{ h_{ j - 1 }^{ 1 - b } \sqrt{ s_{ j - 1 } } }{ { p_{ j - 1 } }^0 { q_{ j - 1 } }^0 } f_{ j } ( p ) f_{ j - 1 } ( q ) \langle p \rangle^r | w_{ j  }^{ - 1 } ( q )  - w_{ j - 1 }^{ - 1 } ( q ) | \, d \omega \, d q . 
\end{align*}
The estimate of $ L_4 $ is almost the same as for $ G_5 $. We will obtain 
\begin{align}
L_4 & \leq C Z Z_\eta^{ - \frac12 + \frac{ b }{ 2 } } \| f_j \|_{ L^\infty_{ r + 1 } } \| f_{ j - 1 } \|_{ L^\infty_{ r + 1 } } \| g_j^{ - 1 } - g_{ j - 1 }^{ - 1 } \| . \label{L4jj} 
\end{align}
\medskip

\noindent{\bf Estimate of $ K_1 $.} Recall that $ K_1 $ is
\begin{align*}
K_1 = | 2 { k_j }^a_c { k_j }^{ b c } + k_j { k_j }^{ a b } - 2 { k_{ j - 1 } }^a_c { k_{ j - 1 } }^{ b c } - k_{ j - 1 } { k_{ j - 1 } }^{ a b } | . 
\end{align*}
Here, $ { k_j }^a_c = { k_j }^{ a d } { g_j }_{ c d } $ and $ k_j = { k_j }^{ c d } { g_j }_{ c d } $, where $ { g_j }_{ c d } $ is the inverse of $ { g_j }^{ c d } $, so that it is given by a polynomial of degree $ 2 $ in $ { g_j }^{ a b } $ multiplied by $ ( \det g_j^{ - 1 } )^{ - 1 } $. Note that the quantities $ { g_j }^{ a b } $, $ { k_j }^{ a b } $, $ ( \det g_j^{ - 1 } )^{ - 1 } $, $ { g_{ j - 1 } }^{ a b } $, $ { k_{ j - 1 } }^{ a b } $ and $ ( \det g_{ j - 1 }^{ - 1 } )^{ - 1 } $ are all bounded on $ [ t_0 , T ] $. Hence, $ K_1 $ is estimated in terms of $ { g_j }^{ a b } - { g_{ j - 1 } }^{ a b } $, $ { k_j }^{ a b } - { k_{ j - 1 } }^{ a b } $ and $ ( \det g_{ j }^{ - 1 } )^{ - 1 } - ( \det g_{ j - 1 }^{ - 1 } )^{ - 1 } $. We use \eqref{est L1} to obtain 
\begin{align}
K_1 \leq C ( \| g_j^{ - 1 } - g_{ j - 1 }^{ - 1 } \| + \| k_j^{ ** } - k_{ j - 1 }^{ ** } \| ) . \label{K1jj} 
\end{align} 
\medskip

\noindent{\bf Estimates of $ K_2 $ and $ K_3 $.} Recall that $ K_2 = | { S_{ j + 1 } }^{ a b } - { S_j }^{ a b } | $, where $ { S_{ j + 1 } }^{ a b } $ is given by
\begin{align*}
{ S_{ j + 1 } }^{ a b } & = ( \det g_j^{ - 1 } )^{ \frac12 } \int_{ \bbr^3 } f_{ j + 1 } \frac{ { p_j }^a { p_j }^b }{ { p_j }^0 } w_j^{ - 1 } ( p ) \, d p . 
\end{align*}
By the same arguments as in the proof of Lemma \ref{lem le}, we obtain 
\begin{align*}
| { S_{ j + 1 } }^{ a b } | \leq C \| f_{ j + 1 } \|_{ L^\infty_m } , \qquad | \rho_{ j + 1 } | \leq C \| f_{ j + 1 } \|_{ L^\infty_m } , \qquad | S_{ j + 1 } | \leq C \| f_{ j + 1 } \|_{ L^\infty_m } , 
\end{align*}
where the factor $ Z $ has been ignored, since it is bounded on $ [ t_0 , T ] $. Now, we compute as 
\begin{align}
K_2 & \leq | ( \det g_j^{ - 1 } )^{ \frac12 } - ( \det g_{ j - 1 }^{ - 1 } )^{ \frac12 } | \int_{ \bbr^3 } f_{ j + 1 } \frac{ | { p_j }^a { p_j }^b | }{ { p_j }^0 } w_j^{ - 1 } ( p ) \, d p \nonumber \\
& \quad + ( \det g_{ j - 1 }^{ - 1 } )^{ \frac12 } \int_{ \bbr^3 } | f_{ j + 1 } - f_j | \frac{ | { p_j }^a { p_j }^b | }{ { p_j }^0 } w_j^{ - 1 } ( p ) \, d p \nonumber \\ 
& \quad + ( \det g_{ j - 1 }^{ - 1 } )^{ \frac12 } \int_{ \bbr^3 } f_j \left| \frac{ { p_j }^a { p_j }^b }{ { p_j }^0 } - \frac{ { p_{ j - 1 } }^a { p_{ j - 1 } }^b }{ { p_{ j - 1 } }^0 } \right| w_j^{ - 1 } ( p ) \, d p \nonumber \\ 
& \quad + ( \det g_{ j - 1 }^{ - 1 } )^{ \frac12 } \int_{ \bbr^3 } f_j \frac{ | { p_{ j - 1 } }^a { p_{ j - 1 } }^b | }{ { p_{ j - 1 } }^0 } | w_j^{ - 1 } ( p ) - w_{ j - 1 }^{ - 1 } ( p ) | \, d p \nonumber .
\end{align}
For the second term, we notice that 
\begin{align*}
\int_{ \bbr^3 } | f_{ j + 1 } - f_j | \frac{ | { p_j }^a { p_j }^b | }{ { p_j }^0 } w_j^{ - 1 } ( p ) \, d p & \leq C \| f_{ j + 1 } - f_j \|_{ L^\infty_r } \int_{ \bbr^3 } \langle p \rangle^{ - r + 2 } \, d p \\ 
& \leq C \| f_{ j + 1 } - f_j \|_{ L^\infty_r } , 
\end{align*}
where we need to assume $ r > 5 $. For the fourth term, we use \eqref{wjj} to obtain 
\begin{align*}
\int_{ \bbr^3 } f_j \frac{ | { p_{ j - 1 } }^a { p_{ j - 1 } }^b | }{ { p_{ j - 1 } }^0 } | w_j^{ - 1 } ( p ) - w_{ j - 1 }^{ - 1 } ( p ) | \, d p & \leq C \| g_j^{ - 1 } - g_{ j - 1 }^{ - 1 } \| \int_{ \bbr^3 } f_j \frac{ \langle p \rangle^3 }{ { p_{ j - 1 } }^0 } w_{ j - 1 }^{ - \frac{ 1 }{ C } } ( p ) \, d p \\ 
& \leq C \| g_j^{ - 1 } - g_{ j - 1 }^{ - 1 } \| \int_{ \bbr^3 } f_j \langle p \rangle^{ 2 } \, d p \\ 
& \leq C \| g_j^{ - 1 } - g_{ j - 1 }^{ - 1 } \| , 
\end{align*}
where we used \eqref{est pp bound}. The other terms are easily estimated by using \eqref{est F2} and \eqref{est L1}, and $ K_2 $ is then estimated as follows:
\begin{align}
K_2 \leq C ( \| g_j^{ - 1 } - g_{ j - 1 }^{ - 1 } \| + \| f_{ j + 1 } - f_j \|_{ L^\infty_r } ) . \label{K2jj} 
\end{align}
Finally, $ K_3 $ is estimated as follows: 
\begin{align}
K_3 \leq C ( \| g_j^{ - 1 } - g_{ j - 1 }^{ - 1 } \| + \| f_{ j + 1 } - f_j \|_{ L^\infty_r } ) , \label{K3jj} 
\end{align}
which can be proved by the same arguments as for $ K_2 $, and we skip the details. \medskip

We combine the above estimates to obtain the following result. 

\begin{prop}\label{prop local}
Suppose that $ g_0^{ a b } $ is a symmetric and positive definite matrix, $ k_0^{ a b } $ is a symmetric matrix satisfying $ H_0 > \gamma $ and $ F_0 < 1 / 4 $, and $ 0 \leq f_0 \in W^{ 1 , \infty }_{ r + 2 } $ with $ r > 5 $. Let $ g_0^{ a b } $, $ k_0^{ a b } $ and $ f_0 $ be initial data of the equations \eqref{E1}--\eqref{B} with $ \eta $ chosen by \eqref{eta}, satisfying the constraint equation \eqref{constraint}. Then, there exist an interval $ [ t_0 , T ] $ and a unique solution $ g^{ a b } $, $ k^{ a b } $ and $ f \geq 0 $ such that $ g^{ a b } $ and $ k^{ a b } $ are differentiable on $ [ t_0 , T ] $, $ f $ is continuous on $ [ t_0 , T ] \times \bbr^3 $, $ S^{ a b } $, $ \rho $ and $ S $ are differentiable on $ [ t_0 , T ] $. Moreover, the solution satisfies 
\begin{align}
\frac{ 1 }{ C } | p |^2 \leq Z^2 g^{ a b } p_a p_b \leq C | p |^2 , \qquad \| k^{ * * } \| \leq C Z^{ - 2 } , \qquad H \geq \gamma , \qquad F \leq \frac{ \eta^2 }{ 4 } , \label{prop est1}
\end{align}
and
\begin{align}
\sup_{ t_0 \leq s \leq T } \| f ( s ) \|_{ L^\infty_r } \leq C , \label{prop est2}
\end{align}
for all $ t_0 \leq t \leq T $.
\end{prop}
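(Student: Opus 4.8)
The plan is to run the iteration set up in Section~\ref{sec li} and prove that it converges.

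\textbf{Step 1 (uniform bounds).} Choose $A>0$ and $B>0$ so that $g_0^{ab},k_0^{ab}$ satisfy \eqref{lem lsi1} and $f_0$ satisfies \eqref{lem lsi2} with $m=r+2>5$; since $F_0<1/4$ one may fix $0\le\eta<1$ by \eqref{eta}. For suitable $C_1,C_2>1$, Lemma~\ref{lem ls} then provides an interval $[t_0,T]$ on which the iterates $g_j^{ab}$, $k_j^{ab}$ and $f_j$ are defined and satisfy the uniform bounds \eqref{lem ls1}, \eqref{lem ls2}, together with $\sup_{t_0\le s\le T}\|f_j(s)\|_{W^{1,\infty}_{r+2}}\le C$ uniformly in $j$ (here the $W^{1,\infty}_{r+2}$ control of $f_0$ is used). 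Non-negativity $f_j\ge 0$ is preserved along \eqref{EB iter0} by the standard argument: the gain term is non-negative and the loss term acts as a pointwise linear damping, so the Duhamel representation of $f_{j+1}$ keeps it non-negative whenever $f_j\ge 0$.

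\textbf{Step 2 (Cauchy property).} Set
\[
d_j(t):=\sup_{t_0\le s\le t}\Big(\max_{a,b}|g_j^{ab}(s)-g_{j-1}^{ab}(s)|+\max_{a,b}|k_j^{ab}(s)-k_{j-1}^{ab}(s)|+\|f_j(s)-f_{j-1}(s)\|_{L^\infty_r}\Big).
\]
All the prefactors $Z$, $Z_\eta$, $Z_\eta^{-3/2+b/2}$ appearing in \eqref{Fjj}--\eqref{K3jj} are bounded on the finite interval $[t_0,T]$, so they only contribute a constant $C=C(T)$. Integrating the differential inequality \eqref{est fj} in time and inserting \eqref{Fjj}--\eqref{L4jj}, the terms $L_1,\dots,L_4$ — the only ones carrying $\|f_{j+1}-f_j\|_{L^\infty_r}$ on the right-hand side — come multiplied by $\int_{t_0}^t ds=t-t_0$, hence can be absorbed into the left-hand side after shrinking $T$, giving
\[
\sup_{t_0\le s\le t}\|f_{j+1}(s)-f_j(s)\|_{L^\infty_r}\le C(T)\int_{t_0}^t d_j(s)\,ds.
\]
Substituting this into the integrated forms of \eqref{est kj} (with \eqref{K1jj}--\eqref{K3jj}) and \eqref{est gj}, and using that $d_j$ is non-decreasing, one obtains $d_{j+1}(T)\le C(T)(T-t_0)\,d_j(T)$. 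Shrinking $T$ once more so that $C(T)(T-t_0)\le 1/2$, the sequence $(g_j^{ab},k_j^{ab},f_j)$ is Cauchy and therefore converges uniformly on $[t_0,T]$, and $f_j$ converges uniformly on $[t_0,T]\times\bbr^3$.

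\textbf{Step 3 (identification of the limit, regularity, uniqueness).} Passing to the limit in the Duhamel form of \eqref{EB iter0} and in the integrated forms of \eqref{EB iter1}--\eqref{EB iter2} — the matter terms \eqref{EB iterm1}--\eqref{EB iterm3} converge by dominated convergence thanks to the uniform $L^\infty_r$ bound with $r>5$ — the limit $(g^{ab},k^{ab},f)$ solves \eqref{E1}--\eqref{B} on $[t_0,T]$. Then $g^{ab},k^{ab}$ are differentiable, $f$ is continuous as a uniform limit of continuous functions, and $S^{ab},\rho,S$ are differentiable because the uniform $W^{1,\infty}_{r+2}$ bound (which passes to the limit) allows differentiation under the integral sign in \eqref{Sab}--\eqref{S}. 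The estimates \eqref{prop est1}--\eqref{prop est2} pass to the limit from \eqref{lem ls1}--\eqref{lem ls2}. Uniqueness on $[t_0,T]$ follows by applying the same contraction estimate to the difference of two solutions with the same initial data. Finally, the constraint \eqref{constraint} is propagated: differentiating it along \eqref{E1}--\eqref{E2} and using the conservation \eqref{Qconserved}, the constraint violation satisfies a homogeneous linear ODE, so it remains zero.

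\textbf{Main obstacle.} The delicate point is the ordering in Step~2: since $f_{j+1}$ depends on $(g_j,k_j,f_j)$ while $k_{j+1}$ in turn depends on $f_{j+1}$ through the matter terms, one must first close the estimate for $\|f_{j+1}-f_j\|$ by absorbing the $L_i$ contributions into the left-hand side, and only then propagate the bound through $k$ and then $g$; this is what forces the two successive shrinkings of $T$. All the genuinely analytic inputs — the $\varrho$-averaging of $h_\varrho$ via \eqref{wjws} and Lemma~\ref{lem west}, and the mean-value bounds of Lemma~\ref{lem Eg} combined with Lemmas~\ref{lem gs} and \ref{lem dp'dg} — are already encoded in \eqref{Fjj}--\eqref{K3jj}, so what remains is careful bookkeeping rather than new estimation.
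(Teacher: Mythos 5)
Your proposal follows essentially the same route as the paper: uniform bounds from Lemma~\ref{lem ls}, the difference estimates \eqref{Fjj}--\eqref{K3jj} to show the iterates are Cauchy, and passage to the limit. The only mechanical difference is in closing the Cauchy estimate: the paper lumps all three differences into a single quantity $X_{j+1}$, uses Gr\"onwall to absorb the $\|f_{j+1}-f_j\|_{L^\infty_r}$ self-term, and obtains $X_{j+1}(t)\le C\int_{t_0}^t X_j(s)\,ds$, hence $X_{n+1}\le C^n/n!$ and convergence on all of $[t_0,T]$ without further shrinking; your double shrinking of $T$ to force a contraction factor $1/2$ achieves the same end. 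One small slip to correct: the differentiability in $t$ of $S^{ab}$, $\rho$ and $S$ does not follow from the $W^{1,\infty}_{r+2}$ bound, which controls $p$-derivatives only. What is needed, and what the paper supplies, is the pointwise bound $|\partial f/\partial t|\le C\langle p\rangle^{-r+1}$ obtained directly from the equation \eqref{B} together with Lemmas~\ref{lem west} and \ref{lem bracket}, plus the boundedness of $\partial_t\bigl(p_a p_b\, w^{-1}/p^0\bigr)$; with these one may differentiate under the integral sign.
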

\begin{proof}
Since $ g_0^{ a b } $ is positive definite and $ k_0^{ a b } $ is a constant matrix, we can choose $ A > 0 $ satisfying the first two conditions of \eqref{lem lsi1} of Lemma \ref{lem ls}. The third and fourth conditions of \eqref{lem lsi1} are already assumed, so that we can use Lemma \ref{lem ls}. We combine all the estimates \eqref{Fjj}--\eqref{L1jj}, \eqref{G2jj}--\eqref{G4jj} and \eqref{G5jj}--\eqref{K3jj}, and apply them to \eqref{est fj}--\eqref{est kj} to obtain 
\begin{align*}
& \frac{ d }{ d t } \| f_{ j + 1 } - f_j \|_{ L^\infty_r } \\ 
& \leq C ( 1 + \sup_i \| f_i \|_{ W^{ 1 , \infty }_{ r + 2 } } )^2 ( \| g_j^{ - 1 } - g_{ j - 1 }^{ - 1 } \| + \| k_j^{ ** } - k_{ j - 1 }^{ ** } \| + \| f_j - f_{ j - 1 } \|_{ L^\infty_r } + \| f_{ j + 1 } - f_j \|_{ L^\infty_r } ) , 
\end{align*}
and
\begin{align*} 
& \frac{ d }{ d t } \| g_{ j + 1 }^{ - 1 } - g_j^{ - 1 } \| \leq C \| k_j^{ ** } - k_{ j - 1 }^{ ** } \| , \\ 
& \frac{ d }{ d t } \| k_{ j + 1 }^{ ** } - k_j^{ ** } \| \leq C ( \| g_j^{ - 1 } - g_{ j - 1 }^{ - 1 } \| + \| k_j^{ ** } - k_{ j - 1 }^{ ** } \| + \| f_{ j + 1 } - f_j \|_{ L^\infty_r } ) . 
\end{align*}
By Lemma \ref{lem ls} with $ m = r + 2 $, we have $ \sup_i \| f_i \|_{ W^{ 1 , \infty }_{ r + 2 } } \leq C $ on $ [ t_0 , T ] $. Let us write 
\begin{align*}
X_{ j + 1 } ( t ) = \sup_{ t_0 \leq s \leq t } \| g_{ j + 1 }^{ - 1 } ( s ) - g_j^{ - 1 } ( s ) \| + \sup_{ t_0 \leq s \leq t } \| k_{ j + 1 }^{ ** } ( s ) - k_j^{ ** } ( s ) \| + \sup_{ t_0 \leq s \leq t } \| f_{ j + 1 } ( s ) - f_j ( s ) \|_{ L^\infty_r } .
\end{align*}
Then, we obtain by the Gr{\"o}nwall lemma, 
\begin{align*}
X_{ j + 1 } ( t ) \leq C \int_{ t_0 }^t X_j ( s ) \, d s , 
\end{align*}
on $ [ t_0 , T ] $, where $ C $ depends on $ T $. Iterating the above we obtain
\begin{align*}
X_{ n + 1 } ( t ) \leq C \frac{ C^n ( t - t_0 )^n }{ n ! } \leq \frac{ C^n }{ n ! } , 
\end{align*}
which implies that $ { g_j }^{ a b } $, $ { k_j }^{ a b } $ and $ f_j $ converge to some $ g^{ a b } $, $ k^{ a b } $ and $ f $, which are continuous. Now, $ g^{ a b } $ and $ k^{ a b } $ are differentiable on $ [ t_0 , T ] $ by the equations \eqref{E1}--\eqref{E2}, and the estimates \eqref{prop est1}--\eqref{prop est2} are obtained by the same arguments as in Lemma \ref{lem ls}.

Differentiability of $ S^{ a b } $, $ \rho $ and $ S $ can be verified as follows. Recall that $ S_{ a b } $ is given by
\begin{align*}
( \det g^{ - 1 } )^{ \frac12 } \int_{ \bbr^3 } f ( t , p ) \frac{ p_a p_b }{  p^0 } w^{ - 1 } ( p ) \, d p . 
\end{align*}
Note that 
\begin{align*}
\left| \frac{ p_a p_b }{  p^0 } w^{ - 1 } ( p ) \right| \leq C , 
\end{align*}
by \eqref{dpdp est} with $ C $ depending on $ T $, and 
\begin{align*}
\frac{ \partial }{ \partial t } \left( \frac{ p_a p_b }{  p^0 } w^{ - 1 } \right) & = - \frac{ p_a p_b }{ ( p^0 )^2 } \frac{ \partial p^0 }{ \partial t } w^{ - 1 } + \frac{ p_a p_b }{ p^0 } \frac{ \partial w^{ - 1 } }{ \partial t } \\ 
& = \frac{ p_a p_b k^{ c d } p_c p_d }{ ( p^0 )^3 } w^{ - 1 } - \frac{ p_a p_b }{ p^0 } \left( ( 1 - \eta ) \gamma ( p^0 - 1 ) - \frac{ k^{ a b } p_a p_b }{ p^0 } \right) Z_\eta w^{ - 1 } , 
\end{align*}
so that we have
\begin{align*}
\left| \frac{ \partial }{ \partial t } \left( \frac{ p_a p_b }{  p^0 } w^{ - 1 } \right) \right| \leq C . 
\end{align*}
Moreover, we have $ f \leq C \langle p \rangle^{ - r } $, and
\begin{align*}
\left| \frac{ \partial f }{ \partial t } \right| & \leq | \lambda_\eta | f + ( \det g )^{ - \frac12 } \iint \frac{ h^{ 1 - b } \sqrt{ s } }{ p^0 q^0 } ( f ( p' ) f ( q' ) + f ( p ) f ( q ) ) w^{ - 1 } ( q ) \, d \omega \, d q \\ 
& \leq C \langle p \rangle f + C \iint \frac{ 1 }{ h^b } ( \langle p' \rangle^{ - r } \langle q' \rangle^{ - r } + \langle p \rangle^{ - r } \langle q \rangle^{ - r } ) w^{ - 1 } ( q ) \, d \omega \, d q \\
& \leq C \langle p \rangle^{ - r + 1 } + C \langle p \rangle^{ - r } \int \frac{ 1 }{ h^b } w^{ - 1 } ( q ) \, d q \\ 
& \leq C \langle p \rangle^{ - r + 1 } , 
\end{align*}
by Lemmas \ref{lem west} and \ref{lem bracket}, so that $ f $ and $ \partial f / \partial t $ are integrable. The above computation shows that $ S_{ a b } $ is differentiable with respect to $ t $. Similarly, $ S^{ a b } $, $ \rho $ and $ S $ are differentiable, and this completes the proof of the proposition. 
\end{proof}

\section{Global existence and asymptotic behavior}\label{sec global}
In this section we prove Theorem \ref{main}. We will obtain the global existence and asymptotic behavior for the Einstein-Boltzmann system \eqref{E1}--\eqref{B}. In order to obtain the global existence we will first show that the solutions satisfy the constraint equation \eqref{constraint} on the whole interval of existence. Then, we will derive several basic estimates of the Hubble variable, and we will use them to obtain the global existence.

Let us first consider the differential equations of $ H $ and $ \rho $. The Hubble variable $ H $ is defined by \eqref{H}, and we can use \eqref{E1} and \eqref{E2} to obtain
\begin{align}
\frac{ d H }{ d t } & = \frac13 \dot{ g }_{ a b } k^{ a b } + \frac13 g_{ a b } \dot{ k }^{ a b } \nonumber \\
& = \frac23 k_{ a b } k^{ a b } + \frac13 g_{ a b } ( - 2 k^a_c k^{ b c } - k k^{ a b } + S^{ a b } + \frac12 ( \rho - S ) g^{ a b } + \Lambda g^{ a b } ) \nonumber \\
& = \frac23 k_{ a b } k^{ a b } - \frac23 k_{ b c } k^{ b c } - \frac13 k^2 + \frac13 S + \frac12 ( \rho - S ) + \Lambda \nonumber \\
& = - 3 H^2 + \frac12 \rho - \frac16 S + \Lambda . \label{dHdt}
\end{align}
We also need the differential equation of $ \rho $, defined by \eqref{rho}. We can use 
\begin{align*}
\frac{ d \det g }{ d t } = 2 k \det g ,
\end{align*}
to obtain
\begin{align}
\frac{ d \rho }{ d t } & = - \frac12 ( \det g )^{ - \frac32 } \frac{ d \det g }{ d t } \int f ( p ) p^0 w^{ - 1 } ( p ) \, d p \nonumber \\
& \quad + ( \det g )^{ - \frac12 } \int \frac{ \partial ( f ( p ) w^{ - 1 } ( p ) ) }{ \partial t } p^0 + f ( p ) w^{ - 1 } ( p ) \frac{ \partial p^0 }{ \partial t } \, d p \nonumber \\
& = - k ( \det g )^{ - \frac12 } \int f ( p ) p^0 w^{ - 1 } ( p ) \, d p - ( \det g )^{ - \frac12 } \int f ( p ) w^{ - 1 } ( p ) \frac{ k^{ a b } p_a p_b }{ p^0 } \, d p \nonumber \\
& = - k \rho - k^{ a b } S_{ a b } . \label{drhodt}
\end{align}
Here, we used the fact that $ f w^{ - 1 } $ is the solution of the original (unweighted) Boltzmann equation \eqref{Boltzmann}, so that it satisfies \eqref{Qconserved}. Now, we use \eqref{dHdt} and \eqref{drhodt} to show that the initial constraint \eqref{constraint} propagates as time evolves. We write the constraint equation as
\begin{align*}
- k_{ a b } k^{ a b } + k^2 - 2 \rho - 2 \Lambda = 0 ,
\end{align*}
and let $ \xi $ denote the quantity on the left hand side. By using \eqref{Einstein2}, \eqref{E2}, \eqref{dHdt} and \eqref{drhodt}, we obtain
\begin{align*}
\frac{ d \xi }{ d t } 
& = - \left( 2 k_a^c k_{ b c } - k k_{ a b } + S_{ a b } + \frac12 ( \rho - S ) g_{ a b } + \Lambda g_{ a b } \right) k^{ a b } \\
& \quad - k_{ a b } \left( - 2 k^a_c k^{ b c } - k k^{ a b } + S^{ a b } + \frac12 ( \rho - S ) g^{ a b } + \Lambda g^{ a b } \right) \\
& \quad + 18 H \left( - 3 H^2 + \frac12 \rho - \frac16 S + \Lambda \right) + 2 k \rho + 2 k^{ a b } S_{ a b } \\
& = - 2 k_a^c k_{ b c } k^{ a b } + k k_{ a b } k^{ a b } - S_{ a b } k^{ a b } - \frac12 ( \rho - S ) k - \Lambda k \\
& \quad + 2 k_{ a b } k^a_c k^{ b c } + k k_{ a b } k^{ a b } - S_{ a b } k^{ a b } - \frac12 ( \rho - S ) k - \Lambda k \\
& \quad - 2 k^3 + 3 k \rho - k S + 6 k \Lambda + 2 k \rho + 2 k^{ a b } S_{ a b } \\
& = 2 k k_{ a b } k^{ a b } + 4 k \rho + 4 k \Lambda - 2 k^3 \\
& = - 2 k \xi .
\end{align*}
Therefore, if $ \xi ( t_0 ) = 0 $, then $ \xi ( t ) = 0 $ for any $ t \geq t_0 $, which implies that the constraint equation holds on the whole interval of existence. We obtain the following lemma.

\begin{lemma}\label{lem constraint}
Let $ g^{ a b } $, $ k^{ a b } $ and $ f $ be a solution of Proposition \ref{prop local}. Then, the solution satisfies
\begin{align}
6 H^2 = \sigma_{ a b } \sigma^{ a b } + 2 \rho + 2 \Lambda , \label{lem constraint1}
\end{align}
on the whole interval of existence.
\end{lemma}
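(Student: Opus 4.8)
The plan is to show that the constraint quantity $\xi := - k_{ab} k^{ab} + k^2 - 2\rho - 2\Lambda$ propagates: if it vanishes at $t_0$, it vanishes for all $t$ in the interval of existence, from which \eqref{lem constraint1} follows immediately via the decomposition \eqref{k decom}. Since Proposition \ref{prop local} guarantees that $g^{ab}$, $k^{ab}$ are differentiable and that $S^{ab}$, $\rho$, $S$ are differentiable on the interval of existence, all the time derivatives appearing below are justified.

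First I would record the two auxiliary ODEs that were derived in the text just before the lemma, namely the evolution equation \eqref{dHdt} for the Hubble variable,
\[
\frac{dH}{dt} = -3H^2 + \tfrac12 \rho - \tfrac16 S + \Lambda,
\]
and the evolution equation \eqref{drhodt} for the energy density,
\[
\frac{d\rho}{dt} = - k\rho - k^{ab} S_{ab},
\]
where the latter crucially uses that $fw^{-1}$ solves the unweighted Boltzmann equation \eqref{Boltzmann} and hence obeys the conservation identity \eqref{Qconserved}. I would also recall $k = 3H$, so $k^2 = 9H^2$ and $\frac{d}{dt}(k^2) = 18 H \frac{dH}{dt}$.

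Next I would differentiate $\xi$ in time. The term $\frac{d}{dt}(k_{ab}k^{ab})$ is handled by writing $k_{ab}k^{ab} = k^a_c k^{bc} g_{ab}$ appropriately, or more directly by using $\dot k_{ab}$ from \eqref{Einstein2} and $\dot k^{ab}$ from \eqref{E2} together with $\dot g_{ab} = 2k_{ab}$ and $\dot g^{ab} = -2k^{ab}$; the various raising/lowering contractions combine so that $\frac{d}{dt}(k_{ab}k^{ab})$ produces cubic-in-$k$ terms plus $\pm k^{ab}S_{ab}$ and $\pm \tfrac12(\rho - S)k \pm \Lambda k$ terms. Adding $18H\frac{dH}{dt} = -54H^3 + 9H\rho - 3HS + 18H\Lambda$ (i.e.\ $-2k^3 + 3k\rho - kS + 6k\Lambda$ after substituting $k=3H$) and $-2\dot\rho = 2k\rho + 2k^{ab}S_{ab}$, all the $S_{ab}k^{ab}$ terms, all the $S$-terms and the stray $\Lambda k$ terms cancel in pairs, and the cubic terms collapse. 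As computed in the display preceding the lemma statement, the net result is
\[
\frac{d\xi}{dt} = -2k\,\xi .
\]
Since $k$ is continuous (indeed differentiable) on the compact interval of existence, Grönwall's inequality (or uniqueness for the linear scalar ODE $\dot\xi = -2k\xi$ with $\xi(t_0)=0$) forces $\xi \equiv 0$. Finally, substituting $k^{ab} = Hg^{ab} + \sigma^{ab}$ into $-k_{ab}k^{ab}+k^2$ gives $-(\sigma_{ab}\sigma^{ab} + 3H^2) + 9H^2 = 6H^2 - \sigma_{ab}\sigma^{ab}$ (using $g_{ab}\sigma^{ab}=0$), so $\xi=0$ rearranges to exactly \eqref{lem constraint1}.

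The only mildly delicate point — the one I would be most careful about — is the bookkeeping in $\frac{d}{dt}(k_{ab}k^{ab})$: one must consistently track whether indices are raised with $g^{ab}$ or lowered with $g_{ab}$ and include the contributions from $\dot g_{ab}$ and $\dot g^{ab}$, since it is precisely the interplay of these metric derivatives with the $2k^a_c k^{bc}$ and $-2k_a^c k^{bc}$ terms that makes the cubic terms cancel down to $-2k^3 + 2k\,k_{ab}k^{ab}$. Everything else is a routine substitution and cancellation, already carried out in the paragraph above the lemma; in the written proof I would simply reference that computation and then invoke the Grönwall/uniqueness argument and the shear decomposition.
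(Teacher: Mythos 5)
Your proposal is correct and follows essentially the same route as the paper: the text immediately preceding the lemma derives \eqref{dHdt} and \eqref{drhodt}, computes $d\xi/dt = -2k\xi$ for $\xi = -k_{ab}k^{ab}+k^2-2\rho-2\Lambda$, and the lemma's proof then just invokes that computation together with the equivalence of \eqref{constraint} and \eqref{constraint1} via the shear decomposition. Your Grönwall/uniqueness step and the algebra $-k_{ab}k^{ab}+k^2 = 6H^2-\sigma_{ab}\sigma^{ab}$ match the paper's argument exactly.
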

\begin{proof}
Recall that the constraint \eqref{constraint} is equivalent to \eqref{constraint1}. This proves the lemma by the preceding arguments.
\end{proof}

\begin{lemma}\label{lem H}
Let $ g^{ a b } $, $ k^{ a b } $ and $ f $ be a solution of Proposition \ref{prop local}. Then, the solution satisfies
\begin{align}
\frac{ d H }{ d t } & \leq 0 , \label{lem H1} \\
H & \geq \gamma , \label{lem H2} \\
\frac{ d H }{ d t } & \leq - ( H + \gamma ) ( H - \gamma ) , \label{lem H3}
\end{align}
on the whole interval of existence.
\end{lemma}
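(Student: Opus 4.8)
The plan is to combine the evolution equation \eqref{dHdt} for $H$ with the constraint \eqref{lem constraint1} from Lemma \ref{lem constraint}, and then read off the three inequalities purely by sign considerations, using the non-negativity of $S$ and of $\sigma_{ab}\sigma^{ab}$ together with the identity $\Lambda = 3\gamma^2$ (which is just \eqref{Z}).

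First I would use Lemma \ref{lem constraint} in the form $2\rho = 6H^2 - \sigma_{ab}\sigma^{ab} - 2\Lambda$ to eliminate $\rho$ from \eqref{dHdt}. A short computation then gives the identity
\begin{align*}
\frac{dH}{dt} = -\frac32 (H-\gamma)(H+\gamma) - \frac14 \sigma_{ab}\sigma^{ab} - \frac16 S ,
\end{align*}
which is the heart of the argument. Here $\sigma_{ab}\sigma^{ab} = g^{ac}g^{bd}\sigma_{ab}\sigma_{cd} \geq 0$ since $g^{ab}$ is positive definite, and $S \geq 0$ by \eqref{S} since $f \geq 0$ and $g^{ab}$ is positive definite; these are the only structural facts needed.

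Next I would record \eqref{lem H2}. It is already part of Proposition \ref{prop local}, but it also follows directly from \eqref{lem constraint1}: since $\rho$ and $\sigma_{ab}\sigma^{ab}$ are non-negative, $6H^2 \geq 2\Lambda = 6\gamma^2$, hence $H^2 \geq \gamma^2$, and as $H$ is continuous with $H_0 > \gamma > 0$ it cannot enter $(-\gamma,\gamma)$, so $H \geq \gamma$ on the whole interval of existence. Given $H \geq \gamma$, inequality \eqref{lem H1} is immediate from the displayed identity, because each of its three terms on the right is then $\leq 0$. For \eqref{lem H3} I would discard the non-positive terms $-\tfrac14\sigma_{ab}\sigma^{ab}$ and $-\tfrac16 S$ and use $\tfrac32(H-\gamma)(H+\gamma) \geq (H-\gamma)(H+\gamma)$, which is valid because $(H-\gamma)(H+\gamma) = H^2 - \gamma^2 \geq 0$.

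There is no serious obstacle: the only care needed is the sign bookkeeping in deriving the identity for $dH/dt$ and the elementary continuity argument upgrading $H^2 \geq \gamma^2$ to $H \geq \gamma$. Everything rests on Lemma \ref{lem constraint}, whose proof (propagation of the constraint via $d\xi/dt = -2k\xi$) was already carried out in the preceding discussion.
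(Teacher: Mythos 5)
Your proof is correct and follows essentially the same route as the paper: substitute the propagated constraint of Lemma \ref{lem constraint} into \eqref{dHdt} and conclude by sign considerations, with $H\geq\gamma$ coming from $6H^2\geq 2\Lambda$ plus continuity. The only (harmless) difference is that you eliminate $\rho$ once to get the single identity $dH/dt=-\tfrac32(H-\gamma)(H+\gamma)-\tfrac14\sigma_{ab}\sigma^{ab}-\tfrac16 S$ (which checks out), whereas the paper performs two separate substitutions of the constraint, obtaining $dH/dt=-\tfrac12\sigma_{ab}\sigma^{ab}-\tfrac12\rho-\tfrac16 S$ for \eqref{lem H1} and $dH/dt\leq -H^2+\tfrac13\Lambda$ for \eqref{lem H3}; both yield the same conclusions.
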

\begin{proof}
By Lemma \ref{lem constraint}, the Hubble variable $ H $ satisfies \eqref{lem constraint1}. We write it as
\begin{align*}
3 H^2 = \frac{ \sigma_{ a b } \sigma^{ a b } }{ 2 } + \rho + \Lambda ,
\end{align*}
and apply this to \eqref{dHdt} to obtain
\begin{align}
\frac{ d H }{ d t } & = - \left( \frac{ \sigma_{ a b } \sigma^{ a b } }{ 2 } + \rho + \Lambda \right) + \frac12 \rho - \frac16 S + \Lambda \nonumber \\
& = - \frac{ \sigma_{ a b } \sigma^{ a b } }{ 2 } - \frac12 \rho - \frac16 S . \label{lem H4}
\end{align}
Here, the right hand side is non-positive, so that we obtain \eqref{lem H1}. Next, we note that \eqref{lem constraint1} implies
\begin{align*}
H^2 \geq \gamma^2 .
\end{align*}
Since $ H_0 > \gamma $ and $ H $ is continuous, we obtain \eqref{lem H2}. Finally, we write \eqref{lem constraint1} as
\begin{align*}
2 H^2 = \frac{ \sigma_{ a b } \sigma^{ a b } }{ 3 } + \frac23 \rho + \frac23 \Lambda ,
\end{align*}
and apply this to \eqref{dHdt} to obtain
\begin{align*}
\frac{ d H }{ d t } & = - H^2 - \left( \frac{ \sigma_{ a b } \sigma^{ a b } }{ 3 } + \frac23 \rho + \frac23 \Lambda \right) + \frac12 \rho - \frac16 S + \Lambda \nonumber \\
& = - H^2 - \frac{ \sigma_{ a b } \sigma^{ a b } }{ 3 } - \frac16 \rho - \frac16 S + \frac13 \Lambda \\
& \leq - H^2 + \frac13 \Lambda .
\end{align*}
Here, $ \Lambda / 3 = \gamma^2 $, so that we obtain \eqref{lem H3}.
\end{proof}

\begin{lemma}\label{lem Ha}
Let $ g^{ a b } $, $ k^{ a b } $ and $ f $ be a solution of Proposition \ref{prop local}. Then, the solution satisfies
\begin{align}
H - \gamma & \leq ( H_0 - \gamma ) e^{ - 2 \gamma ( t - t_0 ) } , \label{lem Ha1} \\
\frac{ 1 }{ H - \gamma } & \leq \frac{ 1 }{ H_0 - \gamma } e^{ 3 ( H_0 + \gamma ) ( t - t_0 ) } , \label{lem Ha2}
\end{align}
on the whole interval of existence.
\end{lemma}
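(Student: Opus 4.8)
The plan is to read off both inequalities from the differential (in)equalities for $H$ already collected in Lemma \ref{lem H}, together with the formula \eqref{dHdt}, by elementary comparison (Gr\"onwall) arguments; no new analytic machinery is needed.

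For \eqref{lem Ha1} I would argue as follows. By \eqref{lem H2} we have $H\ge\gamma$, hence $H+\gamma\ge 2\gamma$, and since $H-\gamma\ge 0$ the estimate \eqref{lem H3} gives
\[
\frac{d(H-\gamma)}{dt}=\frac{dH}{dt}\le -(H+\gamma)(H-\gamma)\le -2\gamma\,(H-\gamma).
\]
As $H_0-\gamma>0$ by hypothesis, integrating this linear differential inequality yields $H(t)-\gamma\le (H_0-\gamma)e^{-2\gamma(t-t_0)}$, which is \eqref{lem Ha1}.

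For \eqref{lem Ha2} I need a \emph{lower} bound on $dH/dt$, which is not among the three items of Lemma \ref{lem H}, so I go back to \eqref{dHdt},
\[
\frac{dH}{dt}=-3H^2+\tfrac12\rho-\tfrac16 S+\Lambda,
\]
and use $\Lambda=3\gamma^2$ to rewrite $-3H^2+\Lambda=-3(H-\gamma)(H+\gamma)$. The one auxiliary fact required is the pointwise bound $0\le S\le\rho$: nonnegativity of $S$ is immediate from \eqref{S} since $f\ge 0$ and $g^{ab}$ is positive definite, while $S\le\rho$ follows by inserting the unit mass shell identity $g^{ab}p_ap_b=(p^0)^2-1\le (p^0)^2$ into \eqref{S} and comparing with \eqref{rho}. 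Then $\tfrac12\rho-\tfrac16 S\ge\tfrac13\rho\ge 0$, so $\frac{dH}{dt}\ge -3(H-\gamma)(H+\gamma)$; by \eqref{lem H1} the function $H$ is non-increasing, hence $H+\gamma\le H_0+\gamma$, giving $\frac{dH}{dt}\ge -3(H_0+\gamma)(H-\gamma)$. This differential inequality together with $H_0-\gamma>0$ forces $H-\gamma$ to stay strictly positive for all $t\ge t_0$, so we may divide by $H-\gamma$ and integrate to obtain $H(t)-\gamma\ge (H_0-\gamma)e^{-3(H_0+\gamma)(t-t_0)}$, and taking reciprocals yields \eqref{lem Ha2}.

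The only step needing any care is the bound $S\le\rho$ (and, relatedly, checking that $\tfrac12\rho-\tfrac16 S$ has the correct sign so that $dH/dt$ can indeed be bounded below by $-3(H-\gamma)(H+\gamma)$); once this is in place everything reduces to one-variable Gr\"onwall estimates. I would also make explicit that $H-\gamma$ never vanishes, since this is what legitimizes the reciprocal in \eqref{lem Ha2}, and it is guaranteed by the lower differential inequality above combined with $H_0>\gamma$.
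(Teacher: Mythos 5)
Your proof is correct and follows essentially the same route as the paper: both estimates come from the upper and lower Riccati-type bounds $-3(H-\gamma)(H+\gamma)\le dH/dt\le -(H-\gamma)(H+\gamma)$ combined with $H\ge\gamma$, the monotonicity of $H$, and $S\le\rho$. The paper integrates \eqref{lem H3} exactly via $\ln\bigl((H-\gamma)/(H+\gamma)\bigr)$ and obtains the lower bound as a differential inequality for $1/(H-\gamma)$ using \eqref{lem H4} and the constraint, but these manipulations are equivalent to your linearized Gr\"onwall arguments.
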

\begin{proof}
We first observe from \eqref{lem H3} that
\begin{align*}
\frac{ d }{ d t } \ln \left( \frac{ H - \gamma }{ H + \gamma } \right) \leq - 2 \gamma .
\end{align*}
Hence, we have
\begin{align*}
\ln \left( \frac{ H - \gamma }{ H_0 - \gamma } \right) \leq - 2 \gamma ( t - t_0 ) ,
\end{align*}
where we used the fact that $ H $ is decreasing, and this proves \eqref{lem Ha1}. Next, we use \eqref{lem H4} to obtain
\begin{align*}
\frac{ d }{ d t } \left( \frac{ 1 }{ H - \gamma } \right) & = - \frac{ 1 }{ ( H - \gamma )^2 } \left( - \frac{ \sigma_{ a b } \sigma^{ a b } }{ 2 } - \frac12 \rho - \frac16 S \right) \\
& \leq \frac{ 1 }{ ( H - \gamma )^2 } \left( \frac{ \sigma_{ a b } \sigma^{ a b } }{ 2 } + \frac23 \rho \right) ,
\end{align*}
since $ S \leq \rho $. Then, by using the constraint \eqref{lem constraint1}, we obtain
\begin{align*}
\frac{ d }{ d t } \left( \frac{ 1 }{ H - \gamma } \right) & \leq \frac{ 3 ( H^2 - \gamma^2 ) }{ ( H - \gamma )^2 } \leq \frac{ 3 ( H_0 + \gamma ) }{ H - \gamma } ,
\end{align*}
where we used again the fact that $ H $ is decreasing, and this proves \eqref{lem Ha2}.
\end{proof}

\subsection{Proof of the main theorem}\label{sec proof}
We are now ready to prove Theorem \ref{main}. We will prove the global existence in Section \ref{sec proof1} and will obtain the asymptotic behavior in Section \ref{sec proof2}.

\subsubsection{Global existence}\label{sec proof1}
Let us first consider the global existence. In Theorem \ref{main}, we assume that the initial data satisfy the constraint equation \eqref{constraint}, or equivalently \eqref{constraint1}, and that $ H_0 $ and $ \eta_0 $ are given by
\begin{align*}
\gamma < H_0 < \sqrt{ \frac65 } \gamma , \qquad \eta_0 = \sqrt{ 6 \left( 1 - \frac{ \gamma^2 }{ H_0^2 } \right) } .
\end{align*}
The initial value $ F_0 $ of the shear variable can be estimated by using \eqref{constraint1} as
\begin{align*}
F_0 = \frac{ 6 H_0^2 - 2 \rho_0 - 2 \Lambda }{ 4 H_0^2 } \leq \frac32 \left( 1 - \frac{ \gamma^2 }{ H_0^2 } \right) < \frac14 ,
\end{align*}
and we notice here that
\begin{align*}
\frac32 \left( 1 - \frac{ \gamma^2 }{ H_0^2 } \right) = \frac{ \eta_0^2 }{ 4 } .
\end{align*}
Hence, the assumptions of Proposition \ref{prop local} are satisfied, so that we obtain the local existence from Proposition \ref{prop local}.

Now, it will be enough to show that the estimates \eqref{prop est1} and \eqref{prop est2} of Proposition \ref{prop local} hold on $ [ t_0 , \infty ) $. For the first estimate of \eqref{prop est1} we consider the differential equation of $ Z^2 g^{ a b } p_a p_b $. By direct computations we obtain
\begin{align*}
\frac{ d }{ d t } Z^2 g^{ a b } p_a p_b & = 2 Z \dot{ Z } g^{ a b } p_a p_b - 2 Z^2 k^{ a b } p_a p_b \\
& = 2 Z^2 ( \gamma g^{ a b } p_a p_b - H g^{ a b } p_a p_b - \sigma^{ a b } p_a p_b ) \\
& \leq 2 \left( H - \gamma + \sqrt{ \sigma^{ a b } \sigma_{ a b } } \right) Z^2 g^{ a b } p_a p_b ,
\end{align*}
where we used \eqref{lem H2} in the last inequality. Then, we have
\begin{align}
\sigma^{ a b } \sigma_{ a b } \leq 6 H^2 - 2 \Lambda = 6 ( H + \gamma ) ( H - \gamma ) \leq 6 ( H_0 + \gamma ) ( H_0 - \gamma ) e^{ - 2 \gamma ( t - t_0 ) } , \label{sigma decay}
\end{align}
where we used Lemma \ref{lem constraint} in the first inequality, and \eqref{lem H1} and \eqref{lem Ha1} in the last inequality. This implies that
\begin{align*}
\int_{ t_0 }^t 2 \left( H - \gamma + \sqrt{ \sigma^{ a b } \sigma_{ a b } } \right) \, d s & \leq 2 \int_{ t_0 }^t ( H_0 - \gamma ) e^{ - 2 \gamma ( s - t_0 ) } + \sqrt{ 6 ( H_0^2 - \gamma^2 ) } e^{ - \gamma ( s - t_0 ) } \, d s \\
& \leq \frac{ 1 }{ \gamma } ( H_0 - \gamma ) + \frac{ 2 }{ \gamma } \sqrt{ 6 ( H_0^2 - \gamma^2 ) } .
\end{align*}
Hence, we obtain
\begin{align}
\frac{ 1 }{ C } e^{ 2 \gamma t_0 } g_0^{ a b } p_a p_b \leq Z^2 g^{ a b } p_a p_b \leq C e^{ 2 \gamma t_0 } g_0^{ a b } p_a p_b , \label{boots1}
\end{align}
where the constant $ C $ can be given explicitly by
\begin{align*}
C = \exp \left( \frac{ 1 }{ \gamma } ( H_0 - \gamma ) + \frac{ 2 }{ \gamma } \sqrt{ 6 ( H_0^2 - \gamma^2 ) } \right) .
\end{align*}
Next, for the second estimate of \eqref{prop est1}, we need Lemma 26.1 of \cite{Ringstrom}, which can be written in our notation as
\begin{align*}
| \sigma^{ a b } | \leq C \sqrt{ \sum_{ a , b = 1 }^3 ( g^{ a b } )^2 } \sqrt{ \sigma^{ c d } \sigma_{ c d } } \leq C \sqrt{ \sigma^{ c d } \sigma_{ c d } } \| g^{ - 1 } \| .
\end{align*}
Then, we apply \eqref{sigma decay} and Lemma \ref{lem assum lb} to obtain
\begin{align}
| \sigma^{ a b } | \leq C Z^{ - 3 } , \label{sigma decay1}
\end{align}
and use the decomposition \eqref{k decom} of $ k^{ a b } $ to have
\begin{align}
| k^{ a b } | & \leq | H g^{ a b } | + | \sigma^{ a b } | \leq C Z^{ - 2 } . \label{boots2}
\end{align}
The third estimate of \eqref{prop est1} is obvious by \eqref{lem H2}:
\begin{align}
H \geq \gamma . \label{boots3}
\end{align}
For the fourth estimate of \eqref{prop est1}, we again use Lemma \ref{lem constraint} to obtain
\begin{align*}
F = \frac{ \sigma_{ a b } \sigma^{ a b } }{ 4 H^2 } \leq \frac{ 6 H^2 - 2 \Lambda }{ 4 H^2 } = \frac32 \left( 1 - \frac{ \gamma^2 }{ H^2 } \right) ,
\end{align*}
so that we have
\begin{align}
F \leq \frac{ \eta_0^2 }{ 4 } , \label{boots4}
\end{align}
since $ H $ is decreasing. Finally, for \eqref{prop est2} we obtain from \eqref{est f_n1}
\begin{align*}
\sup_{ t_0 \leq s \leq t } \| f ( s ) \|_{ L^\infty_r } 
& \leq \| f_0 \|_{ L^\infty_r } + C \sup_{ t_0 \leq s \leq t } \| f ( s ) \|_{ L_r^{ \infty } }^2 ,
\end{align*}
since $ Z_\eta^{ - \frac32 + \frac{ b }{ 2 } } $ is integrable. Hence, there exists a small $ \varepsilon > 0 $ such that if $ \| f_0 \|_{ L^\infty_r } < \varepsilon $, then we obtain
\begin{align}
\sup_{ t_0 \leq t < \infty } \| f ( t ) \|_{ L^\infty_r } \leq C \varepsilon . \label{boots5}
\end{align}
Now, we combine \eqref{boots1}, \eqref{boots2}, \eqref{boots3}, \eqref{boots4} and \eqref{boots5} to obtain the global existence.

\subsubsection{Asymptotic behavior}\label{sec proof2}
It remains to show that the solutions satisfy the estimates \eqref{thm H}--\eqref{thm g}. We note that the estimates \eqref{thm H}, \eqref{thm F} and \eqref{thm f} are direct consequences of \eqref{lem H2}, \eqref{lem Ha1}, \eqref{sigma decay} and \eqref{boots5}.

Let us consider the estimates \eqref{thm rho}, \eqref{thm S} and \eqref{thm Srho}. By the definition of $ \rho $, and using \eqref{boots5} with Lemma \ref{lem assum lb}, we have
\begin{align*}
\rho & = ( \det g )^{ - \frac12 } \int f ( p ) p^0 w^{ - 1 } ( p ) \, d p \\
& \leq C \varepsilon Z^{ - 3 } \int \langle p \rangle^{ - r } \, d p \\
& \leq C \varepsilon e^{ - 3 \gamma t } .
\end{align*}
In a similar way, we obtain 
\begin{align*}
S & = ( \det g )^{ - \frac12 } \int f ( p ) \frac{ g^{ a b } p_a p_b }{ p^0 } w^{ - 1 } ( p ) \, d p \\
& \leq C \varepsilon Z^{ - 3 } \int \langle p \rangle^{ - r } g^{ a b } p_a p_b \, d p \\
& \leq C \varepsilon Z^{ - 5 } \int \langle p \rangle^{ - r + 2 } \, d p \\
& \leq C \varepsilon e^{ - 5 \gamma t } .
\end{align*}
In order to obtain \eqref{thm Srho}, we need to introduce
\begin{align*}
N^0 = ( \det g )^{ - \frac12 } \int f ( p ) w^{ - 1 } ( p ) \, d p ,
\end{align*}
which is the zeroth component of the first moment of $ f $. By direct computations we obtain
\begin{align*}
\frac{ d N^0 }{ d t } = - 3 H N^0 ,
\end{align*}
so that we have
\begin{align*}
N^0 = N^0 ( t_0 ) \exp \left( - 3 \int_{ t_0 }^t H ( s ) \, d s \right) ,
\end{align*}
where the integral can be estimated by using \eqref{lem Ha1} as follows:
\begin{align*}
\int_{ t_0 }^t H ( s ) \, d s \leq \gamma ( t - t_0 ) + \int_{ t_0 }^t ( H ( s ) - \gamma ) \, d s \leq \gamma ( t - t_0 ) + C .
\end{align*}
Hence, we obtain
\begin{align*}
N^0 ( t ) \geq C N^0 ( t_0 ) e^{ - 3 \gamma t } .
\end{align*}
Since $ N^0 \leq \rho $, we obtain the estimate \eqref{thm Srho}.

Finally, we consider the asymptotic behavior \eqref{thm g} of $ g^{ a b } $. In Section \ref{sec proof1}, we obtained the differential equation of $ Z^2 g^{ a b } $:
\begin{align}
\frac{ d }{ d t } Z^2 g^{ a b } = - 2 ( H - \gamma ) Z^2 g^{ a b } - 2 Z^2 \sigma^{ a b } . \label{dZgdt}
\end{align}
Then, $ Z^2 g^{ a b } $ is explicitly given by
\begin{align*}
Z^2 g^{ a b } = Z^2 ( t_0 ) g_0^{ a b } e^{ - 2 \int_{ t_0 }^t ( H - \gamma ) \, d s } - 2 \int_{ t_0 }^t Z^2 ( s ) \sigma^{ a b } ( s ) e^{ - 2 \int_s^t ( H - \gamma ) \, d \tau } \, d s ,
\end{align*}
and notice that $ \lim_{ t \to \infty } Z^2 g^{ a b } $ is well defined by applying \eqref{lem Ha1} and \eqref{sigma decay1} to the right hand side. Now, we define a constant matrix $ G^{ a b } $ by
\begin{align*}
G^{ a b } = Z^2 ( t_0 ) g_0^{ a b } e^{ - 2 \int_{ t_0 }^\infty ( H - \gamma ) \, d s } - 2 \int_{ t_0 }^\infty Z^2 ( s ) \sigma^{ a b } ( s ) e^{ - 2 \int_s^\infty ( H - \gamma ) \, d \tau } \, d s ,
\end{align*}
and integrate \eqref{dZgdt} from $ t $ to $ \infty $ to obtain \eqref{thm g}: applying Lemma \ref{lem assum lb} together with \eqref{lem Ha1} and \eqref{sigma decay1}, we finally obtain
\begin{align*}
| G^{ a b } - Z^2 ( t ) g^{ a b } ( t ) | & \leq C \int_t^\infty ( H ( s ) - \gamma + Z^{ - 1 } ( s ) ) \, d s \\
& \leq C e^{ - \gamma t } .
\end{align*}
This completes the proof of Theorem \ref{main}.

\end{document}